\DeclareMathSymbol{\leq}{\mathrel}{symbols}{20}
\DeclareMathSymbol{\geq}{\mathrel}{symbols}{21}
\newtheoremstyle{WreschTheoremstyle} % Name
                        {1.5em}    % Space above
                        {2.5em}    % Space below
                        {}         % Body font
                        {}         % Indent amount
                        {\bfseries}% Theorem head font
                        {}        % Punctuation after theorem head
                        {\newline} % Space after theorem head
                        {\raisebox{0.6em}{\thmname{#1}\thmnumber{#2}\thmnote{ (#3)}}}% Theorem head spec (can be left empty, meaning 'normal')
\newcommand{\R}{\mathbb{R}}
\newcommand{\N}{\mathbb{N}}
\newcommand{\K}{\mathcal{K}}
\newcommand{\E}{\mathbb{E}}
\newcommand{\e}{\varepsilon}
\newcommand{\Lb}{\mathcal{L}}
\renewcommand{\1}{\mathbbm{1}}
\newcommand{\esssup}{\mathrm{ess}\sup}
\newtheorem{Theorem}{Theorem}[section]
\newtheorem{Proposition}[Theorem]{Proposition}
\newtheorem{Lemma}[Theorem]{Lemma}
\newtheorem{Remark}[Theorem]{Remark}
\newtheorem{Definition}[Theorem]{Definition}
\newtheorem{Example}[Theorem]{Example}
\numberwithin{equation}{section}
\newcommand{\customlabel}[1]{%
     \stepcounter{ref}%
   \protected@write
\@auxout{}{\string\newlabel{#1}{{\thesatz.\arabic{ref}}{\thepage}{\thesatz.\arabic{ref}}{#1}{}}}%
   \hypertarget{#1}{\thesatz.\arabic{ref}}%
}
\newenvironment{sciabstract}{\begin{quote}}{\end{quote}}
\newcounter{lastnote}
\title{Stochastic averaging for a spatial population model in random environment}
\newcommand{\pdftitle} {Stochastic averaging for a spatial population model in random environment}
\newcommand{\pdfauthor}{Martin Friesen}
\author{
Martin Friesen\footnote{Department of Mathematics, Wuppertal University, Germany, friesen@math.uni-wuppertal.de}\\
Yuri Kondratiev\footnote{Department of Mathematics, Bielefeld University, Germany, kondrat@math.uni-bielefeld.de}
}
\def\HyPsd@CatcodeWarning#1{}
\begin{document}

% Double-space the manuscript.

%\baselineskip24pt

% Make the title.

\maketitle

\begin{sciabstract}\textbf{Abstract:}
 In this work we study the non-equilibrium Markov state evolution for a spatial population model on the space of locally finite configurations 
 $\Gamma^2 = \Gamma^+ \times \Gamma^-$ over $\R^d$ where particles are marked by spins $\pm$. 
 Particles of type '+' reproduce themselves independently of each other and, moreover,
 die due to competition either among particles of the same type or particles of different type.
 Particles of type '-' evolve according to a non-equilibrium Glauber-type dynamics with activity $z$ and potential $\psi$.
 Let $L^S$ be the Markov operator for '+' -particles and $L^E$ the Markov operator for '-' -particles.
 The non-equilibrium state evolution $(\mu_t^{\e})_{t \geq 0}$ is obtained from the Fokker-Planck equation 
 with Markov operator $L^S + \frac{1}{\e}L^E$, $\e > 0$, which itself is studied in terms of correlation function evolution on a suitable chosen scale of Banach spaces.
 We prove that in the limiting regime $\e \to 0$ the state evolution $\mu_t^{\e}$ converges weakly to some state evolution $\overline{\mu}_t$
 associated to the Fokker-Planck equation with (heuristic) 
 Markov operator obtained from $L^S$ by averaging the interactions of the system with the environment with respect to the 
 unique invariant Gibbs measure of the environment.
\end{sciabstract}

\noindent \textbf{AMS Subject Classification:} 35Q84; 60K35; 60K37; 60J80\\
\textbf{Keywords: } Birth-and-death evolution; Fokker-Planck equation; Scale of Banach spaces; Stochastic averaging; Weak-coupling; Random evolution

\section{Introduction}

\subsection{General introduction}
The mathematical theory of complex systems related with models of spatial ecology, biology and genetics
is a fast developing area in modern mathematics providing many challenging tasks \cite{K97, N01, SEW05}.
In this work we consider the particular class of birth-and-death models where particles may die and create new particles according to some prescribed 
birth-and-death rates. Models without spatial structure have been studied since the early works of Kato and Kolmogorov (see e.g. \cite{KATO54}).
A recent account on related results is given in \cite{BLM06}.
Motivated by applications (see e.g. \cite{FFHKKK15}), a particular branch of modern probability theory 
is devoted to the study spatial models with a finite or infinite number of individuals.
The case of finite population dynamics was considered, e.g., in \cite{EW01, KOL06, F16b} (see also the references therein).
Birth-and-death dynamics for infinite population models are more subtle and require a much more detailed analysis (see \cite{GK06, FKK12, FK16b}).
Due to the mathematical complexity of the models, it is necessary to study each model separately.
 
\subsection{The model}
In this work we study a particular two-type birth-and-death model on the state space of locally finite configurations $\Gamma^2 = \Gamma^+ \times\Gamma^-$
over $\R^d$, where
\[
 \Gamma^{\pm} = \left \{ \gamma^{\pm} \subset \R^d \ | \ |\gamma^{\pm} \cap \Lambda| < \infty \ \text{ for all compacts } \Lambda \subset \R^d \right\}.
\]
Here $|\gamma^{\pm} \cap \Lambda|$ denotes the number of points in the set $\gamma^{\pm} \cap \Lambda$. 
For simplicity of notation we let $\gamma^{\pm} \cup x$, $\gamma^{\pm} \backslash x$ stand for $\gamma^{\pm} \cup \{x\}$, $\gamma^{\pm} \backslash \{x\}$
and write $\gamma = (\gamma^+,\gamma^-) \in \Gamma^2$.
The first component $\gamma^+$ describes the microscopic configuration of the system whereas $\gamma^-$ the microscopic configuration of the environment. 

Dynamics of the environment is described by state evolution on $\Gamma^-$ with (heuristic) Markov operator of Glauber-type
acting on functions $F \in \mathcal{FP}(\Gamma^-)$ (specified in the next section)
\begin{align}\label{EQ:31}
 (L^EF)(\gamma^-) &= \sum \limits_{x \in \gamma^-}(F(\gamma^- \backslash x) - F(\gamma^-)) + z \int \limits_{\R^d}e^{-E_{\psi}(x,\gamma^-)}(F(\gamma^- \cup x) - F(\gamma^-))d x.
\end{align}
Here $z > 0$ is the activity and $\psi \geq 0$ is a pair potential associated to the relative energy
\[
 E_{\psi}(x, \gamma^-) = \sum \limits_{y \in \gamma^-}\psi(x-y) \in [0,\infty], \ \ x \in \R^d, \ \ \gamma^- \in \Gamma^-.
\]
Such an environment has been studied e.g. in \cite{FKK12a, FKK12b}.

Dynamics of the system is described by state evolution on $\Gamma^+$ with birth-and-death rates depending on the configuration $\gamma^-$.
More precisely we study an infinite population spatial logistic model (short BDLP model) with (heuristic) Markov operator
\begin{align}\label{EQ:32}
 (L^SF)(\gamma) &= \sum \limits_{x \in \gamma^+}\left( m + \sum \limits_{y \in \gamma^+ \backslash x}a^-(x-y) + g\sum \limits_{y \in \gamma^-}b^-(x-y)\right)(F(\gamma^+ \backslash x, \gamma^-) - F(\gamma))
 \\ \notag &\ \ \ + \sum \limits_{x \in \gamma^+}\int \limits_{\R^d}a^+(x-y)(F(\gamma^+ \cup y, \gamma^-) - F(\gamma^+, \gamma^-))d y,
\end{align}
where $F \in \mathcal{FP}(\Gamma^2)$ belongs to a class of functions specified in the next section.
Here $m \geq 0$ is the constant mortality rate, $a^- \geq 0$ the dispersion kernel and $a^+ \geq 0$ the branching kernel.
The difference to the BDLP model is due to additional interactions of the system with the environment via interaction kernel $b^- \geq 0$
and coupling constant $g \geq 0$. The original BDLP model ($g = 0$) was introduced by Bolker, Dieckmann, Law, Pacala in \cite{BP97, BP99, MDL04}.
Its mathematical properties has been studied e.g. in \cite{FM04, FKK09, KK16}.

\subsection{Assumptions}
The following are our main assumptions for this work.
\begin{enumerate}
\item[(E)] $z > 0$ and $\psi \geq 0$ is symmetric with $1 - e^{-\psi} \in L^1(\R^d)$. For $\alpha^- \in \R$ let
 \[
  C_{\psi}(\alpha^-) := \exp\left( e^{\alpha^-}\int \limits_{\R^d}\left( 1 - e^{-\psi(y)}\right)d y \right).
 \]
 Assume that there exists $\alpha_*^- \in \R$ such that 
 \begin{align}\label{EQ:00}
  z e^{-\alpha_*^-}C_{\psi}(\alpha_*^-) < 1.
 \end{align}
 \item[(S)] $m \geq 0$, $a^+ \neq 0$, $0 \leq a^{\pm}, b^- \in L^1(\R^d) \cap L^{\infty}(\R^d)$ are symmetric and
  there exists $\vartheta > 0$ and $\lambda \geq 0$ such that for all $\eta^+ \subset \R^d$ with $|\eta^+| < \infty$
  \begin{align}\label{EQ:07}
   \sum \limits_{x \in \eta^+} \sum \limits_{y \in \eta^+ \backslash x}a^+(x-y) 
   \leq \vartheta \sum \limits_{x \in \eta^+}\sum \limits_{y \in \eta^+ \backslash x}a^-(x-y) + \lambda|\eta^+|.
  \end{align}
\end{enumerate}
Without loss of generality we may normalize $b^- \neq 0$ such that $\| b^- \|_{L^1} = 1$.
Condition \eqref{EQ:07} states that $\vartheta a^- - a^+$ is a stable (pair) potential in the sense of Ruelle.
Below we give some examples of functions $a^{\pm}$.
\begin{Example}
 The following are sufficient for condition (S).
 \begin{enumerate}
  \item Suppose that there exists $\vartheta > 0$ such that $a^+ \leq \vartheta a^-$.
   Such condition was used in \cite{FKK15}.
  \item Let $a^{\pm} \in L^1(\R^d)$ be symmetric, continuous and bounded and assume that $a^+$ has compact support and $a^-(0) > 0$ (see \cite[Proposition 3.7]{KK16}).
  \item Take $a^{\pm}(x) := \frac{c_{\pm}}{(2\pi \sigma_{\pm}^2)^{\frac{d}{2}}} e^{- \frac{|x|^2}{2 \sigma_{\pm}^2}}$ with $c_{\pm}, \sigma_{\pm} > 0$
  (see \cite[Proposition 3.8]{KK16}).
 \end{enumerate}
\end{Example}
\begin{Remark} 
 Note that \eqref{EQ:00} holds in the so-called low-activity regime, i.e. $z$ is small enough.
 Under the given condition (E) it is well-known that there exists a unique Gibbs measure $\mu_{\mathrm{inv}}$ on $\Gamma^-$
 with activity $z$ and potential $\psi$. Moreover the corresponding state evolution for the environment is ergodic with exponential rate (see \cite[Theorem 2]{FK17}).
\end{Remark}

\subsection{Statistical Markov evolutions}
In this work we study Markov dynamics associated to the (scaled) Markov operator $L_{\e} = L^S + \frac{1}{\e}L^E$, $\e > 0$.
Here the action of $L^E$ given by \eqref{EQ:31} is extended onto functions on $\Gamma^2$ by its action only in the variable $\gamma^-$. 
The classical approach to the construction of a Markov process $(\gamma_t^{\e})_{t \geq 0} \subset \Gamma^2$
is based on solving the backward Kolmogorov equation
\begin{align}\label{BKE}
 \frac{\partial F^{\e}_t}{\partial t} = L_{\e}F^{\e}_t, \ \ F_t^{\e}|_{t=0} = F_0, \ \ t \geq 0.
\end{align}
Namely for $F_0 = \1_A$ the solution gives the transition probabilities $p_t(\gamma, A) = F_t(\gamma)$ of the Markov process.
Hence such a process should satisfy the identity
\[
 F_t(\gamma) = \E_{\gamma}(F(\gamma_t^{\e})), \ \ t \geq 0,
\]
where $\gamma$ is the configuration of the process at initial time $t = 0$.

Denote by $\mu_t^{\e}$ the one-dimensional distributions (= state evolution) of such a Markov process. 
Then (heuristically) they satisfy the the Fokker-Planck equation
\begin{align}\label{EQ:05}
 \frac{d }{d t}\int \limits_{\Gamma^2}F(\gamma)d \mu^{\e}_t(\gamma) = \int \limits_{\Gamma^2} L_{\e}F(\gamma) d \mu^{\e}_t(\gamma), \ \ \mu^{\e}_t|_{t=0} = \mu_0, \ \ F \in \mathcal{FP}(\Gamma^2).
\end{align}
In the particular case of constant death rates (i.e. $b^- = a^- = 0$) existence and uniqueness of such a process was studied in \cite{GK06, KS06}.
The general case is still an open challenging mathematical problem of modern probability.
Here and below we restrict ourselves to the study of the Fokker-Planck equation \eqref{EQ:05} without addressing the existence problem 
of an associated Markov process.

It was proposed in \cite{KKM08} to study solutions to \eqref{EQ:05} in the class of states 
for which their associated sequence of correlation functions (see Section 2 for the definition) satisfy a certain (time-homogeneous) Ruelle bound.
The corresponding correlation function evolution should then satisfy a Markov analogue of the well-known BBGKY-hierarchy from physics.
A mathematical realization for general birth-and-death dynamics based on semigroup methods is given in \cite{FKK12, FK16b}.

Applying this results to the particular model described above we deduce that \eqref{EQ:05} is well-posed in the class of states $(\mu_t^{\e})_{t \geq 0}$
for which their associated sequence of correlation functions $k_{\mu_t^{\e}}^{(n,m)}$ satisfy 
for some constants $A_{\mu_0} > 0$ and $\alpha^{+} \in \R$ the (time-homogeneous) Ruelle bound
\begin{align}\label{EQ:51}
 \| k_{\mu_t^{\e}}^{(n,m)}\|_{L^{\infty}((\R^d)^n \times (\R^d)^m)} \leq A_{\mu_0} e^{\alpha^+ n} e^{\alpha_*^- m}, \ \ n,m \geq 0.
\end{align}
Here the evolution of correlation functions $k_{t}^{\e} := (k_{\mu_t^{\e}}^{(n,m)})_{n,m=0}^{\infty}$ 
satisfies a Markov analogue of the BBGKY-hierarchy
\[
 \frac{\partial k_t^{\e}}{\partial t} = L_{\e}^{\Delta}k_t^{\e}, \ \ k_t^{\e}|_{t=0} = k_0.
\]
The operator $L^{\Delta}_{\e}$ is explicitly given in Section 4.
Unfortunately, such semigroup methods require that $m \geq 0$ is sufficiently large (= 'high mortality regime').
Having applications in mind it is feasible to study also the case where $m \geq 0$ is small or even equals zero.

\subsection{Aim of this work}
In this work we study \eqref{EQ:05} under the conditions (S) and (E) where $m \geq 0$ is small or even equals to zero.
As a consequence we cannot expect that \eqref{EQ:51} holds for constants $A_{\mu_0},\alpha^+$ uniform in $t$, i.e. a time-inhomogeneous version of the Ruelle bound
has to be used. Thus we study the following problems:
\begin{enumerate}
 \item[(i)] Prove that, given (S) and (E), the Fokker-Planck equation \eqref{EQ:05} can be uniquely solved 
 in the class of states $(\mu_t^{\e})_{t \geq 0}$ for which the corresponding sequence of correlation functions $(k_{\mu_t^{\e}}^{(n,m)})_{n,m=0}^{\infty}$
 satisfy for some constant $A(t) > 0$ a time-inhomogeneous Ruelle bound
 \[
  k_{\mu_t^{\e}}^{(n,m)} \leq A(t) e^{\alpha^+(t)n} e^{\alpha_*^- m}, \ \ n,m \geq 0,
 \]
 where $A(t)$ is some function of $t$ and $\alpha^+(t)$ is affine linear in $t$.
 Moreover provide reasonable estimates on the functions $A(t)$ and $\alpha^+(t)$.
 \item[(ii)] Study Lyapunov-type estimates for the evolution of states.
\end{enumerate}
Problem (i) was investigated for the BDLP model in \cite{KK16}
where the evolution of states was studied in terms of correlation functions and the latter one was constructed in an increasing scale of Banach spaces.
Weak uniqueness for the Fokker-Planck equation was then obtained in \cite{F16a}.
In this work we provide an extension of this techniques to the case of two-component dynamics.
The obtained correlation function evolution is in our case constructed in an increasing two-parameter scale of Banach spaces.
Concerning problem (ii) we construct a Lyapunov function for the operator $L_{\e}$ and deduce then classically estimates on the state evolution.

The main part of this work is devoted to the study of the limit $\e \to 0$. It falls into the particular class of Random evolution framework
considered e.g. in \cite{ET86},\cite{K92},\cite{PINSKY},\cite{SHS}.
Most of the existing works typically deal with rather simple environments or systems and are studied via the backward Kolmogorov equation.
However, for the model considered in this work both (system and environment) are infinite particle dynamics for which an
analysis of the backward Kolmogorov equation is absent.
New techniques based on the Fokker-Planck equation and corresponding correlation function evolution have to be developed.
In \cite{FK16} we have studied a (general) system of finitely many interacting point particles with rates depending on
an infinite particle equilibrium process (e.g. of Glauber-type). The case of two general birth-and-death evolutions on $\Gamma^2$
was then studied by semigroup methods in \cite{FK17}. 
In order to apply these methods to the particular model given by \eqref{EQ:31} and \eqref{EQ:32} it is necessary 
to work in the 'high mortality regime'. In such a case it can be shown that the population gets extinct with exponential speed, 
i.e. $\mu_t^{\e} \longrightarrow \delta_{\emptyset}$ as $t \to \infty$ (apply e.g. \cite[Theorem 2]{FK17}). 
Hence for most of the interesting cases we cannot directly apply the previous results.

In this work we show how the stochastic averaging principle ($\e \to 0$) can be obtained without assuming the 'high mortality regime'.
More precisely we suppose to study the following problem:
\begin{enumerate}
 \item[(iii)] Suppose that the coupling constant $g$ is small enough, i.e.
 \begin{align}\label{EQ:01}
  0 < g < (m + \lambda)e^{- \alpha_*^-}.
 \end{align}
 Then for any $F \in \mathcal{FP}(\Gamma^+)$ (specified in the next section)
 \[
  \int \limits_{\Gamma^2}F(\gamma^+)d\mu_t^{\e}(\gamma^+,\gamma^-) \longrightarrow \int \limits_{\Gamma^+}F(\gamma^+)d\overline{\mu}_t(\gamma^+), \ \ \e \to 0
 \]
 holds uniformly on compacts in $t$. Moreover the evolution $(\overline{\mu}_t)_{t \geq 0}$ is the unique solution to the Fokker-Planck equation
 \begin{align}\label{EQ:28}
  \frac{d}{dt}\int \limits_{\Gamma^+}F(\gamma^+)d\overline{\mu}_t(\gamma^+) = \int \limits_{\Gamma^+}(\overline{L}F)(\gamma^+)d\overline{\mu}_t(\gamma^+), \ \ \overline{\mu}_t|_{t=0} = \mu_0^+, \ \ F \in \mathcal{FP}(\Gamma^+),
 \end{align} 
 where $\mu_0^+$ is the marginal of $\mu_0$ onto $\Gamma^+$ and 
 \begin{align*}
  (\overline{L}F)(\gamma^+) &= \sum \limits_{x \in \gamma^+}\left( m + g \rho + \sum \limits_{y \in \gamma^- \backslash x}a^-(x-y)\right)(F(\gamma^+ \backslash x) - F(\gamma^+))
  \\ \notag &\ \ \ + \sum \limits_{x \in \gamma^+}\int \limits_{\R^d}a^+(x-y)(F(\gamma^+ \cup y) - F(\gamma^+))d y.
 \end{align*}
 Here $\rho$ is the density (first correlation function) of the Gibbs measure with activity $z > 0$ and interaction potential $\psi$.
 The latter one is constant due to the fact that $\psi$ and $z$ are translation invariant in space.
\end{enumerate}
The corresponding analysis is based on scales of Banach spaces and, in particular, on methods developed in \cite{FK17}, \cite{KK16}, \cite{F16}.
We conclude with a particular example.
\begin{Example}
 Take $\psi = 0$ and $z > 0$, then $L^E$ reduces to the so-called Sourgailis model studied in \cite{F11} with Poisson measure $\mu_{\mathrm{inv}} = \pi_{z}$
 as unique invariant measure. Its density satisfies $\rho = z$ and hence if 
 \[
  g < \frac{\lambda + m}{z},
 \]
 then (E) is satisfied for $\alpha_{*}^- > \log(z)$ and \eqref{EQ:01} holds.
 
 In particular take $m,g$ such that
 \[
  \max\left\{ \frac{\| a \|_{L^1} - \lambda}{2}, 0\right\} < m < \| a \|_{L^1}, \ \ 
  \frac{\| a\|_{L^1} - m}{z} < g < \frac{m+\lambda}{z}.
 \]
 Then (E) and \eqref{EQ:01} are satisfied for any $\alpha_*^- \in (\log(z), \log( \frac{m+\lambda}{g}))$.
 Since $m < \| a \|_{L^1}$ the joint dynamics is overcritical, but since $m + gz > \| a \|_{L^1}$
 the averaged dynamics is subcritical, i.e. it can be studied by semigroup methods and applying \cite[Theorem 2]{FK17} 
 we see that $\overline{\mu}_t \longrightarrow \delta_{\emptyset}$.
 This property is a consequence of the interactions with the environment.
\end{Example}

\subsection{Structure of the work}
This work is organized as follows. 
In order that this work is self-contained we recall some facts of harmonic analysis on configuration spaces in Section 2.
The results of this work are presented and discussed in Section 3.
In Section 4 we study existence and uniqueness to \eqref{EQ:05}.
Estimates on the correlation functions are shown in Section 5 whereas in Section 6 we study Lyapunov-type estimates.
The stochastic averaging principle, i.e. problem (iii), is then proved in Section 7.

\section{Harmonic analysis on the configuration space}

\subsection{One-component case}
Let us first consider only particles of one type. Hence we omit (only here) for simplicity of notation the dependence on the spins $\pm$.
Recall that the one-component configuration space of locally finite configurations is defined by
\begin{align}\label{EQ:02}
 \Gamma = \{ \gamma \subset \R^d \ | \ |\gamma \cap \Lambda| < \infty \ \ \text{ for all compacts } \Lambda \subset \R^d\}.
\end{align}
It is well-known that $\Gamma$ is a Polish spaces with respect to the smallest topology
such that all maps $\gamma \longmapsto \sum_{x \in \gamma}f(x)$ are continuous for any continuous function $f$ with compact support (see \cite{KK06}).
The corresponding Borel-$\sigma$-algebra is then the smallest $\sigma$-algebra such that 
\[
 \Gamma \ni \gamma \longmapsto |\gamma \cap \Lambda| \in \N_0
\]
is measurable for any compact $\Lambda \subset \R^d$. 
We call probability measures on $\Gamma$ states and measurable functions $F$ on $\Gamma$ observables.
\begin{Example}
 The Poisson measure $\pi_{z}$ with intensity measure $zdx$, $z > 0$, is the most prominent example of a purely chaotic state.
 It is uniquely determined by 
 \[
  \pi_z( \{ \gamma \in \Gamma \ | \ |\gamma \cap \Lambda| = n \}) = \frac{m(\Lambda)^n z^n}{n!}e^{-z m(\Lambda)},
 \]
 where $n \geq 0$ and $\Lambda \subset \R^d$ is a compact. Here $m(\Lambda)$ denotes the Lebesgue measure of $\Lambda$.
\end{Example}
Below we describe a particular subclass of states described in terms of their associated correlation functions.
These ideas go back to the works of Lennard \cite{L74,L75}.

The space of finite configurations 
\[
 \Gamma_0 := \{ \eta \subset \R^d \ | \ |\eta| < \infty\}
\]
is equipped with the smallest $\sigma$-algebra such that $\Gamma_0 \ni \eta \longmapsto |\eta \cap \Lambda|$
is measurable for any compact $\Lambda \subset \R^d$.
\begin{Remark}
 Note that any measurable function $G: \Gamma_0 \longrightarrow \R$ is uniquely determined by its associated sequence of symmetric measurable
 coordinate functions $G^{(n)}: (\R^d)^n \longrightarrow \R$ via
 \begin{align}\label{EQ:19}
  G^{(n)}(x_1,\dots, x_n) = \begin{cases}G(\{x_1,\dots, x_n\}), & x_i \neq x_j, i \neq j \\ 0, & \text{ otherwise}\end{cases}.
 \end{align}
 Note that $G^{(0)} = G(\emptyset)$ is simply a constant.
\end{Remark}
We define a  measure $\lambda$ on $\Gamma_0$ by the relation
\begin{align}\label{EQ:50}
 \int \limits_{\Gamma_0}G(\eta)d\lambda(\eta) = G(\emptyset) + \sum \limits_{n=1}^{\infty}\frac{1}{n!}\int\limits_{\R^{dn}}G(\{x_1,\dots, x_n\})dx_1\dots dx_n,
\end{align}
where $G$ is any non-negative measurable function. 
The following identity is an combinatorial analogue of the integration by parts formula (see \cite{F17WR} for a proof).
\begin{Lemma}
 For any measurable function $G: \Gamma_0 \times \Gamma_0 \times \Gamma_0 \longrightarrow \R$ it holds that
 \begin{align}\label{IBP}
  \int \limits_{\Gamma_0}\sum \limits_{\xi \subset \eta}G(\xi, \eta \backslash \xi, \eta)d \lambda(\eta)
  = \int \limits_{\Gamma_0}\int \limits_{\Gamma_0}G(\xi, \eta, \eta \cup \xi)d \lambda(\xi)d \lambda(\eta)
 \end{align}
 provided on side of the equality is finite for $|G|$.
\end{Lemma}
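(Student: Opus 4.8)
The plan is to reduce \eqref{IBP} to the series representation \eqref{EQ:50} of the Lebesgue--Poisson measure $\lambda$ together with an elementary binomial bookkeeping identity, proving it first for non-negative $G$ and then removing the sign restriction by a Fubini-type argument.

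First I would assume $G \geq 0$ and expand the outer integral on the left-hand side of \eqref{IBP} by \eqref{EQ:50}. This turns it into a series $\sum_{n\geq 0}\frac{1}{n!}\int_{(\R^d)^n}\sum_{\xi\subset\{x_1,\dots,x_n\}} G(\xi,\{x_1,\dots,x_n\}\setminus\xi,\{x_1,\dots,x_n\})\,dx$, with every term non-negative, so that all rearrangements below are legitimate by Tonelli's theorem. For fixed $n$ I would group the inner sum over $\xi\subset\{x_1,\dots,x_n\}$ according to the cardinality $k=|\xi|$: there are $\binom{n}{k}$ such subsets, and by symmetry of $G$ in its first two configuration arguments each of them contributes, after relabelling the integration variables, the same integral $\int_{(\R^d)^k}\int_{(\R^d)^{n-k}}G(\{y_1,\dots,y_k\},\{z_1,\dots,z_{n-k}\},\{y_1,\dots,y_k\}\cup\{z_1,\dots,z_{n-k}\})\,dz\,dy$. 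Using the identity $\frac{1}{n!}\binom{n}{k} = \frac{1}{k!\,(n-k)!}$ and re-indexing with $m = n-k$ rewrites the left-hand side as the double series $\sum_{k,m\geq 0}\frac{1}{k!\,m!}\int_{(\R^d)^k}\int_{(\R^d)^{m}}G(\{y_1,\dots,y_k\},\{z_1,\dots,z_m\},\{y_1,\dots,y_k\}\cup\{z_1,\dots,z_m\})\,dz\,dy$. Applying \eqref{EQ:50} twice to the right-hand side of \eqref{IBP}, once for $d\lambda(\xi)$ and once for $d\lambda(\eta)$, produces exactly the same double series, which settles the non-negative case.

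For signed measurable $G$ I would apply the result just proven to $|G|$: by hypothesis one side is finite, hence (via the identity for $|G|$) both sides are, and then writing $G = G^+ - G^-$ and invoking the non-negative case for $G^+$ and $G^-$ separately — the finiteness permitting the subtraction — gives \eqref{IBP} in full generality. This is simply the passage from Tonelli to Fubini.

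The only points requiring a little care, and the closest thing to an obstacle here, are measure-theoretic: in the re-indexing step one uses that $\{(y,z) : y_i = z_j \text{ for some } i,j\}$ together with the analogous diagonals in the $y$- and $z$-variables are Lebesgue-null, so that for $\lambda\otimes\lambda$-a.e.\ $(\xi,\eta)$ the set-theoretic union $\xi\cup\eta$ is a configuration of $|\xi|+|\eta|$ distinct points and the evaluation of $G$ there agrees almost everywhere with the one through the coordinate functions \eqref{EQ:19}. Everything else is the elementary exponential/binomial identity underlying $\lambda$, which is why the statement can reasonably be deferred to \cite{F17WR}.
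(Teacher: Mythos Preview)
The paper does not actually supply a proof of this lemma; it simply cites \cite{F17WR} and moves on. Your argument is correct and is the standard one: expand both sides via \eqref{EQ:50}, group by cardinality, use $\frac{1}{n!}\binom{n}{k}=\frac{1}{k!(n-k)!}$, and pass from Tonelli to Fubini for the signed case. One phrasing quibble: when you write ``by symmetry of $G$ in its first two configuration arguments'' you do not mean $G(\xi,\eta,\cdot)=G(\eta,\xi,\cdot)$, which is neither assumed nor needed, but rather that $G$ is a function of \emph{configurations} (unordered sets) so that relabelling the Lebesgue integration variables $x_1,\dots,x_n$ by any permutation leaves the integrand unchanged; it would be cleaner to say this directly.
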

Let $B_{bs}(\Gamma_0)$ be the space of all bounded measurable functions with bounded support, i.e.
$G \in B_{bs}(\Gamma_0)$ iff $G$ is bounded and there exists a compact $\Lambda \subset \R^d$ and $N \in \N$ with $G(\eta) = 0$, 
whenever $\eta \cap \Lambda^c \neq \emptyset$ or $|\eta| > N$. 
Equivalently, its associated sequence of symmetric functions $G^{(n)}$ (see \eqref{EQ:19}) 
satisfy $G^{(n)} = 0$ for all $n > N$ and $G^{(1)}, \dots, G^{(N)}$ are bounded with compact support.

The $K$-transform is, for $G \in B_{bs}(\Gamma_0)$, defined by
\begin{align}\label{KTRANSFORM}
 (KG)(\gamma) := \sum \limits_{\eta \Subset \gamma}G(\eta), \ \ \gamma \in \Gamma,
\end{align}
where $\Subset$ means that the sum is taken over all finite subsets $\eta$ of $\gamma$.
Note that due to $G \in B_{bs}(\Gamma_0)$ only finitely many terms in the sum are non-vanishing.

Let $\mu$ be a probability measure on $\Gamma$ with finite local moments, i.e. $\int_{\Gamma}|\gamma \cap \Lambda|^nd\mu(\gamma) < \infty$
for all compacts $\Lambda$ and $n \geq 1$. 
\begin{Definition}
 The correlation function $k_{\mu}: \Gamma_0 \longrightarrow \R_+$ is (uniquely) defined by the relation
\begin{align*}
 \int \limits_{\Gamma}(KG)(\gamma)d\mu(\gamma) = \int \limits_{\Gamma_0}G(\eta)k_{\mu}(\eta)d\lambda(\eta), \ \ G \in B_{bs}(\Gamma_0).
\end{align*}
\end{Definition}
The assumption on finite local moments can be used to show that the integrand on the right-hand side is, indeed, integrable in $\lambda$.
Note that not every state $\mu$ admits a correlation function $k_{\mu}$. It is necessary and sufficient that $\mu$
is locally absolutely continuous with respect to the Poisson measure (see \cite{KK02} for additional details).
\begin{Remark}
 The Poisson measure $\pi_z$ has correlation function $k_{\pi_z}(\eta) = z^{|\eta|}$.
\end{Remark}

\subsection{Two-component case}
Below we briefly describe an extension to the two-component case.
Let $\Gamma^{\pm}$ be two independent copies of $\Gamma$ as given in \eqref{EQ:02}.
The two-component configuration space is defined as the product $\Gamma^2 = \Gamma^+ \times \Gamma^-$ and it is equipped with the product topology.
All results explained above naturally extend to this case. 
In order to fix the notation, a brief summary is given below.

Let $\Gamma_0^2 := \Gamma_0^+ \times \Gamma_0^-$, $\eta := (\eta^+, \eta^-)$ and $|\eta| := |\eta^+| + |\eta^-|$.
Define $G \in B_{bs}(\Gamma_0^2)$ iff $G$ is bounded, measurable
and there exists a compact $\Lambda \subset \R^d$ and $N \in \N$ such that $G(\eta) = 0$, whenever $|\eta| > N$ or $\eta^{\pm} \cap \Lambda^c \neq \emptyset$.

The $K$-transform is, for $G \in B_{bs}(\Gamma_0^2)$, defined by
\[
 (\mathbbm{K}G)(\gamma) := \sum_{\eta \Subset \gamma}G(\eta),
\]
where $\xi \subset \eta$ and $\eta \Subset \gamma$ are defined component-wise. 

Let $\mu$ be a probability measure on $\Gamma^2$ with finite local moments, i.e.
\[
 \int_{\Gamma^2}|\gamma^- \cap \Lambda|^n |\gamma^+ \cap \Lambda|^n d\mu(\gamma) < \infty
\]
for all compacts $\Lambda$ and $n \geq 1$.
Analogously to the one-component case we define the correlation function $k_{\mu}$, provided it exists, by
\begin{align}\label{EQ:30}
 \int \limits_{\Gamma^2}\mathbbm{K}G(\gamma)d \mu(\gamma) = \int \limits_{\Gamma_0^2}G(\eta)k_{\mu}(\eta)d \lambda(\eta), \ \ G \in B_{bs}(\Gamma_0^2).
\end{align}
Here $\lambda = \lambda^+ \otimes \lambda^-$, where $\lambda^{\pm}$ are defined by \eqref{EQ:50}.
By abuse of notation we let $\lambda$ stand for the corresponding measure on $\Gamma_0^{\pm}$ or $\Gamma_0^2$, respectively.
At this point it is worth to mention that not every non-negative function $k$ on $\Gamma_0^2$ is the correlation function of some state $\mu$.
It is necessary and sufficient that $k(\emptyset) = 1$ and that $k$ is positive definite in the sense of Lenard, i.e.
\[
 \int \limits_{\Gamma_0^2}G(\eta)k(\eta)d\lambda(\eta) \geq 0, \ \ G \in B_{bs}(\Gamma_0^2), \ \ \mathbbm{K}G \geq 0,
\]
see \cite{L74,L75,KK02}. Some additional details for states on $\Gamma^2$ are discussed in \cite{F11a}.
\begin{Example}
 The two-component Poisson measure on $\Gamma^2$ with activity parameters $z^{\pm} > 0$ is defined as the product measure 
 $\pi_{z^+,z^-} := \pi_{z^+} \otimes \pi_{z^-}$. Hence it has correlation function $k_{\pi_{z^+,z^-}}(\eta) = (z^+)^{|\eta^+|}(z^-)^{|\eta^-|}$.
\end{Example}

\section{Statement of the results}

\subsection{Joint dynamics}
Let $\e > 0$ be fixed. Observe that $C_{\psi}$ is continuous in $\alpha$. 
Hence by \eqref{EQ:00} we find $\alpha^{*,-} > \alpha_*^-$ such that for all $\alpha^- \in [\alpha_*^-,\alpha^{*,-})$ we have
\begin{align}\label{ENV:COND}
  z e^{-\alpha^-}C_{\psi}(\alpha^-) < 1.
\end{align}
Moreover, set
\begin{align}\label{EQ:13}
 \alpha_*^+ := \begin{cases}\log\left( \max\left\{ \frac{\lambda}{\lambda+m}, \vartheta \right\} \right), & \lambda > 0 \\ \log(\vartheta), & \lambda = 0 \end{cases}
\end{align}
and let $\mathcal{FP}(\Gamma^2) = K(B_{bs}(\Gamma_0^2))$. For $\alpha = (\alpha^+, \alpha^-)$ let $\K_{\alpha}$
be the Banach space of all equivalence classes of functions $k: \Gamma_0^2 \longrightarrow \R$ with finite norm 
\[
 \Vert k \Vert_{\K_{\alpha}} := \esssup \limits_{\eta \in \Gamma_0^2}\ |k(\eta)|e^{- \alpha^+|\eta^+|}e^{- \alpha^-|\eta^-|}.
\]
Denote by $\mathcal{P}_{\alpha}$ the space of all states $\mu$ such that $k_{\mu} \in \K_{\alpha}$. 
Moreover, let $\mathcal{P} = \bigcup_{\alpha \in \R^2}\mathcal{P}_{\alpha}$.
For simplicity of notation we let $\langle F, \mu \rangle = \int_{\Gamma^2}F(\gamma)d\mu(\gamma)$ where $\mu$ is a state and $F$ an integrable function on $\Gamma^2$.
Existence and uniqueness for solutions to \eqref{EQ:05} is stated below.
\begin{Theorem}\label{TH:03}
 Let $\alpha \in \R^2$ be such that $\alpha^- \in [\alpha_*^-, \alpha^{*,-})$ and $\alpha^+ > \alpha_*^+$.
 Then for any $\mu_0 \in \mathcal{P}_{\alpha}$ there exists $(\mu_t^{\e})_{t \geq 0} \subset \mathcal{P}$ such that for any $F \in \mathcal{FP}(\Gamma^2)$
 \begin{enumerate}
  \item[(i)] $L_{\e}F \in L^1(\Gamma^2,d\mu_t^{\e})$ and $t \longmapsto \langle F, \mu_t^{\e} \rangle$ is continuous.
  \item[(ii)] $t \longmapsto \langle F, \mu_t^{\e}\rangle$ is continuously differentiable and \eqref{EQ:05} holds for all $t \geq 0$.
 \end{enumerate}
 Moreover, given $(\nu_t^{\e})_{t \geq 0} \subset \mathcal{P}$ such that for all $F \in \mathcal{FP}(\Gamma^2)$ the following properties hold
 \begin{enumerate}
  \item[(i)] $L_{\e}F \in L^1(\Gamma^2,d\nu_t^{\e})$ and $t \longmapsto \langle F, \nu_t^{\e} \rangle$ is locally integrable.
  \item[(ii)] $t \longmapsto \langle F, \mu_t\rangle$ is absolutely continuous and \eqref{EQ:05} holds for a.a. $t \geq 0$.
  \item[(iii)] For each $T > 0$ there exists $\beta^+ > \alpha^+$ such that
  \[
   \sup \limits_{t \in [0,T]} \ \Vert k_{\nu_t^{\e}} \Vert_{\K_{\beta^+,\alpha^-}} < \infty.
  \]
 \end{enumerate}
 Then $\mu_t^{\e} = \nu_t^{\e}$ for all $t \geq 0$.
\end{Theorem}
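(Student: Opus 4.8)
The plan is to pass from the Fokker--Planck equation \eqref{EQ:05} to the associated hierarchy for correlation functions via the $K$-transform, to solve this hierarchy in the two-parameter scale $(\K_{\alpha^+,\alpha^-})$ by a moving-norm estimate, and to read off both the existence of a state evolution $(\mu_t^\e)$ and its uniqueness in the stated class from the corresponding statements for the hierarchy. Inserting $F = \mathbbm{K}G$ with $G \in B_{bs}(\Gamma_0^2)$ into \eqref{EQ:05} and using the defining relation \eqref{EQ:30} rewrites the equation as $\frac{d}{dt}\langle G,k_{\mu_t^\e}\rangle = \langle \widehat{L}_\e G, k_{\mu_t^\e}\rangle$, where $\widehat{L}_\e = \mathbbm{K}^{-1}L_\e\mathbbm{K}$ is the image of $L_\e$ on quasi-observables and $\langle\cdot,\cdot\rangle$ denotes the $\lambda$-pairing. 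Although $\widehat{L}_\e$ does not preserve $B_{bs}(\Gamma_0^2)$, its birth part only adjoins finitely many points, so $\widehat{L}_\e G$ lies in a weighted space $\Lb_C$ of functions $H$ on $\Gamma_0^2$ with $\int |H(\eta)| C_+^{|\eta^+|} C_-^{|\eta^-|} d\lambda(\eta) < \infty$; on such spaces the duality $\langle \mathbbm{K}H,\mu\rangle = \langle H, k_\mu\rangle$ persists once $k_\mu$ obeys a Ruelle bound, i.e. $k_\mu \in \K_\alpha$. Together with the requirement $L_\e F \in L^1(\Gamma^2, d\mu_t^\e)$, this shows that, for solutions in $\mathcal P$, equation \eqref{EQ:05} is equivalent to the linear hierarchy $\partial_t k_t = L_\e^\Delta k_t$, $k_0 = k_{\mu_0}$, with $L_\e^\Delta$ the dual of $\widehat{L}_\e$ recorded explicitly in Section 4.

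\textbf{Existence.}
Fix $\alpha^- \in [\alpha_*^-,\alpha^{*,-})$. The core step is a differential inequality for $t \mapsto \|k_t\|_{\K_{\alpha^+(t),\alpha^-}}$ along the moving scale $\alpha^+(t) = \alpha^+ + \kappa t$: for a suitable slope $\kappa = \kappa(\e)$ and any initial $\alpha^+ > \alpha_*^+$, one dominates the branching contribution of $L_\e^\Delta$ by the competition and mortality contributions using the stability condition (S), which after multiplication by the weight $e^{-\alpha^+|\eta^+|}$ becomes a genuine domination precisely because $e^{\alpha^+} > \vartheta$ and $(\lambda+m)e^{\alpha^+} > \lambda$ by \eqref{EQ:13}; simultaneously one controls the environment block $\frac{1}{\e}(\widehat{L}^E)^\Delta$ at fixed $\alpha^-$ by the contraction bound $z e^{-\alpha^-} C_\psi(\alpha^-) < 1$ of \eqref{ENV:COND}, the coupling term $b^-$ being harmless since $\|b^-\|_{L^1}=1$ and it only raises $|\eta^-|$ by one at the cost of a factor $e^{\alpha^-}$. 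This yields, on every $[0,T]$, a unique solution $t \mapsto k_t \in \K_{\alpha^+(t),\alpha^-}$ together with an explicit bound $\|k_t\|_{\K_{\alpha^+(t),\alpha^-}} \le A(t)$; gluing over $T$ gives a solution on $[0,\infty)$. To identify $k_t$ as the correlation function of a state one approximates the dynamics, e.g. by restricting $\R^d$ to boxes $\Lambda\uparrow\R^d$ (or by adding and then removing an auxiliary large mortality), obtaining genuine Markov state evolutions $\mu_t^{\e,\Lambda}$ whose correlation functions are positive definite in the sense of Lenard and converge to $k_t$ in $\K_{\alpha^+(t),\alpha^-}$; since Lenard positivity and $k(\emptyset)=1$ pass to the limit, $k_t = k_{\mu_t^\e}$ for a unique $\mu_t^\e \in \mathcal P$, and properties (i)--(ii) follow from $A(t)$ together with the reduction.

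\textbf{Uniqueness.}
Let $(\nu_t^\e)$ satisfy (i)--(iii). By the reduction its correlation functions obey $\frac{d}{dt}\langle G, k_{\nu_t^\e}\rangle = \langle\widehat{L}_\e G, k_{\nu_t^\e}\rangle$ for a.a. $t$ and all $G \in B_{bs}(\Gamma_0^2)$, with $\sup_{t\in[0,T]}\|k_{\nu_t^\e}\|_{\K_{\beta^+,\alpha^-}}<\infty$ for some $\beta^+ > \alpha^+$. Fix $T$, a test function $G^\ast \in B_{bs}(\Gamma_0^2)$, and solve the dual hierarchy $\partial_s G_s = \widehat{L}_\e G_s$, $G_0 = G^\ast$, on the predual scale $(\Lb_C)$, where, dually to the forward flow, the admissible weight decreases linearly in $s$; the slack $\beta^+ > \alpha^+$ in (iii) is exactly what lets $G_s$ stay, on a sufficiently short interval $[0,\delta]$, in a space that pairs with both $k_{\nu_t^\e}$ (through $\K_{\beta^+,\alpha^-}$) and $k_{\mu_0}$ (through $\K_{\alpha^+,\alpha^-}$). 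A Fubini argument, justified by the Ruelle bounds together with assumption (i), then shows that $s \mapsto \langle G_s, k_{\nu_{\delta-s}^\e}\rangle$ is constant, so $\langle G^\ast, k_{\nu_\delta^\e}\rangle = \langle G_\delta, k_{\mu_0}\rangle$, a quantity depending on $\mu_0$ alone. The evolution $\mu_t^\e$ from the existence step satisfies the same identity, hence $\langle G^\ast, k_{\nu_\delta^\e}\rangle = \langle G^\ast, k_{\mu_\delta^\e}\rangle$ for all $G^\ast$, so $\nu_\delta^\e = \mu_\delta^\e$ because $B_{bs}(\Gamma_0^2)$ separates states; iterating this short-time step propagates the equality to $[0,T]$, and $T$ was arbitrary.

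\textbf{Main obstacle.}
The crux is the operator estimate on the two-parameter scale in the existence step and its predual counterpart in the uniqueness step: one must show that, despite $m$ being allowed to vanish, (a) the branching part of $L_\e^\Delta$ is absorbed by the competition and mortality part, which is where conditions (S) and \eqref{EQ:13} are indispensable and is what forces the drift $\alpha^+(t)$ to be merely affine with a controllable constant $A(t)$, and (b) the environment block costs nothing in the $\alpha^-$-direction, for which the low-activity bound \eqref{ENV:COND} (condition (E)) is essential. A secondary but unavoidable technical point is the rigorous extension of the duality $\langle\mathbbm{K}H,\mu\rangle=\langle H,k_\mu\rangle$ beyond $B_{bs}(\Gamma_0^2)$ to the weighted spaces $\Lb_C$ --- precisely where hypotheses (i) and (iii) are consumed --- and the time-regularity bookkeeping needed to differentiate the pairing in the uniqueness argument.
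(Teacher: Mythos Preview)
Your overall strategy---reduce \eqref{EQ:05} to the correlation-function hierarchy via the $K$-transform, solve the hierarchy on the increasing scale $(\K_{\alpha^+,\alpha^-})$, recover states from positive-definiteness, and prove uniqueness by running the predual evolution backwards---is correct and is exactly the route the paper takes. The differences are in implementation, and one of them is a genuine soft spot.

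\textbf{Construction on the scale.} You propose a direct differential inequality for $t\mapsto\|k_t\|_{\K_{\alpha^+(t),\alpha^-}}$. The paper instead decomposes $\widehat L_\e = (A+\tfrac1\e\widehat L^E)+B$, where the first bracket is shown (via (S) and \eqref{ENV:COND}, just as you describe) to generate an analytic \emph{contraction} semigroup on each $\Lb_\alpha$, and $B$ is a bounded operator between neighbouring spaces on the scale with an explicit $1/(\beta^+-\alpha^+)$ blow-up. An Ovsjannikov-type perturbation argument then produces a \emph{local} evolution $\widehat U_\e^{\beta,\alpha}(t)$ on $[0,T(\alpha,\beta))$; its adjoint gives the local correlation-function flow $U_\e^\Delta(t)$. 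The global extension is not obtained from your moving-norm bound directly but by comparison with an auxiliary positivity-preserving family $Q_\e^\Delta(t)$ (built from $A^\Delta+\tfrac1\e L^{\Delta,E}$ plus only the \emph{nonnegative} pieces of $B^\Delta$), which lives on a larger time interval $T_1(\alpha,\beta)\ge T(\alpha,\beta)$ and dominates $U_\e^\Delta(t)k_0$ whenever $k_0$ is positive definite; iterating the local step along the sequence $\alpha_n^+=\alpha^++(\lambda+\|a^+\|_{L^1})S_n$ gives the affine drift you anticipated. Your differential-inequality heuristic captures the same mechanism but skips the construction step: a bound on $\|k_t\|$ does not by itself produce $k_t$, and the paper's split into ``semigroup part $+$ scale-bounded perturbation'' is precisely what turns the estimate into an existence theorem.

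\textbf{Positive definiteness.} This is where your sketch is thinnest and the paper is most technical. Rather than boxes or an auxiliary mortality, the paper regularizes the \emph{birth} rates by multiplying with $R_\delta(x)=e^{-\delta|x|^2}$. For $\delta>0$ the total birth intensity is integrable, so the density evolution exists as a genuine stochastic semigroup on $L^1(\Gamma_0^2,d\lambda)$. The link to correlation functions is made through auxiliary spaces $\mathcal R_\alpha^\delta$ (Ruelle-type weights times $\mathcal E(R_\delta;\cdot)$) and the inverse $K$-transform $\mathcal H$, yielding $\mathcal H L_{\delta,\e}^\Delta=\mathcal J_{\delta,\e}\mathcal H$; positivity of the density semigroup then gives Lenard positivity of $U_{\delta,\e}^\Delta(t)k_0$, and a separate $\delta\to0$ convergence argument (on the quasi-observable side) transfers this to the unregularized flow. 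Your ``boxes or large mortality'' would require its own construction of the approximating Markov dynamics and a convergence proof for their correlation functions; neither is immediate, and the paper's $R_\delta$ scheme is chosen precisely because it dovetails with the same scale machinery already in place.

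\textbf{Uniqueness.} Your dual-flow argument is the paper's argument: one shows that any $(k_{\nu_t^\e})$ satisfying the weak hierarchy and the bound in (iii) must coincide with $U_\e^\Delta(t)k_{\mu_0}$ on short intervals (Proposition~\ref{TH:00}(c)), then iterates. The density of $B_{bs}(\Gamma_0^2)$ in $\Lb_\beta$ together with (iii) is used exactly to upgrade continuity of $t\mapsto\langle\langle G,k_{\nu_t^\e}\rangle\rangle$ from $G\in B_{bs}$ to $G\in\Lb_\beta$, as you note.
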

The construction of solutions to the Fokker-Planck equation is based on the construction of an correlation function evolution.
Inspecting the proof we easily deduce the following.
\begin{Remark}
 The proof shows that \eqref{EQ:05} is equivalent to a weak formulation of
 \begin{align}\label{CORR:00}
  \frac{\partial k_t}{\partial t} = L_{\e}^{\Delta}k_{t}, \ \ k_{t}|_{t= 0} = k_{0},
 \end{align}
 where $L_{\e}^{\Delta}$ is explicitly given in Section 3. The latter one is an Markov analogue of the well-known BBGKY-hierarchy from physics.
 Moreover for any $T > 0$ the evolution $[0,T) \ni t \longmapsto k_{\mu_t^{\e}} \in \K_{\beta(\alpha,T)}$
 is continuously differentiable and the unique classical solution to \eqref{CORR:00} in $\K_{\beta(\alpha,T)}$,
 where $\beta(\alpha,T) = (\beta^+(\alpha,T), \alpha^-)$ with $\beta^+(T,\alpha) = \alpha^+ + (\lambda + \| a^+ \|_{L^1})T$.
\end{Remark}
Here and below we let $E_{b^-}(\eta^+,\eta^-) := \sum_{x \in \eta^+}\sum_{y \in \eta^-}b^-(x-y)$.
The next statement is an extension of \cite{KK16} and gives bounds on the correlation function evolution.
\begin{Theorem}\label{APRIORI}
 Let $\alpha \in \R^2$ be such that 
 $\alpha^- \in [\alpha_*^-, \alpha^{*,-})$, $\alpha^+ > \alpha_*^+$ and take $\mu_0 \in \mathcal{P}_{\alpha}$. 
 Let $(\mu_t^{\e})_{t \geq 0} \subset \mathcal{P}$ be the corresponding evolution of states and $(k_{\mu_t^{\e}})_{t \geq 0}$ its evolution of correlation functions.
 Then the following assertions hold.
 \begin{enumerate}
  \item[(a)] Case $m \leq \| a^+ \|_{L^1}$. 
  \\ If $\lambda = 0$, take any $\delta > 0$ and $\alpha_{\delta}^+ = \alpha^+$.
  If $\lambda > 0$, take any $\delta \in (0,\lambda + m)$ and $\alpha_{\delta}^+ \geq \alpha^+$ with $\alpha_{\delta}^+ > \log\left( \frac{\lambda}{\delta} \right)$. 
  Then for any $\eta \in \Gamma_0^2$ and $t \geq 0$
  \[
   k_{\mu_t^{\e}}(\eta) \leq \Vert k_{\mu_0} \Vert_{\K_{\alpha}} e^{\alpha_{\delta}^+|\eta^+|}e^{\alpha^-|\eta^-|} e^{\left( \| a^+ \|_{L^1} + \delta - m\right)|\eta^+|t}e^{-g t E_{b^-}(\eta^+,\eta^-)}.
  \]
  \item[(b)] Case $m > \| a^+ \|_{L^1}$.
  \\ Let $\delta \in (0, m - \| a^+ \|_{L^1})$. If $\lambda = 0$, take $\alpha_{\delta}^+ = \alpha^+$.
  If $\lambda > 0$, take $\alpha_{\delta}^+ \geq \alpha^+$ such that $\alpha_{\delta}^+ \geq \log\left( \frac{\lambda}{m - \| a^+ \|_{L^1} - \delta} \right)$.
  Then for any $\eta \in \Gamma_0^2$ and $t \geq 0$
  \[
   k_{\mu_t^{\e}}(\eta) \leq \Vert k_{\mu_0} \Vert_{\K_{\alpha}} e^{\alpha_{\delta}^+|\eta^+|}e^{\alpha^-|\eta^-|} e^{-g t E_{b^-}(\eta^+,\eta^-)}e^{-\delta t}.
  \]
 \end{enumerate}
\end{Theorem}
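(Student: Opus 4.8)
The plan is to derive the two estimates from the correlation-function reformulation. By Theorem~\ref{TH:03} and the Remark following it, for every $T>0$ the map $[0,T)\ni t\longmapsto k_{\mu_t^\e}$ is the unique classical solution of $\partial_t k_t=L_\e^\Delta k_t$ in $\K_{\beta(\alpha,T)}$ with $\beta^+(\alpha,T)=\alpha^++(\lambda+\|a^+\|_{L^1})T$; moreover, being a genuine correlation function, $k_{\mu_t^\e}$ is non-negative and positive definite in the sense of Lenard. It therefore suffices to prove the pointwise bounds for this classical solution. Reading off $L_\e^\Delta=L^{S,\Delta}+\tfrac1\e L^{E,\Delta}$, the operator $L^{S,\Delta}$ consists of the $\gamma^+$-diagonal death $-(m|\eta^+|+E_{a^-}(\eta^+)+gE_{b^-}(\eta^+,\eta^-))k(\eta)$, where $E_{a^-}(\eta^+):=\sum_{x\in\eta^+}\sum_{y\in\eta^+\backslash x}a^-(x-y)$; the two non-positive off-diagonal death terms $-\sum_{x\in\eta^+}\int a^-(x-y)k(\eta^+\cup y,\eta^-)dy$ and $-g\sum_{x\in\eta^+}\int b^-(x-y)k(\eta^+,\eta^-\cup y)dy$; and the birth part $B^\Delta k(\eta)=\sum_{y\in\eta^+}\sum_{x\in\eta^+\backslash y}a^+(x-y)k(\eta^+\backslash y,\eta^-)+\sum_{y\in\eta^+}\int a^+(x-y)k((\eta^+\backslash y)\cup x,\eta^-)dx$, while $\tfrac1\e L^{E,\Delta}$ acts only in $\eta^-$.

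The first half of the argument produces the factor $e^{-gtE_{b^-}(\eta^+,\eta^-)}$. The diagonal killing $-gE_{b^-}(\eta^+,\eta^-)k(\eta)$ together with the off-diagonal term $-g\sum_{x\in\eta^+}\int b^-(x-y)k(\eta^+,\eta^-\cup y)dy$ is exactly the correlation-function dual, in the variable $\eta^-$ with $\eta^+$ a parameter, of the multiplication of observables of $\gamma^-$ by $-gE_{b^-}(\eta^+,\gamma^-)$; adjoined to $\tfrac1\e L^{E,\Delta}$ it is the dual of the Glauber semigroup of the environment \emph{killed} at rate $gE_{b^-}(\eta^+,\cdot)$. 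I would group the $\gamma^+$-diagonal death, this $b^-$-interaction block, and $\tfrac1\e L^{E,\Delta}$ into a ``free'' evolution $Q_\e(t)$ — positivity-preserving on the cone of correlation functions, being the dual of a sub-Markov semigroup, and with the $\gamma^+$-factor $e^{-t(m|\eta^+|+E_{a^-}(\eta^+))}$ pulling out of the $\eta^-$-part — and expand $k_t$ by Duhamel around $Q_\e$, dropping the surviving (non-positive) $a^-$-off-diagonal death term for an upper bound. The environment input is: under condition (E), i.e.\ $ze^{-\alpha^-}C_\psi(\alpha^-)<1$ for $\alpha^-\in[\alpha_*^-,\alpha^{*,-})$, the Poissonian weight $e^{\alpha^-|\eta^-|}$ — and the reweighted ones $e^{\alpha^-|\eta^-|}e^{-sE_{b^-}(\zeta^+,\eta^-)}$ that appear along the Duhamel iteration, the supersolution property being stable under decreasing the activity — are supersolutions for $\tfrac1\e L^{E,\Delta}$, so that $Q_\e(t)$ applied to the test weight $\|k_{\mu_0}\|_{\K_\alpha}e^{\alpha_\delta^+|\cdot^+|}e^{\alpha^-|\cdot^-|}$ is dominated by $\|k_{\mu_0}\|_{\K_\alpha}e^{\alpha_\delta^+|\eta^+|}e^{\alpha^-|\eta^-|}e^{-t(m|\eta^+|+E_{a^-}(\eta^+))}e^{-gtE_{b^-}(\eta^+,\eta^-)}$, the killing contributing precisely the factor $e^{-gtE_{b^-}(\eta^+,\eta^-)}$, and — since $\tfrac1\e L^{E,\Delta}$ only makes these weights decrease — with a bound uniform in $\e>0$.

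It then remains to absorb the birth part $B^\Delta$, which acts only on $\eta^+$ and hence leaves the $e^{-gtE_{b^-}}$-bookkeeping untouched; this is the two-component analogue of the estimate in \cite{KK16}. Since $B^\Delta$ only removes or relocates one point of $\eta^+$, its relocating part contributes at most $\|a^+\|_{L^1}|\eta^+|$ per unit weight, while its removing part $\sum_{y\in\eta^+}\sum_{x\in\eta^+\backslash y}a^+(x-y)(\cdots)$ contributes, by the stability condition (S) (i.e.\ \eqref{EQ:07}), at most $e^{-\alpha_\delta^+}(\vartheta E_{a^-}(\eta^+)+\lambda|\eta^+|)$ per unit weight. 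Since $\alpha_\delta^+\ge\alpha^+>\alpha_*^+\ge\log\vartheta$ by \eqref{EQ:13}, the $\vartheta$-term is $\le E_{a^-}(\eta^+)$ and is absorbed by the diagonal death $-E_{a^-}(\eta^+)$; and the choice $\alpha_\delta^+\ge\log(\lambda/\delta)$ in case (a) — resp.\ $\alpha_\delta^+\ge\log(\lambda/(m-\|a^+\|_{L^1}-\delta))$ in case (b) — makes $e^{-\alpha_\delta^+}\lambda\le\delta$, resp.\ $\le m-\|a^+\|_{L^1}-\delta$. Collecting the terms proportional to $|\eta^+|$, one verifies in case (a) that $\|k_{\mu_0}\|_{\K_\alpha}e^{(\alpha_\delta^++(\|a^+\|_{L^1}+\delta-m)t)|\eta^+|}e^{\alpha^-|\eta^-|}e^{-gtE_{b^-}(\eta^+,\eta^-)}$ is a supersolution for the reduced equation, which, with a comparison argument, gives the stated bound; in case (b) the surplus $m-\|a^+\|_{L^1}-\delta>0$ in the stability estimate turns, by the same computation, into a genuine exponential decay in $t$. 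The value $\eta^+=\emptyset$ reduces everything to the pure environment bound $k_{\mu_t^\e}(\emptyset,\eta^-)\le\|k_{\mu_0}\|_{\K_\alpha}e^{\alpha^-|\eta^-|}$, i.e.\ the supersolution statement for $\tfrac1\e L^{E,\Delta}$ used above.

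The main obstacle is that this is not a one-line supersolution bound: $L_\e^\Delta$ is not quasi-monotone — the off-diagonal death terms carry a minus sign, the environment block carries the prefactor $\e^{-1}$ — and a direct check shows that the claimed right-hand side $W_t$ is \emph{not} a supersolution of $L_\e^\Delta$, since $W_t(\eta^+\backslash y,\eta^-)/W_t(\eta)$ contains the factor $e^{gt\sum_{z\in\eta^-}b^-(y-z)}\ge 1$. This is why one must use the scales-of-Banach-spaces machinery of \cite{FK17,KK16,F16a} and work within the cone of correlation functions (equivalently, on the dual quasi-observable side): only there is it legitimate to discard the negative off-diagonal terms, and only there is the ``free'' environment-plus-killing block a genuine sub-Markov evolution whose domination on the reweighted Poissonian weights — uniformly in $\e$, which is exactly what condition (E) buys and mere ergodicity of the environment would not — manufactures the $e^{-gtE_{b^-}}$-factor. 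Checking this $\e$-uniform domination, and making the weighted-supremum estimates rigorous (by carrying them out in the larger space $\K_{\beta(\alpha,T)}$, where $k_{\mu_t^\e}$ is classical, and passing to the limit, as in \cite{KK16,F16a}), is the bulk of the technical work.
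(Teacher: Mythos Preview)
Your high-level strategy --- drop the negative off-diagonal death terms by exploiting positive definiteness of $k_{\mu_t^\e}$ and compare with a positivity-preserving auxiliary evolution --- is indeed the paper's. The execution, however, is different, and your attempt to extract the extra factor $e^{-gtE_{b^-}}$ contains a genuine gap.

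The paper's route is shorter and more direct. It does not build a killed-Glauber block; it invokes the ready-made comparison evolution $Q_\e^\Delta(t)$ of Lemma~\ref{TH:02}, generated by $A^\Delta+\tfrac1\e L^{\Delta,E}+B_0^\Delta$ (with $B_0^\Delta$ obtained from $B^\Delta$ by deleting the two negative integral terms), for which Lemma~\ref{TH:02}(d) already gives $U_\e^\Delta(t)k\le Q_\e^\Delta(t)k$ on positive definite $k$. The comparison function is the plain product weight
\[
 r_t(\eta)=e^{\alpha_\delta^+|\eta^+|}\,e^{\alpha^-|\eta^-|}\,e^{(\|a^+\|_{L^1}+\delta-m)|\eta^+|t},
\]
\emph{without} the factor $e^{-gtE_{b^-}(\eta^+,\eta^-)}$. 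One checks directly that the defect $N_s:=(A^\Delta+\tfrac1\e L^{\Delta,E}+B_0^\Delta)r_{S_n+s}-\partial_s r_{S_n+s}\le 0$, using $L^{\Delta,E}r\le 0$, condition~(S), $\vartheta e^{-\alpha_\delta^+}<1$ and $\lambda e^{-\alpha_\delta^+}<\delta$; the term $-gE_{b^-}(\eta^+,\eta^-)r$ that appears in $N_s$ is simply discarded as $\le 0$. This yields $Q_\e^\Delta(t)r_{S_n}\le r_{S_n+t}$, and the estimate follows by induction along the time mesh $(S_n)_{n\ge 0}$ of Section~4. In particular, the paper's written proof establishes only the product bound without $e^{-gtE_{b^-}}$; the sharper factor in the statement is not produced by the argument given there.

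The gap in your proposal is the sentence ``$B^\Delta$ acts only on $\eta^+$ and hence leaves the $e^{-gtE_{b^-}}$-bookkeeping untouched''. This contradicts the obstacle you correctly flag two paragraphs later: $E_{b^-}(\eta^+,\eta^-)$ depends on $\eta^+$, so the birth-removing part $\sum_{y\in\eta^+}\sum_{x\in\eta^+\setminus y}a^+(x-y)\,k(\eta^+\setminus y,\eta^-)$ couples the weight at $\eta$ to the weight at $(\eta^+\setminus y,\eta^-)$, and the ratio carries the factor $e^{gt\sum_{z\in\eta^-}b^-(y-z)}\ge 1$, unbounded in $|\eta^-|$. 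Your Duhamel grouping does not sidestep this: once the running bound carries $e^{-gsE_{b^-}}$, every application of the positive-signed birth-removing term reproduces the blow-up, and neither positive definiteness nor the sub-Markov interpretation of the killed-Glauber block helps, since the offending term is not one of the negative off-diagonal contributions one is entitled to drop. If you abandon the $e^{-gtE_{b^-}}$ factor and take $r_t$ as above, your scheme collapses to the paper's; if you insist on that factor, a genuinely new idea is required.
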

Assertion (a) shows that the evolution of correlation functions constructed in a scale of Banach spaces is, in general, not localized in one Banach space
but belongs to larger ones when $t$ increases.
The additional factor $e^{-tg E_{b^-}(\eta^+,\eta^-)}$ reflects the interactions with the environment.
\begin{Remark}\label{REMARK:00}
 It is worth to mention that both statements include also the case $\e = \infty$ with $\frac{1}{\infty} := 0$.
 Such a choice for $\e$ describes the BDLP dynamics interacting with a stationary environment. 
 In particular, by taking $g = 0$ we recover the state evolution for the isolated BDLP dynamics obtained in \cite{KK16}.
\end{Remark}
Below we give estimates on the state evolution in terms of Lyapunov functions.
Let us assume the following condition.
\begin{enumerate}
 \item[(L)] There exists an integrable, symmetric function $e: \R^d \longrightarrow (0,1]$ such that $e(x+y) \leq \frac{e(x)}{e(y)}$ holds and
 \begin{align}\label{EQ:29}
   \left \Vert \frac{a^+}{e} \right \Vert_{L^1} < \infty, \ \ \ \ \left \Vert \frac{a^+}{e} \right \Vert_{L^{\infty}} < \infty.
 \end{align}
 Moreover, suppose that $\psi$ is integrable.
\end{enumerate}
The following is our main guiding example.
\begin{Example}
 Take $e(x) := e^{- \delta|x|}$ for some $\delta > 0$.
 Then \eqref{EQ:29} holds, provided $a^+(x) \leq c e^{\delta|x|}$ for some constant $c > 0$ and
 $\int_{\R^d}a(x)e^{\delta |x|} dx < \infty$.
\end{Example}
Let $\kappa \in (0,d)$ and set
\[
 \Xi(x,y) := e(x)e(y) \frac{1 + |x-y|^{\kappa}}{|x-y|^{\kappa}}, \ \ x \neq y.
\]
The particular choice of $\kappa$ implies $\Xi \in L^1(\R^d \times \R^d)$. Let $\mathbb{V}: \Gamma^2 \longrightarrow [0,\infty]$ be given by
\[
 \mathbb{V} = V_0^+ + V_0^- + V_1^+ + V_1^- + W
\]
with $V_0^{\pm}(\gamma) = \sum_{x \in \gamma^{\pm}}e(x)$, $V_1^{\pm}(\gamma) = \frac{1}{2}\sum_{x \in \gamma^{\pm}}\sum_{y \in \gamma^{\pm}\backslash x}\Xi(x,y)$
and $W(\gamma^+, \gamma^-) = \sum_{x \in \gamma^+}\sum_{y \in \gamma^-}\Xi(x,y)$.
Then we prove the following.
\begin{Theorem}\label{LYP:TH:01}
 Suppose that (L) holds. Then there exists $\Gamma_{\infty}^2 \subset \Gamma^2$ described in terms of $\mathbb{V}$ and the interaction rates $a^{\pm},b^-$
 such that $\mu(\Gamma_{\infty}^2) = 1$ holds for any $\mu \in \mathcal{P}$ and $(L_{\e}\mathbb{V})(\gamma)$ is well-defined for any $\gamma \in \Gamma_{\infty}^2$.
 Moreover the following properties hold:
 \begin{enumerate}
  \item[(a)] There exists a constant $c_{\e} > 0$ such that
  \begin{align*}
   (L_{\e}\mathbb{V})(\gamma) &\leq c_{\e} \mathbb{V}(\gamma) + \frac{z}{\e}\Vert e \Vert_{L^1}, \ \ \gamma \in \Gamma_{\infty}^2.
  \end{align*}
  \item[(b)] Let $(\mu_t^{\e})_{t \geq 0}$ be the evolution of states obtained from \eqref{EQ:05}. Then
  \begin{align*}
   \langle \mathbb{V}, \mu_t^{\e} \rangle \leq \left( \langle \mathbb{V}, \mu_0 \rangle + t\frac{z}{\e}\Vert e \Vert_{L^1} \right)\exp\left(c_{\e} t \right), \ \ t \geq 0.
  \end{align*}
\end{enumerate}
\end{Theorem}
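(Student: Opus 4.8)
The plan is to follow the classical recipe for Lyapunov-function estimates for birth-and-death generators, adapted to the two-component setting. First I would compute $(L_{\e}\mathbb{V})(\gamma)$ termwise, writing $\mathbb{V} = V_0^+ + V_0^- + V_1^+ + V_1^- + W$ and applying $L_{\e} = L^S + \frac{1}{\e}L^E$ to each summand. For the death part of $L^S$, removing a point $x \in \gamma^+$ with rate $m + \sum_{y \in \gamma^+\backslash x}a^-(x-y) + g\sum_{y\in\gamma^-}b^-(x-y)$ decreases $V_0^+, V_1^+, W$ and leaves the other terms untouched; since all these rates and all the decrements are nonnegative, the death contribution to $L_{\e}\mathbb{V}$ is $\leq 0$ and can simply be discarded (or, where convenient, the $m$-part kept as a negative term to absorb others). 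For the birth part of $L^S$, inserting a point $y$ with kernel $a^+(x-y)$ summed over $x \in \gamma^+$ increases $V_0^+$ by $e(y)$, $V_1^+$ by $\sum_{z \in \gamma^+}\Xi(z,y)$, and $W$ by $\sum_{z\in\gamma^-}\Xi(z,y)$; integrating in $y$ and using the submultiplicativity $e(x+y)\le e(x)/e(y)$ together with $\Xi(z,y) = e(z)e(y)\frac{1+|z-y|^\kappa}{|z-y|^\kappa}$, I would bound $\int a^+(x-y)e(y)\,dy$ and $\int a^+(x-y)\Xi(z,y)\,dy$ in terms of $\|a^+/e\|_{L^1}$, $\|a^+/e\|_{L^\infty}$, the constant $e(x)$ (resp. $e(x)e(z)$), and $V_0^\pm$; this is exactly where condition (L) enters and the term $c_\e\mathbb{V}(\gamma)$ is generated. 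For $L^E$, the death part of the environment again only decreases $V_0^-, V_1^-, W$ and is discarded; the birth part inserts $x$ with rate $ze^{-E_\psi(x,\gamma^-)}\le z$, contributing $\frac{z}{\e}\int e(x)\,dx = \frac{z}{\e}\|e\|_{L^1}$ to $V_0^-$ and bounded multiples of $V_0^-$ (via $\int e(x)\Xi(x,\cdot)$, using $\Xi\in L^1$ and boundedness of the relevant integrals) to $V_1^-$ and $W$. Collecting everything gives the estimate of part (a) with an explicit (though $\e$-dependent) constant $c_\e$.

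Before doing the computation cleanly I must specify the set $\Gamma_\infty^2$ on which $\mathbb{V}$ and $L_{\e}\mathbb{V}$ make sense. The natural choice is
\[
 \Gamma_\infty^2 = \Bigl\{\gamma\in\Gamma^2 : \mathbb{V}(\gamma)<\infty,\ \textstyle\sum_{x\in\gamma^+}\int_{\R^d}a^+(x-y)\bigl(e(y)+\sum_{z\in\gamma}\Xi(z,y)\bigr)dy<\infty\Bigr\},
\]
i.e. the configurations for which all the (nonnegative) sums appearing in $L_{\e}\mathbb{V}$ converge absolutely. To see that $\mu(\Gamma_\infty^2)=1$ for every $\mu\in\mathcal{P}$, I would take $\mu\in\mathcal{P}_\alpha$ and integrate the relevant functions against $\mu$ using the correlation-function representation \eqref{EQ:30}: $\langle V_0^\pm,\mu\rangle$, $\langle V_1^\pm,\mu\rangle$, $\langle W,\mu\rangle$ are expressed via $k_\mu^{(1,0)},k_\mu^{(0,1)},k_\mu^{(2,0)},k_\mu^{(1,1)},k_\mu^{(0,2)}$ and are finite because $\Xi, e \in L^1$ and $k_\mu$ satisfies a Ruelle bound $k_\mu(\eta)\le C e^{\alpha^+|\eta^+|}e^{\alpha^-|\eta^-|}$; similarly the extra birth term has finite $\mu$-expectation using $a^+/e\in L^1\cap L^\infty$ and integrability of $\Xi$. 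Since a nonnegative function with finite integral is finite $\mu$-a.e., this yields $\mu(\Gamma_\infty^2)=1$. The integrability of $\psi$ required in (L) is used to guarantee $e^{-E_\psi}$ behaves well and, more importantly, that the environment's correlation functions exist and obey the Ruelle bound \eqref{EQ:51} with $\alpha_*^-$, so the $\gamma^-$-sums are controlled.

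Part (b) is then a routine Gronwall argument, but it must be set up carefully because $\mathbb{V}$ is unbounded. Following \cite{KK16, F16a}, the idea is: let $\phi(t) := \langle \mathbb{V},\mu_t^{\e}\rangle$, which a priori may be $+\infty$; introduce cutoffs $\mathbb{V}_N := \mathbb{V}\wedge N$ or, better, the truncation by restricting all sums to points in a ball $B_N$ and capping the number of points, so that $\mathbb{V}_N \in \mathcal{FP}(\Gamma^2)$ and part (ii) of the Fokker–Planck equation \eqref{EQ:05} applies to give $\frac{d}{dt}\langle\mathbb{V}_N,\mu_t^{\e}\rangle = \langle L_{\e}\mathbb{V}_N,\mu_t^{\e}\rangle$. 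Using that $L_{\e}\mathbb{V}_N \le L_{\e}\mathbb{V} + (\text{boundary terms that vanish as }N\to\infty$, or are dominated by $c_\e\mathbb{V}$) and part (a), one gets $\frac{d}{dt}\langle\mathbb{V}_N,\mu_t^{\e}\rangle \le c_\e\langle\mathbb{V},\mu_t^{\e}\rangle + \frac{z}{\e}\|e\|_{L^1}$; then $\langle\mathbb{V}_N,\mu_0\rangle \le \langle\mathbb{V},\mu_0\rangle$ and, after integrating and letting $N\to\infty$ by monotone convergence, a Gronwall-type inequality for $\phi$ yields $\phi(t) \le \bigl(\langle\mathbb{V},\mu_0\rangle + t\frac{z}{\e}\|e\|_{L^1}\bigr)e^{c_\e t}$.

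The main obstacle I anticipate is not the Gronwall step but the bookkeeping in part (a): one has to check that every single contribution of the birth operators — in particular the cross terms $\int a^+(x-y)\Xi(z,y)\,dy$ with $z$ ranging over both $\gamma^+$ and $\gamma^-$ — can be bounded by a constant times one of the five pieces of $\mathbb{V}$, uniformly over configurations, and that the small-distance singularity $|z-y|^{-\kappa}$ in $\Xi$ is absorbed by $a^+/e\in L^\infty$ (so that $\int a^+(x-y)e(y)e(z)\frac{1+|z-y|^\kappa}{|z-y|^\kappa}dy$ splits into an $L^1$-part and an $L^\infty$-against-$L^1$-part). Making the constant $c_\e$ explicit and verifying that all these bounds survive the cutoff procedure in part (b) without spurious boundary contributions is the delicate point; everything else is the standard Lyapunov-function machinery.
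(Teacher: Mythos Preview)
Your plan matches the paper's proof essentially line by line: the paper also decomposes $\mathbb{V}$, applies $L_{\e}$ termwise, discards the nonpositive death contributions, and bounds each birth contribution using the submultiplicativity of $e$ together with $\|a^+/e\|_{L^1}$, $\|a^+/e\|_{L^\infty}$ and the elementary fact $\frac{1+|x-w|^\kappa}{|x-w|^\kappa}\ge 1$; part (b) is likewise reduced to approximation $\mathbb{V}_n \in \mathcal{FP}(\Gamma^2)$ plus Gronwall. One small point to align: your displayed definition of $\Gamma_\infty^2$ only imposes finiteness of the $a^+$-birth sums, but for $L_{\e}\mathbb{V}$ to be well-defined you also need the death-side products $\sum_{x\in\gamma^+}\bigl(\sum_{w}a^-(x-w)\bigr)\bigl(\sum_y\Xi(x,y)\bigr)$ and the analogous $b^-$ terms to be finite (these are exactly what the paper puts into its auxiliary function $\mathbb{M}$); your verbal description ``all sums converge absolutely'' is the right condition, so just make the formula reflect it.
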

Note that these estimates also give corresponding estimates on the correlation functions.
\begin{Remark}
 Proceeding in the same way as in the proof of Theorem \ref{LYP:TH:01} we can also prove estimates with constants independent of $\e > 0$.
 Namely one has
 \[
   \langle V_0^+, \mu_t^{\e} \rangle \leq \langle V_0^+, \mu_0 \rangle \exp\left( \left(\left \Vert \frac{a^+}{e} \right \Vert_{L^1} - m \right)t \right)
 \]
 and $\langle V^+, \mu_t^{\e} \rangle \leq \langle V^+, \mu_0 \rangle \exp\left( a t \right)$, where $V^+ = V_0^+ + V_1^+$ and
 \begin{align*}
   a &= \max\left\{\sup \limits_{w \in \R^d} \int \limits_{\R^d}\frac{a^+(y)}{e(y)} \frac{1 + |y - w|^{\kappa}}{|y - w|^{\kappa}}dy , \left \Vert \frac{a^+}{e} \right \Vert_{L^1} + \int_{\R^d}a^+(y) \frac{1 + |y|^{\kappa}}{|y|^{\kappa}}dy - m \right\}.
 \end{align*}
\end{Remark}

\subsection{Averaged dynamics}
Let $\mathcal{FP}(\Gamma^+) = K(B_{bs}(\Gamma_0^+))$. 
Similarly to $\K_{\alpha}$ we let $\K_{\alpha^+}$ be the Banach space of equivalence classes of functions with finite norm
\[
 \| k \|_{\K_{\alpha^+}} = \esssup \limits_{\eta^+ \in \Gamma_0^+} |k(\eta^+)|e^{- \alpha^+|\eta^+|}.
\]
Then let $\mathcal{P}_{\alpha^+}$ be the collection of all states $\mu$ on $\Gamma^+$ such that $k_{\mu} \in \K_{\alpha^+}$.
Finally let $\mathcal{P}_+ = \bigcup_{\alpha^+ \in \R} \mathcal{P}_{\alpha^+}$.
Below we give existence and uniqueness for the Fokker-Planck equation associated to $\overline{L}$.
It can be deduced from Theorem \ref{TH:03} by taking $\e = \infty$ and $g = 0$ (see Remark \ref{REMARK:00}).
\begin{Theorem}\label{TH:04}
 Let $\alpha^+ > \alpha_*^+$ and $\mu_0^+ \in \mathcal{P}_{\alpha^+}$.
 Then there exists $(\overline{\mu}_t)_{t \geq 0} \subset \mathcal{P}_+$ such that for each $F \in \mathcal{FP}(\Gamma^+)$ the following properties hold
 \begin{enumerate}
  \item[(i)] $\overline{L}F \in L^1(\Gamma^+, d\overline{\mu}_t)$, $t \longmapsto \int_{\Gamma^+}(\overline{L}F)(\gamma^+)d\overline{\mu}_t(\gamma^+)$ is continuous.
  \item[(ii)] $t \longmapsto \int_{\Gamma^+}F(\gamma^+)d\overline{\mu}_t(\gamma^+)$ is continuously differentiable such that \eqref{EQ:28} holds.
 \end{enumerate}
 Given $(\overline{\nu}_t)_{t \geq 0} \subset \mathcal{P}_+$ such that for each $F \in \mathcal{FP}(\Gamma^+)$ 
 \begin{enumerate}
  \item[(i)] $\overline{L}F \in L^1(\Gamma^+, d\overline{\nu}_t)$, $t \longmapsto \int_{\Gamma^+}(\overline{L}F)(\gamma^+)d\overline{\nu}_t(\gamma^+)$ is continuous.
  \item[(ii)] $t \longmapsto \int_{\Gamma^+}F(\gamma^+)d\overline{\nu}_t(\gamma^+)$ is absolutely continuous and \eqref{EQ:28} holds for a.a. $t \geq 0$.
  \item[(iii)] For all $T > 0$ there exists $\beta^+ > \alpha^+$ such that 
  \[
   \sup \limits_{t \in [0,T]}\| k_{\overline{\nu}_t} \|_{\K_{\beta^+}} < \infty.
  \]
 \end{enumerate}
 Then $\overline{\mu}_t = \overline{\nu}_t$ for all $t \geq 0$.
\end{Theorem}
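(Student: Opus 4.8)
The plan is to deduce Theorem \ref{TH:04} directly from Theorem \ref{TH:03} via the specialization noted in Remark \ref{REMARK:00}, by recognizing $\overline{L}$ as an isolated one-component birth-and-death operator of BDLP type. Since $z$ and $\psi$ are translation invariant, the first correlation function $\rho$ of the invariant Gibbs measure $\mu_{\mathrm{inv}}$ on $\Gamma^-$ is a finite non-negative \emph{constant} (finiteness being a standard consequence of (E), e.g.\ $\rho \le e^{\alpha_*^-}$ from the Ruelle bound for $\mu_{\mathrm{inv}}$), so that $\widetilde{m} := m + g\rho$ is well defined with $\widetilde{m} \ge m \ge 0$. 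With this relabeling,
\[
 (\overline{L}F)(\gamma^+) = \sum_{x \in \gamma^+}\Bigl( \widetilde{m} + \sum_{y \in \gamma^+ \backslash x}a^-(x-y)\Bigr)(F(\gamma^+ \backslash x) - F(\gamma^+)) + \sum_{x \in \gamma^+}\int_{\R^d}a^+(x-y)(F(\gamma^+ \cup y) - F(\gamma^+))dy,
\]
which is precisely the generator of the isolated BDLP dynamics on $\Gamma^+$ with constant mortality $\widetilde{m}$, dispersion kernel $a^-$ and branching kernel $a^+$. Condition (S) involves only $a^{\pm}$, $\vartheta$ and $\lambda$ and is therefore untouched; in particular $\vartheta a^- - a^+$ remains Ruelle-stable.

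For existence I would run the one-component transcription of the machinery behind Theorem \ref{TH:03}: the scale $(\K_{\alpha^+})_{\alpha^+\in\R}$, the $K$-transform on $B_{bs}(\Gamma_0^+)$, the formal adjoint $\overline{L}^{\Delta}$ acting on correlation functions, and the weak formulation of \eqref{EQ:28}. All of this is the spin-free version of the two-component constructions of Sections 2 and 4, obtained by simply dropping the $\gamma^-$-variable; because the environment coupling is absent ($g = 0$ inside $\overline{L}$) there is no interaction term $E_{b^-}$ to carry along. One then constructs the semigroup generated by $\overline{L}^{\Delta}$ on $(\K_{\alpha^+})$ exactly as in the proof of Theorem \ref{TH:03} with $\e = \infty$, $g = 0$ and $m$ replaced by $\widetilde{m}$ (cf.\ \cite{KK16}), obtaining a continuously differentiable evolution $[0,T) \ni t \mapsto k_{\overline{\mu}_t} \in \K_{\beta^+(\alpha^+,T)}$ with $\beta^+(\alpha^+,T) = \alpha^+ + (\lambda + \|a^+\|_{L^1})T$, checking that each $k_{\overline{\mu}_t}$ is positive definite in the sense of Lenard (hence the correlation function of a genuine state $\overline{\mu}_t \in \mathcal{P}_+$), and verifying (i)--(ii) by transporting \eqref{CORR:00} through the $K$-transform. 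The one arithmetic point is that the admissibility threshold for mortality $\widetilde{m}$ is $\widetilde{\alpha}_*^+ = \log(\max\{\lambda/(\lambda + \widetilde{m}), \vartheta\}) \le \alpha_*^+$ (and $=\log\vartheta\le\alpha_*^+$ when $\lambda = 0$), since $\widetilde m \ge m$; hence the hypothesis $\alpha^+ > \alpha_*^+$ already guarantees $\alpha^+ > \widetilde{\alpha}_*^+$ and the construction applies for the given $\alpha^+$.

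For uniqueness I would copy the uniqueness argument of Theorem \ref{TH:03}: given a second solution $(\overline{\nu}_t)$ satisfying (i)--(iii), property (iii) confines its correlation functions to a fixed Banach space $\K_{\beta^+}$ on each $[0,T]$; one tests \eqref{EQ:28} against $K$-transforms of a suitable dense family in $B_{bs}(\Gamma_0^+)$, integrates the resulting weak identity against the backward evolution generated by $\overline{L}^{\Delta}$ on the scale, and concludes $k_{\overline{\nu}_t} = k_{\overline{\mu}_t}$ for all $t$, whence $\overline{\nu}_t = \overline{\mu}_t$ since a state is determined by its correlation function. I do not anticipate a serious obstacle: the entire argument is a verbatim one-component version of the two-component proofs, and the single modification ($m$ replaced by $\widetilde{m}$) only relaxes the admissibility condition. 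The only place asking for a word of care is the reduction step itself, namely the identification of $\overline{L}$ as a bona fide constant-mortality BDLP generator — that is, that $\rho$ is a finite constant — which is exactly where translation invariance of $\mu_{\mathrm{inv}}$ and condition (E) are used.
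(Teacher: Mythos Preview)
Your proposal is correct and matches the paper's approach: the paper itself states just before Theorem \ref{TH:04} that it ``can be deduced from Theorem \ref{TH:03} by taking $\e = \infty$ and $g = 0$ (see Remark \ref{REMARK:00}),'' i.e.\ by recognizing $\overline{L}$ as a one-component BDLP operator and rerunning the construction. Your additional observation that the effective mortality is $\widetilde m = m + g\rho \ge m$, whence the threshold $\widetilde{\alpha}_*^+ \le \alpha_*^+$ so that the hypothesis $\alpha^+ > \alpha_*^+$ already suffices, is a useful clarification the paper leaves implicit.
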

A priori estimates on the evolution of correlation functions is given below.
\begin{Theorem}\label{APRIORI1}
 Let $\alpha^+ > \alpha_*^+$, $\mu_0^+ \in \mathcal{P}_{\alpha^+}$ and denote by 
 $(\overline{\mu}_t)_{t \geq 0} \subset \mathcal{P}_+$ be the corresponding evolution of states with correlation functions $k_{\overline{\mu}_t}$.
 Then the following assertions hold.
 \begin{enumerate}
  \item[(a)] Case $m + g\rho \leq \| a^+ \|_{L^1}$.
  \\ If $\lambda = 0$, take $\delta > 0$ and $\alpha_{\delta}^+ = \alpha^+$.
  If $\lambda > 0$, take $\delta \in (0,\lambda + m + g \rho)$ and $\alpha_{\delta}^+ \geq \alpha^+$ with $\alpha_{\delta}^+ > \log\left( \frac{\lambda}{\delta} \right)$. 
  Then
  \[
   k_{\overline{\mu}_t}(\eta^+) \leq \Vert k_{\mu_0^+} \Vert_{\K_{\alpha^+}}e^{\alpha_{\delta}^+|\eta^+|} e^{\left( \| a^+ \|_{L^1} + \delta - m - g \rho\right)|\eta^+|t}, \ \ \eta^+ \in \Gamma_0, \ t \geq 0.
  \]
  \item[(b)] Case $\| a^+ \|_{L^1} < m + g \rho$.
  \\ Let $\delta \in (0, m + g \rho - \| a^+ \|_{L^1})$. If $\lambda = 0$, take $\alpha_{\delta}^+ = \alpha^+$.
  If $\lambda > 0$, take $\alpha_{\delta}^+ \geq \alpha^+$ such that $\alpha_{\delta}^+ \geq \log\left( \frac{\lambda}{m + g \rho - \| a^+ \|_{L^1} - \delta} \right)$. Then
  \[
   k_{\overline{\mu}_t}(\eta) \leq \Vert k_{\mu_0^+} \Vert_{\K_{\alpha^+}}e^{\alpha_{\delta}^+|\eta^+|} e^{- \delta t}, \ \ \eta^+ \in \Gamma_0, \ t \geq 0.
  \]
 \end{enumerate}
\end{Theorem}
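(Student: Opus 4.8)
The statement to prove is Theorem \ref{APRIORI1}, the a priori bounds on the correlation function evolution $k_{\overline{\mu}_t}$ for the averaged dynamics. The plan is to deduce it directly from Theorem \ref{APRIORI} by the observation already recorded in Remark \ref{REMARK:00}: the averaged generator $\overline{L}$ is precisely the generator $L^S$ of the joint dynamics in which the environment has been frozen into the Gibbs measure $\mu_{\mathrm{inv}}$, i.e.\ formally the case $\e = \infty$, and in which the coupling term $g \sum_{y \in \gamma^-} b^-(x-y)$ has been replaced by its average $g\rho$ against $\mu_{\mathrm{inv}}$. Concretely, $\overline{L}$ has exactly the structure of $L^S$ but with mortality constant $m$ replaced by $m + g\rho$ and with the environment-interaction term absent (equivalently $b^- = 0$, so $E_{b^-} \equiv 0$). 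Hence the one-component analogue of Theorem \ref{APRIORI} applies verbatim with the substitution $m \rightsquigarrow m + g\rho$, and this is exactly what produces the two cases $m + g\rho \lessgtr \|a^+\|_{L^1}$ in the statement, with the factor $e^{-gtE_{b^-}(\eta^+,\eta^-)}$ dropping out because $E_{b^-} \equiv 0$.

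First I would make the reduction precise. Theorem \ref{TH:04} guarantees that $(\overline{\mu}_t)_{t\geq 0}$ exists and that its correlation functions solve (in the weak sense of the Remark following Theorem \ref{TH:03}) the hierarchy $\partial_t k_t = \overline{L}^\Delta k_t$, where $\overline{L}^\Delta$ is the operator on $\K_{\alpha^+}$ dual to $\overline{L}$ under the $K$-transform pairing \eqref{EQ:30} (restricted to the single component $\Gamma^+$). Writing out $\overline{L}^\Delta$ explicitly in terms of the coordinate functions $k^{(n)}$, one sees it is identical to the operator $L^\Delta_{\e}$ restricted to the '+'-component (with $\e=\infty$, so the $\frac1\e L^E$ part vanishes) once $m$ is replaced by $m+g\rho$ and the $g b^-$-term is dropped. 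Then I would run the same Gronwall/iteration argument as in the proof of Theorem \ref{APRIORI}: one tests the evolution against functions $G \in B_{bs}(\Gamma_0^+)$, uses the combinatorial integration-by-parts formula \eqref{IBP}, and estimates the gain of the birth part $\int a^+(x-y)\cdots dy$ (which contributes $\|a^+\|_{L^1}$ per particle plus the $\lambda$ from the stability condition \eqref{EQ:07}) against the loss of the death part (which now contributes $m + g\rho$ per particle plus the dispersion term $\sum a^-$, the latter only helping). This yields the differential inequality whose integration gives, in case (a), the exponential factor $e^{(\|a^+\|_{L^1} + \delta - m - g\rho)|\eta^+|t}$, and in case (b), the uniform decay $e^{-\delta t}$; the shift $\alpha^+ \mapsto \alpha_\delta^+$ with the stated logarithmic lower bound is exactly the room needed to absorb the $\lambda|\eta^+|$ term from \eqref{EQ:07} into the Banach-space weight.

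The technical heart — and the only place one must be slightly careful — is the bookkeeping that turns the sign of $\|a^+\|_{L^1} - m - g\rho$ into the dichotomy of cases (a) and (b), together with the correct choice of $\delta$ and $\alpha_\delta^+$. In case (a) one must keep the factor $e^{\alpha_\delta^+ |\eta^+|}$ with $\alpha_\delta^+ > \log(\lambda/\delta)$ so that $\lambda e^{-\alpha_\delta^+} < \delta$, which makes the stability correction subcritical relative to the $\delta$-slack; in case (b) the decay rate $\delta$ must be chosen inside $(0, m+g\rho - \|a^+\|_{L^1})$ and $\alpha_\delta^+$ large enough that $\lambda e^{-\alpha_\delta^+} \le m + g\rho - \|a^+\|_{L^1} - \delta$, so that the net per-particle rate stays strictly negative. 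This is a direct transcription of the corresponding estimates in \cite{KK16} and in the proof of Theorem \ref{APRIORI}, and I expect no genuine obstacle: the main thing to verify is that $\rho$, being the (constant) first correlation function of $\mu_{\mathrm{inv}}$, enters $\overline{L}$ exactly additively to $m$, so that every inequality used for $L^S$ carries over with $m$ replaced by $m + g\rho$ and no other change. I would therefore present the proof as: (1) identify $\overline{L}^\Delta$ with the one-component, $b^-=0$, $m \mapsto m+g\rho$ specialization of $L^\Delta_{\e}$; (2) invoke Theorem \ref{APRIORI} in that specialization, noting $E_{b^-}\equiv 0$; (3) read off the two cases. A remark could add that this is consistent with Theorem \ref{APRIORI} via the limiting interpretation $\e=\infty$, $g=0$ of Remark \ref{REMARK:00} applied after the averaging substitution.
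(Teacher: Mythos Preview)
Your proposal is correct and matches the paper's approach: the paper gives no separate proof of Theorem \ref{APRIORI1}, relying instead on exactly the reduction you describe---$\overline{L}$ is a one-component BDLP operator with mortality $m + g\rho$ and no environment coupling, so the argument of Section 5 (Theorem \ref{APRIORI}) applies verbatim with $\e = \infty$, $g = 0$, and $m$ replaced by $m + g\rho$, the factor $e^{-gtE_{b^-}}$ dropping out. Your account of the role of $\alpha_\delta^+$ and the dichotomy between cases (a) and (b) is accurate.
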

Note that the critical value has changed from $m = \| a^+ \|_{L^1}$ (see Theorem \ref{APRIORI}) to $m + g \rho = \| a^+ \|_{L^1}$.
This is, of course, a consequence of the additional competition with particles from the environment.

Below we give similar Lyapunov estimates as for the joint dynamics. They can be deduced from Theorem \ref{LYP:TH:01} by taking $\e = \infty$ and $g = 0$.
Let $V: \Gamma^+ \longrightarrow [0,\infty]$ be given by
\[
 V(\gamma^+) = \sum \limits_{x \in \gamma^+}e(x) + \frac{1}{2}\sum \limits_{x \in \gamma^+}\sum \limits_{y \in \gamma^+ \backslash x}\Xi(x,y)
\]
where $e, \Xi$ are as before. Set 
\[
 \Gamma_{\infty}^+ := \left\{ \gamma^+ \in \Gamma^+ \ | \ V(\gamma^+) < \infty, \ \sum \limits_{x \in \gamma^+}\left( \sum \limits_{w \in \gamma^+ \backslash x}a^-(x-w)\right)\left(\sum \limits_{y \in \gamma^+ \backslash x}\Xi(x,y)\right) < \infty \right\}.
\]
As before we may show that $\mu(\Gamma_{\infty}^+) = 1$ holds for any $\mu \in \mathcal{P}_+$ and, moreover,
$(\overline{L}V)(\gamma^+)$ is well-defined for any $\gamma^+ \in \Gamma_{\infty}^+$.
\begin{Theorem}
 Suppose that $\frac{a^+}{e} \in L^1 \cap L^{\infty}$. Then we can find a constant $c > 0$ such that
 \[
  (\overline{L}V)(\gamma^+) \leq c V(\gamma^+), \ \ \gamma^+ \in \Gamma_{\infty}^+.
 \]
 Let $\mu_0^+ \in \mathcal{P}_{\alpha^+}$ with $\alpha^+ > \alpha_*^+$ 
 and let $(\overline{\mu}_t)_{t \geq 0} \subset \mathcal{P}_+$ be the solution to \eqref{EQ:28}. Then 
 \[
  \langle V, \overline{\mu}_t \rangle \leq \langle V, \mu_0^+ \rangle e^{c t}, \ \ t \geq 0.
 \]
\end{Theorem}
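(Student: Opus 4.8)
The plan is to establish this as a direct specialization of the machinery already developed for the joint dynamics. Since $\overline{L}$ is, up to renaming, the Markov operator $L^S$ of the BDLP model with death rate $m$ replaced by the effective death rate $m + g\rho$ (and with no environment present, i.e. $g = 0$ in the interaction-with-$\gamma^-$ term), I would first carry out the pointwise computation of $(\overline{L}V)(\gamma^+)$ term by term on $\Gamma_\infty^+$. The Lyapunov function splits as $V = V_0^+ + V_1^+$, so $\overline{L}V = \overline{L}V_0^+ + \overline{L}V_1^+$, and each of the four "rate" blocks of $\overline{L}$ acts on each $V_i^+$. First I would treat the birth part: for $V_0^+$, the branching term produces $\sum_{x\in\gamma^+}\int a^+(x-y)e(y)\,dy$, and using $e(y) \le e(x)/e(x-y)$ (the submultiplicativity in (L), applied as $e(y) = e(x - (x-y)) \le e(x)/e(x-y)$ after symmetry of $e$, or directly) this is bounded by $\|a^+/e\|_{L^1}\sum_{x\in\gamma^+}e(x) = \|a^+/e\|_{L^1} V_0^+(\gamma^+)$. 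For $V_1^+$, the branching term contributes $\sum_{x\in\gamma^+}\int a^+(x-y)\sum_{w\in\gamma^+\setminus x}\Xi(w,y)\,dy$ (plus a lower-order self-term), and here one bounds $\Xi(w,y)$ by $\Xi(w,x)$ times a kernel in $y$: writing $\Xi(w,y) = e(w)e(y)\frac{1+|w-y|^\kappa}{|w-y|^\kappa}$ and $e(y)\le e(x)/e(x-y)$ one reduces to $\sup_w \int \frac{a^+(y)}{e(y)}\frac{1+|y-w|^\kappa}{|y-w|^\kappa}\,dy$ (finite by the choice $\kappa\in(0,d)$ together with $a^+/e\in L^1\cap L^\infty$), giving a multiple of $V_1^+$. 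The death part of $\overline{L}$ is negative in the relevant places: the $m + g\rho$ piece gives $-(m+g\rho)V_0^+$ from $V_0^+$ and $-(m+g\rho)\cdot 2V_1^+$ from $V_1^+$, which only helps; the competitive death term $\sum_{w}a^-(x-w)$ acting on $V_1^+$ produces exactly the second defining condition of $\Gamma_\infty^+$, so it is finite and nonnegative and can simply be dropped (it has the favorable sign). Collecting terms yields $(\overline{L}V)(\gamma^+) \le cV(\gamma^+)$ with an explicit $c$ of the form appearing in the Remark preceding the theorem.

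Second, having the pointwise bound $\overline{L}V \le cV$ on a set of full measure for every $\mu\in\mathcal{P}_+$, I would pass to the state evolution. The idea is the standard Lyapunov/Gronwall argument: one wants $\frac{d}{dt}\langle V,\overline{\mu}_t\rangle \le c\langle V,\overline{\mu}_t\rangle$, whence $\langle V,\overline{\mu}_t\rangle \le \langle V,\mu_0^+\rangle e^{ct}$ by integration. The subtlety is that $V\notin\mathcal{FP}(\Gamma^+)$, so \eqref{EQ:28} does not literally apply to $F = V$; one must approximate. I would introduce the truncations $V_N := V\wedge N$, or better localize $V$ both in the number of points and in space so that $V_N \in \mathcal{FP}(\Gamma^+)$ (equivalently its $K$-preimage lies in $B_{bs}(\Gamma_0^+)$), apply the Fokker-Planck equation to $V_N$, control $\overline{L}V_N$ by $cV_N$ plus an error that vanishes as $N\to\infty$ using the a priori correlation-function bound of Theorem~\ref{APRIORI1} (which guarantees $\langle V,\overline{\mu}_t\rangle<\infty$ and provides the integrability needed to justify the monotone/dominated passage to the limit), and then send $N\to\infty$. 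This is exactly the approximation scheme already used for Theorem~\ref{LYP:TH:01}(b), so I would invoke it essentially verbatim, or simply note that the whole statement follows from Theorem~\ref{LYP:TH:01} by setting $\e = \infty$ and $g = 0$ there (as the theorem statement itself indicates), after checking that under those substitutions $\mathbb{V}$ restricted to the $\gamma^+$-variable reduces to $V$, $L_\e$ reduces to $\overline{L}$, and the constant term $\frac{z}{\e}\|e\|_{L^1}$ drops out.

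The main obstacle, such as it is, is the bookkeeping in the pointwise estimate: one has to be careful that every term produced by $\overline{L}$ acting on $V$ is either controlled by a constant times $V$ or has a sign that allows it to be discarded, and in particular that the "mixed" terms (branching acting on $V_1^+$, and competition acting on $V_1^+$) genuinely land in the two finiteness conditions defining $\Gamma_\infty^+$ so that the whole expression $(\overline{L}V)(\gamma^+)$ is well-defined (not $\infty - \infty$) for $\gamma^+\in\Gamma_\infty^+$. The kernel estimate $\sup_w\int\frac{a^+(y)}{e(y)}\frac{1+|y-w|^\kappa}{|y-w|^\kappa}\,dy<\infty$ is the one genuinely quantitative input and relies on splitting the integral over $\{|y-w|\le 1\}$, where $\frac{1+|y-w|^\kappa}{|y-w|^\kappa}\le 2|y-w|^{-\kappa}$ is integrable since $\kappa<d$ and $a^+/e\in L^\infty$, and over $\{|y-w|>1\}$, where the factor is bounded by $2$ and one uses $a^+/e\in L^1$. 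Everything else is routine, and the $N\to\infty$ limit in part two is justified by the already-proven a priori bounds, so I would keep that part brief and refer back to the proof of Theorem~\ref{LYP:TH:01}.
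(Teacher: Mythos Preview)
Your proposal is correct and follows exactly the route the paper indicates: the theorem is stated without its own proof, the preceding remark noting that it ``can be deduced from Theorem~\ref{LYP:TH:01} by taking $\e = \infty$ and $g = 0$.'' Your term-by-term elaboration of the pointwise estimate mirrors the computations carried out in Section~6 for the joint dynamics (in particular the kernel bound $\sup_w\int \frac{a^+(y)}{e(y)}\frac{1+|y-w|^\kappa}{|y-w|^\kappa}\,dy<\infty$ is exactly the constant $c'$ there), and your treatment of $\langle V,\overline{\mu}_t\rangle$ via truncation plus Gronwall matches the argument for Theorem~\ref{LYP:TH:01}(b).
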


\subsection{Stochastic averaging principle}
For a given state $\mu_0 \in \mathcal{P}$ let $\mu_0^+ \in \mathcal{P}_+$ be the marginal on $\Gamma^+$ defined by
\begin{align*}
 \int \limits_{\Gamma^+}F(\gamma^+)d\mu_0^{+}(\gamma^+) = \int \limits_{\Gamma^2}F(\gamma)d\mu_0(\gamma), \ \ F \in \mathcal{FP}(\Gamma^+).
\end{align*}
The following is our main result on problem (iii).
\begin{Theorem}\label{TH:05}
 Let $\alpha^{*,-}$, $\alpha_*^-$ and $\widetilde{\alpha}_*^+$ be such that \eqref{ENV:COND} and
 \begin{align}\label{EQ:47}
  e^{\alpha^{*,-}}g + e^{- \widetilde{\alpha}_*^+}\lambda \leq \lambda + m
 \end{align}
 hold. Let $\alpha^- \in [\alpha_*^-, \alpha^{*,-})$, $\alpha^+ > \widetilde{\alpha}_*^+$ and $\mu_0 \in \mathcal{P}_{\alpha}$.
 Denote by $(\mu_t^{\e})_{t \geq 0} \subset \mathcal{P}$ the evolution of states given by Theorem \ref{TH:03}
 and by $(\overline{\mu}_t)_{t \geq 0} \subset \mathcal{P}_+$ be the evolution of states given by Theorem \ref{TH:04}
 with initial state $\mu_0^+ \in \mathcal{P}_{\alpha^+}$. Then 
 \[
  \int \limits_{\Gamma^2}F(\gamma^+)d \mu_t^{\e}(\gamma^+,\gamma^-) \longrightarrow \int \limits_{\Gamma^+}F(\gamma^+)d \overline{\mu}_t(\gamma^+), \ \ F \in \mathcal{FP}(\Gamma^+).
 \]
 holds uniformly on compacts w.r.t. $t \geq 0$ as $\e \to 0$.
\end{Theorem}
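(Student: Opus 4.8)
The plan is to work with correlation functions and to combine three ingredients: the $\e$-uniform a priori bounds of Theorem~\ref{APRIORI}, a corrector (Poisson) equation for the fast environment, and the uniqueness part of Theorem~\ref{TH:04}. \emph{Step 1 (reduction and compactness).} First I would note that every $F \in \mathcal{FP}(\Gamma^+)$ has the form $F = KG$ with $G \in B_{bs}(\Gamma_0^+)$, and, writing $F$ on $\Gamma^2$ as $\mathbbm{K}\widetilde{G}$ with $\widetilde{G}(\eta^+,\eta^-) = G(\eta^+)\1_{\{\eta^- = \emptyset\}}$, that
\[
 \int_{\Gamma^2}F(\gamma^+)\,d\mu_t^{\e}(\gamma) = \int_{\Gamma_0^+}G(\eta^+)\,k_{\mu_t^{\e}}(\eta^+,\emptyset)\,d\lambda(\eta^+), \qquad \int_{\Gamma^+}F\,d\overline{\mu}_t = \int_{\Gamma_0^+}G(\eta^+)\,k_{\overline{\mu}_t}(\eta^+)\,d\lambda(\eta^+),
\]
so the assertion reduces to $k_{\mu_t^{\e}}(\cdot,\emptyset) \to k_{\overline{\mu}_t}$ tested against $B_{bs}(\Gamma_0^+)$, uniformly on $[0,T]$ for each $T > 0$. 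By Theorem~\ref{APRIORI}, whose bounds do not depend on $\e$, there is $\beta^+ = \beta^+(T)$ with $\sup_{\e > 0}\sup_{t \in [0,T]}\Vert k_{\mu_t^{\e}}\Vert_{\K_{\beta^+,\alpha^-}} < \infty$; this yields weak-$*$ relative compactness of $\{k_{\mu_t^{\e}}\}_{\e}$ in $\K_{\beta^+,\alpha^-}$ for each $t$, and, since $\frac{d}{dt}\langle F,\mu_t^{\e}\rangle = \langle L^SF,\mu_t^{\e}\rangle$ (because $L^EF = 0$ for $F$ not depending on $\gamma^-$) is bounded uniformly in $\e$ on $[0,T]$, the maps $t \mapsto \langle F,\mu_t^{\e}\rangle$ will be equicontinuous. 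A diagonal argument over a countable $B_{bs}(\Gamma_0^+)$-determining family then produces $\e_n \to 0$ and a family $(\overline{\mu}_t)_{t \ge 0}$ of states on $\Gamma^+$ — the normalization $\overline{k}_t(\emptyset) = 1$ and Lenard positive definiteness pass to the weak-$*$ limit — with $\langle F,\mu_t^{\e_n}\rangle \to \langle F,\overline{\mu}_t\rangle$ uniformly on $[0,T]$ for all $F \in \mathcal{FP}(\Gamma^+)$.

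\emph{Step 2 (averaging via a corrector).} For $F = KG \in \mathcal{FP}(\Gamma^+)$ I would split $L^SF = \overline{L}F + \Psi_F$ with
\[
 \Psi_F(\gamma^+,\gamma^-) = g\sum_{x \in \gamma^+}\Bigl(\sum_{y \in \gamma^-}b^-(x-y) - \rho\Bigr)\bigl(F(\gamma^+\backslash x) - F(\gamma^+)\bigr),
\]
and observe that $\Vert b^-\Vert_{L^1} = 1$ and the definition of the first correlation function give $\int_{\Gamma^-}\Psi_F(\gamma^+,\gamma^-)\,d\mu_{\mathrm{inv}}(\gamma^-) = 0$ for every $\gamma^+$; this mean-zero property is exactly why $g\sum_{y \in \gamma^-}b^-(x-y)$ gets replaced by $g\rho$ in $\overline{L}$. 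Since by (E) the environment evolution is exponentially ergodic towards $\mu_{\mathrm{inv}}$ (see \cite{FK17}), I would introduce the corrector $H_F := \int_0^{\infty}\bigl(e^{sL^E}\Psi_F(\gamma^+,\cdot)\bigr)(\gamma^-)\,ds$ (here $e^{sL^E}$ denotes the environment semigroup, acting only in $\gamma^-$), which solves $L^EH_F = -\Psi_F$, and which — by condition \eqref{EQ:47}, equivalently the smallness \eqref{EQ:01} of $g$ together with $\alpha^+ > \widetilde{\alpha}_*^+$ — together with $L^SH_F$ remains in the class on which Theorem~\ref{APRIORI} gives $\e$-uniform control of $\langle H_F,\mu_s^{\e}\rangle$ and $\langle L^SH_F,\mu_s^{\e}\rangle$ on $[0,T]$. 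Differentiating $s \mapsto \langle \e H_F,\mu_s^{\e}\rangle$ and using $L_{\e} = L^S + \tfrac1{\e}L^E$ and $L^EH_F = -\Psi_F$, I then get
\[
 \int_0^t \langle \Psi_F,\mu_s^{\e}\rangle\,ds = \e\bigl(\langle H_F,\mu_0\rangle - \langle H_F,\mu_t^{\e}\rangle\bigr) + \e\int_0^t \langle L^SH_F,\mu_s^{\e}\rangle\,ds = O(\e)
\]
uniformly on $[0,T]$.

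\emph{Step 3 (identification and conclusion).} Integrating $\frac{d}{dt}\langle F,\mu_t^{\e}\rangle = \langle \overline{L}F,\mu_t^{\e}\rangle + \langle \Psi_F,\mu_t^{\e}\rangle$ over $[0,t]$, letting $\e = \e_n \to 0$, and using Step~1 (which, by the $\e$-uniform bounds, also yields $\langle \overline{L}F,\mu_s^{\e_n}\rangle \to \langle \overline{L}F,\overline{\mu}_s\rangle$) together with Step~2, I would obtain
\[
 \langle F,\overline{\mu}_t\rangle = \langle F,\mu_0^+\rangle + \int_0^t \langle \overline{L}F,\overline{\mu}_s\rangle\,ds, \qquad F \in \mathcal{FP}(\Gamma^+), \ t \ge 0,
\]
which is the weak form of \eqref{EQ:28}. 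Since the limiting correlation functions inherit the $\e$-uniform bounds of Theorem~\ref{APRIORI}, hypothesis (iii) of Theorem~\ref{TH:04} holds, so Theorem~\ref{TH:04} identifies $(\overline{\mu}_t)_{t \ge 0}$ with the evolution in the statement — in particular independently of the chosen subsequence. A standard subsequence argument, combined with the equicontinuity from Step~1, then upgrades this to convergence of the full family $\mu_t^{\e}$, uniformly on compacts in $t$.

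\emph{Main obstacle.} The hard part will be Step~2: turning the state-level exponential ergodicity of the environment from \cite{FK17} into decay estimates for $e^{sL^E}\Psi_F$ in a weighted space of observables compatible with the two-parameter Ruelle bounds of Theorem~\ref{APRIORI}, so that the integral defining $H_F$ converges and — crucially — $L^SH_F$, which carries one extra environmental coupling and hence an additional weight of order $g\,e^{\alpha^{*,-}}$, still pairs finitely with $k_{\mu_s^{\e}}$ uniformly in $\e$ and $s \in [0,T]$. Condition \eqref{EQ:47} is precisely the budget inequality ensuring that this extra cost is absorbed by the death rate, so that the environmental weight stays inside the ergodic window $[\alpha_*^-,\alpha^{*,-})$; without such smallness the corrector cannot be controlled. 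The remaining points — the identity $\int_{\R^d}b^-(x-y)\rho\,dy = \rho$, admissibility of $\overline{L}F$, $H_F$ and $L^SH_F$ in the duality with $\K_{\beta^+,\alpha^-}$, and matching the weak formulation of Theorem~\ref{TH:03} with the classical correlation-function dynamics (see the Remark following Theorem~\ref{TH:03}) — are routine within the framework of Sections~2--6.
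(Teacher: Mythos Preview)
Your strategy---compactness plus the perturbed test function (corrector) method---is a genuinely different route from the paper. The paper never introduces a Poisson equation or passes to weak-$*$ limits of the states. Instead it works entirely on the \emph{predual} side: it rewrites $\widehat{L}_\e = C_\e + B'$ with $C_\e = A - gV + \tfrac1\e\widehat{L}^E$ and $B' = B + gV$ (this is where \eqref{EQ:47} enters, to make $C_\e$ a contraction generator on $\Lb_\alpha$), applies Kurtz's singular perturbation theorem \cite{KURTZ73} to the semigroup of $C_\e$, and then pushes the convergence through the Dyson series $\widehat{U}_\e(t) = T_\e(t) + \sum_n H_n^\e(t)$ term by term. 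Duality with $k_{\mu_t^\e}$ gives the result. Your approach is more probabilistic and conceptually transparent; the paper's is more operator-theoretic and stays inside the scale $\Lb_\alpha$ where all objects are honest $L^1$ functions.

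There is, however, a concrete gap in your Step~2 that is more serious than you indicate. The Fokker--Planck equation \eqref{EQ:05} in Theorem~\ref{TH:03} is only established for $F \in \mathcal{FP}(\Gamma^2) = \mathbbm{K}(B_{bs}(\Gamma_0^2))$, and neither $\Psi_F$ nor $H_F$ lies in this class: already $\sum_{y\in\gamma^-}b^-(x-y)$ has quasi-observable $b^-(x-\cdot)$ which is $L^1\cap L^\infty$ but not compactly supported, and the environment semigroup $e^{sL^E}$ (Glauber-type with potential $\psi$) does not preserve polynomial degree in $\gamma^-$, so $H_F$ is not a polynomial observable at all. Hence you cannot differentiate $s\mapsto \langle \e H_F,\mu_s^\e\rangle$ by quoting Theorem~\ref{TH:03}; you would first have to extend the weak FPE to a class of test functions large enough to contain $H_F$ and $L^S H_F$, or else run the entire corrector argument at the level of correlation functions in $\K_\alpha$---which requires a dual version of the environment ergodicity acting on $\K_{\alpha^-}$-valued observables, not the state-level ergodicity from \cite{FK17} that you cite. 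This is doable, but it is real work and is essentially what the paper circumvents by staying on the $\Lb_\alpha$ side. Note also that your reading of \eqref{EQ:47} as a ``budget inequality'' for the pairing $\langle L^S H_F, k_{\mu_s^\e}\rangle$ does not match its actual role in the paper, where it is used to make the \emph{unperturbed} operator $A - gV + \tfrac1\e\widehat{L}^E$ dissipative on $\Lb_\alpha$; whether the same numerical condition is what you need for your corrector bound is not obvious.
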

\begin{Remark}
 Suppose that $0 < g < (\lambda + m)e^{- \alpha_*^-}$ holds. Then we can find $\alpha^{*,-}, \widetilde{\alpha}_*^+ \in \R$ 
 such that \eqref{ENV:COND} and \eqref{EQ:47} are satisfied.
\end{Remark}

\section{Proof: Theorem \ref{TH:03}}
Here and below let $\e > 0$ be fixed. Note that $\e := \infty$ is allowed, provided we set $\frac{1}{\infty} := 0$.
The proof extends some ideas and techniques developed in \cite{F17WR, KK16, FK17}.
For $\delta > 0$ let $L_{\delta,\e} = L_{\delta}^S + \frac{1}{\e}L_{\delta}^E$ be given by
\begin{align*}
 (L_{\delta}^SF)(\gamma) &= \sum \limits_{x \in \gamma^+}\left( m + \sum \limits_{y \in \gamma^+ \backslash x}a^-(x-y) + g\sum \limits_{y \in \gamma^-}b^-(x-y)\right)(F(\gamma^+ \backslash x, \gamma^-) - F(\gamma))
 \\ &\ \ \ + \sum \limits_{x \in \gamma^+}R_{\delta}(x)\int \limits_{\R^d}a^+(x-y)(F(\gamma^+ \cup y, \gamma^-) - F(\gamma^+, \gamma^-))d y
 \\ (L_{\delta}^EF)(\gamma^-) &= \sum \limits_{x \in \gamma^-}(F(\gamma^- \backslash x) - F(\gamma^-)) + \int \limits_{\R^d}z_{\delta}(x)e^{-E_{\psi}(x,\gamma^-)}(F(\gamma^- \cup x) - F(\gamma^-))d x,
\end{align*}
where $R_{\delta}(x) = e^{- \delta|x|^2}$ and $z_{\delta}(x) := z R_{\delta}(x)$.
Note that $\delta = 0$ corresponds to the original model whereas $\delta > 0$ describes a model with integrable birth rate.
It has the following property. 

Let $\delta > 0$ and $\mu_0 \in \mathcal{P}$ be such that $\mu_0(\Gamma_0^2) = 1$. 
Then one can show that $d\mu_0(\gamma) = \1_{\Gamma_0^2}(\gamma)h_0(\gamma)d\lambda(\gamma)$ holds for some probability density $h_0$ on $\Gamma_0^2$.
Moreover, the corresponding evolution of states exists (it is constructed below) and satisfies 
$d\mu_t^{\delta,\e}(\gamma) = \1_{\Gamma_0^2}(\gamma)h_t^{\delta,\e}(\gamma)d\lambda(\gamma)$ with $\int_{\Gamma_0^2}h_t^{\delta,\e}(\eta)d\lambda(\eta) = 1$.
Roughly speaking Theorem \ref{TH:03} is deduced by taking the limit $\delta \to 0$. 
The rigorous limit transition is performed on the level of correlation functions.

\subsection{Evolution of integrable densities}
We construct an evolution of densities on $L^1(\Gamma_0^2, d\lambda)$ corresponding to the Markov operator $L_{\delta,\e}$ with $\delta > 0$.
For $\eta \in \Gamma_0^2$ let
\[
 D_{\delta,\e}(\eta) := m|\eta^+| + \sum \limits_{x \in \eta^+}\sum \limits_{y \in \eta^+ \backslash x}a^-(x-y) 
 + g\sum \limits_{x \in \eta^+}\sum \limits_{y \in \eta^-}b^-(x-y)
 + \frac{1}{\e}|\eta^-| + \frac{1}{\e}\| z_{\delta}e^{- E_{\psi}(\cdot, \eta^-)}\|_{L^1},
\]
where $\| z_{\delta}e^{- E_{\psi}(\cdot, \eta^-)}\|_{L^1} = \int_{\R^d}z_{\delta}(x)e^{-E_{\psi}(x,\eta^-)}d x$. 
We consider this function as a multiplication operator on $L^1(\Gamma_0^2,d \lambda)$ with domain
$\mathcal{D}_{\delta,\e} = \left \{ h \in L^1(\Gamma_0^2, d\lambda) \ | \ D_{\delta,\e} \cdot h \in L^1(\Gamma_0^2, d\lambda) \right \}$.
Then
\begin{align*}
 (\mathcal{Q}_{\delta,\e}h)(\eta) &= \int \limits_{\R^d}\left( m + \sum \limits_{y \in \eta^+}a^-(x-y) + g \sum \limits_{y \in \eta^-}b^-(x-y)\right)h(\eta^+ \cup x, \eta^-)d x
 \\ &\ \ \ + \sum \limits_{x \in \eta^+} \sum \limits_{y \in \eta^+ \backslash x}R_{\delta}(y)a^+(x-y)h(\eta^+ \backslash y, \eta^-)
 \\ &\ \ \ + \frac{1}{\e}\int \limits_{\R^d}h(\eta^+, \eta^- \cup x)d x + \frac{1}{\e}\sum \limits_{x \in \eta^-}z_{\delta}(x)e^{-E_{\psi}(x,\eta^- \backslash x)}h(\eta^+, \eta^- \backslash x)
\end{align*}
defines a linear operator on $\mathcal{D}_{\delta,\e}$.
Using \eqref{IBP} one can show that for each bounded measurable function $F: \Gamma_0^2 \longrightarrow \R$ and each $h \in \mathcal{D}_{\delta,\e}$ 
\begin{align}\label{EQ:06}
 \int \limits_{\Gamma_0^2}(L_{\delta,\e}F)(\eta)h(\eta)d \lambda(\eta) = \int \limits_{\Gamma_0^2}F(\eta)\left( - D_{\delta,\eta}(\eta)h(\eta) + (\mathcal{Q}_{\delta,\e}h)(\eta)\right)d \lambda(\eta).
\end{align}
\begin{Lemma}\label{PROP:01}
 $(- \mathcal{D}_{\delta,\e} + \mathcal{Q}_{\delta,\e}, \mathcal{D}_{\delta,\e})$ is closable and 
 its closure $(\mathcal{J}_{\delta,\e}, D(\mathcal{J}_{\delta,\e}))$ is the generator of a stochastic semigroup on $L^1(\Gamma_0^2, d\lambda)$. 
 In particular, for each $0 \leq h_0 \in D(\mathcal{J}_{\delta,\e})$ there exists a unique classical solution 
 $0 \leq h_t^{\delta,\e} \subset D(\mathcal{J}_{\delta,\e})$ to
 \[
  \frac{\partial h_t^{\delta,\e}}{\partial t} = \mathcal{J}_{\delta,\e}h_t^{\delta,\e}, \ \ h_{t}^{\delta,\e}|_{t=0} = h_0, \ \ t \geq 0.
 \]
\end{Lemma}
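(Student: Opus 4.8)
The plan is to read Lemma~\ref{PROP:01} as an application of the Kato perturbation theorem for substochastic $C_0$-semigroups on $L^1$ (Kato, Voigt), with the splitting $A:=-\mathcal D_{\delta,\e}$ for the unperturbed part and $B:=\mathcal Q_{\delta,\e}$ for the perturbation, followed by a non-explosion argument to upgrade the result to a stochastic semigroup generated by the closure. First I would note that for $\delta>0$ the function $D_{\delta,\e}$ is finite at every $\eta\in\Gamma_0^2$ (here one uses $a^{\pm},b^-\in L^\infty(\R^d)$ and $R_\delta,z_\delta\in L^1(\R^d)$), so multiplication by $-D_{\delta,\e}$ on its maximal domain $\mathcal D_{\delta,\e}$ generates the positive contraction, hence substochastic, $C_0$-semigroup $(e^{-tD_{\delta,\e}(\cdot)})_{t\ge0}$, and $\mathcal D_{\delta,\e}$ is dense in $L^1(\Gamma_0^2,d\lambda)$. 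The operator $B=\mathcal Q_{\delta,\e}$ is positive (all coefficients $m,a^{\pm},b^-,R_\delta,z_\delta$ are $\ge0$) and, as already noted before the statement, maps $\mathcal D_{\delta,\e}$ into $L^1$, so $D(A)\subseteq D(B)$. Applying \eqref{EQ:06} with the constant test function $F=\1$, for which $L_{\delta,\e}\1=0$, yields
\[
 \int_{\Gamma_0^2}\big(-D_{\delta,\e}h+\mathcal Q_{\delta,\e}h\big)\,d\lambda=0,\qquad h\in\mathcal D_{\delta,\e},
\]
hence the conservativity condition $\int_{\Gamma_0^2}(Ah+Bh)\,d\lambda=0$ and, for $0\le h\in\mathcal D_{\delta,\e}$, the identity $\|\mathcal Q_{\delta,\e}h\|_{L^1}=\|D_{\delta,\e}h\|_{L^1}$. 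By Kato's theorem this gives a smallest substochastic $C_0$-semigroup $(\mathcal U_{\delta,\e}(t))_{t\ge0}$ whose generator $\mathcal G_{\delta,\e}$ extends $(-\mathcal D_{\delta,\e}+\mathcal Q_{\delta,\e},\mathcal D_{\delta,\e})$; in particular the latter is closable.

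The substantive step is to show that $(\mathcal U_{\delta,\e}(t))$ is in fact stochastic and that $\mathcal G_{\delta,\e}$ is exactly the closure $\mathcal J_{\delta,\e}:=\overline{-\mathcal D_{\delta,\e}+\mathcal Q_{\delta,\e}}$ and not a proper extension. The obstacle is that, by the conservativity identity, $\|\mathcal Q_{\delta,\e}(\xi+\mathcal D_{\delta,\e})^{-1}\|=1$ for every $\xi>0$, so the perturbation series for the resolvent of $\mathcal G_{\delta,\e}$ converges only strongly and a honesty/non-explosion argument is required. The plan is to invoke the Thieme--Voigt refinement of Kato's theorem, in the form used in \cite{FKK12, KK16}, whose extra hypothesis is a Lyapunov bound for the pre-dual operator $L_{\delta,\e}$ --- which, by \eqref{EQ:06}, is precisely the $\lambda$-adjoint of $-\mathcal D_{\delta,\e}+\mathcal Q_{\delta,\e}$. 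I would take
\[
 \Phi(\eta):=\log\!\big(e+|\eta^+|\big)+|\eta^-|,\qquad \eta=(\eta^+,\eta^-)\in\Gamma_0^2,
\]
which is finite on $\Gamma_0^2$ and coercive, and (after a routine truncation in $|\eta^{\pm}|$) estimate $L_{\delta,\e}\Phi$ directly: since $t\mapsto\log(e+t)$ is increasing and concave, every death term contributes non-positively and each $+$-birth increment is at most $(e+|\eta^+|)^{-1}$, so using $R_\delta\le1$ and $z_\delta e^{-E_\psi}\le zR_\delta$ one gets the constant bound
\[
 (L_{\delta,\e}\Phi)(\eta)\le\|a^+\|_{L^1}+\frac{z}{\e}\|R_\delta\|_{L^1}=:C_{\delta,\e},\qquad \eta\in\Gamma_0^2,
\]
which is a fortiori a Foster--Lyapunov condition $L_{\delta,\e}\Phi\le C_{\delta,\e}(1+\Phi)$. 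Heuristically this says that $|\eta_t^+|$ is dominated by a Yule-type pure-birth process with linear rate and $|\eta_t^-|$ by an immigration--death process with bounded immigration rate, so no explosion occurs and no mass is lost.

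With the conservativity identity and this Lyapunov bound in hand, the Thieme--Voigt theorem delivers that $(\mathcal U_{\delta,\e}(t))_{t\ge0}$ is a stochastic $C_0$-semigroup and that its generator coincides with $\mathcal J_{\delta,\e}=\overline{(-\mathcal D_{\delta,\e}+\mathcal Q_{\delta,\e},\mathcal D_{\delta,\e})}$. The remaining assertions are then standard semigroup theory: for $0\le h_0\in D(\mathcal J_{\delta,\e})$ the orbit $h_t^{\delta,\e}:=\mathcal U_{\delta,\e}(t)h_0$ stays in $D(\mathcal J_{\delta,\e})$, is continuously differentiable with $\tfrac{d}{dt}h_t^{\delta,\e}=\mathcal J_{\delta,\e}h_t^{\delta,\e}$, is the unique classical solution of that abstract Cauchy problem, and is non-negative because the semigroup is positive; moreover $\|h_t^{\delta,\e}\|_{L^1}=\|h_0\|_{L^1}$ by stochasticity, which gives the normalization $\int_{\Gamma_0^2}h_t^{\delta,\e}\,d\lambda=1$ used afterwards. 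The main obstacle throughout is the second step: ruling out that the minimal substochastic semigroup loses mass or that a proper extension of the closure appears, i.e. controlling the only unbounded (but merely linearly growing) birth mechanism, the $+$-branching --- which is exactly what the \emph{concave} choice of Lyapunov function in the $\gamma^+$-variable is designed for, and the sole point at which the regularization $\delta>0$ (which also makes $D_{\delta,\e}$ finite and the total $-$-birth rate bounded) is genuinely used.
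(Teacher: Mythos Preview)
Your proposal is correct and follows essentially the same route as the paper: the paper's proof is a one-line reference to \cite[Lemma 5]{F17WR} and \cite[Proposition 5.1]{TV06}, i.e.\ precisely the Kato/Thieme--Voigt perturbation and honesty machinery you describe. You have simply unpacked that reference, including a concrete Lyapunov function $\Phi(\eta)=\log(e+|\eta^+|)+|\eta^-|$ to verify the non-explosion hypothesis; the concave choice in the $\eta^+$-variable is exactly what is needed to absorb the linearly growing branching rate, and the bound $(L_{\delta,\e}\Phi)(\eta)\le \|a^+\|_{L^1}+\tfrac{z}{\e}\|R_\delta\|_{L^1}$ is correct. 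The only minor remark is that you should make sure the version of the Thieme--Voigt criterion you invoke is the one that actually yields $\mathcal G_{\delta,\e}=\overline{-\mathcal D_{\delta,\e}+\mathcal Q_{\delta,\e}}$ (honesty plus $\mathcal D_{\delta,\e}$ being a core), not merely stochasticity of the minimal semigroup; this is what \cite[Proposition 5.1]{TV06} delivers and is implicit in the paper's citation.
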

\begin{proof}
 The same arguments as \cite[Lemma 5]{F17WR}, i.e. apply \cite[Proposition 5.1]{TV06}, imply the assertion.
\end{proof}
Later on we give an alternative construction of this evolution in terms of integrable correlation functions.

\subsection{Local evolution of quasi-observables}
For technical reasons, such as uniqueness and convergence $\delta \to 0$, 
it is convenient to study first solutions to the pre-dual Cauchy problem \eqref{CORR:00}.
For this purpose, introduce $\Lb_{\alpha} := L^1(\Gamma_0^2, e^{\alpha|\eta|}d \lambda)$ with norm
\[
 \Vert G \Vert_{\Lb_{\alpha}} = \int \limits_{\Gamma_0^2}|G(\eta)|e^{\alpha|\eta|}d \lambda(\eta)
\]
where $e^{\alpha|\eta|} = e^{\alpha^+|\eta^+|}e^{\alpha^-|\eta^-|}$. 
\begin{Remark}
 Note that $\| \cdot \|_{\alpha} \leq \| \cdot \|_{\beta}$ and $\Lb_{\beta} \subset \Lb_{\alpha}$ for all $\alpha^{\pm} \leq \beta^{\pm}$,
 i.e. $(\Lb_{\alpha})_{\alpha \in \R^2}$ is a decreasing two-parameter scale of Banach spaces with dense embeddings $\Lb_{\alpha} \subset \Lb_{\alpha'}$.
\end{Remark}
Let $\widehat{L}_{\delta,\e} = \widehat{L}_{\delta}^S + \frac{1}{\e}\widehat{L}_{\delta}^E$
where $\widehat{L}_{\delta}^S = A_{\delta} + B_{\delta}$ and 
\begin{align*}
 M(\eta) &=  (m + \lambda)|\eta^+| + \sum \limits_{x \in \eta^+}\sum \limits_{y \in \eta^+ \backslash x}a^-(x-y) + g\sum \limits_{x \in \eta^+}\sum \limits_{y \in \eta^-}b^-(x-y),
 \\ (A_{\delta}G)(\eta) &= - M(\eta)G(\eta) + \sum \limits_{x \in \eta^+}R_{\delta}(x)\int \limits_{\R^d}a^+(x-y)G(\eta^+ \cup y, \eta^-)d y,
 \\ (B_{\delta}G)(\eta) &= - \sum \limits_{x \in \eta^+}\sum \limits_{y \in \eta^+ \backslash x}a^-(x-y)G(\eta^+ \backslash x, \eta^-) - g \sum \limits_{x \in \eta^+}\sum \limits_{y \in \eta^-}b^-(x-y)G(\eta^+, \eta^- \backslash y)
 \\ &\ \ \ + \lambda|\eta^+|G(\eta) + \sum \limits_{x \in \eta^+}R_{\delta}(x)\int \limits_{\R^d}a^+(x-y)G(\eta^+ \backslash x \cup y, \eta^-)d y,
 \\ (\widehat{L}_{\delta}^EG)(\eta) &= - |\eta^-|G(\eta) + \sum \limits_{\xi^- \subset \eta^-}\int \limits_{\R^d}z_{\delta}(x)e^{-E_{\psi}(x,\xi^-)}\mathcal{E}\left( e^{-\psi(x-\cdot)} - 1;\eta^- \backslash \xi^-\right)G(\eta^+, \xi^- \cup x)d x,
\end{align*}
where $\mathcal{E}(f;\eta^-) = \prod_{x \in \eta^-}f(x)$.
It is not difficult to see that for all $\alpha^{\pm} < \beta^{\pm}$ the operator $\widehat{L}_{\delta,\e}$ is bounded from $\Lb_{\beta}$ to $\Lb_{\alpha}$.
Moreover, we have for all $\beta^+ > \alpha^+$ and $\beta^- \geq \alpha^-$
\begin{align}\label{EQ:20}
 \Vert B_{\delta}G \Vert_{\Lb_{\alpha}} \leq \frac{e^{\beta^+}\| a^- \|_{L^1} + \lambda + \| a^+ \|_{L^1} + e^{\beta^-}g}{e(\beta^+ - \alpha^+)}\Vert G \Vert_{\Lb_{\beta}}.
\end{align}
The relation of $\widehat{L}_{\delta,\e}$, $\widehat{L}_{\delta}^S$ and $\widehat{L}_{\delta}^E$ to $L_{\e}, L^S, L^E$ is explained in the following remark.
\begin{Remark}\label{REMARK:01}
 Using \cite{FKO13} one can show that
 \[
  K\widehat{L}_{\delta}^EG_1 = L_{\delta}^EKG_1, \ \ K\widehat{L}_{\delta}^SG_2 = L_{\delta}^SKG_2, \ \ \mathbbm{K}\widehat{L}_{\delta,\e}G_3 = L_{\delta,\e}\mathbbm{K}G_3
 \]
 holds for all $G_1 \in B_{bs}(\Gamma_0^-), G_2 \in B_{bs}(\Gamma_0^+)$ and $G_{3} \in B_{bs}(\Gamma_0^2)$.
\end{Remark}
For any $\alpha = (\alpha^+, \alpha^-) \in \R^2$ let
\begin{align}\label{EQ:38}
 \mathcal{D}_{\alpha} = \begin{cases} 
                           \{ G \in \Lb_{\alpha} \ | \ (|\eta^-| + M)\cdot G \in \Lb_{\alpha} \}, & \ \e \in (0, \infty)\\ 
                           \{ G \in \Lb_{\alpha} \ | \ M \cdot G \in \Lb_{\alpha} \}, & \ \e = \infty
                        \end{cases}.
\end{align}
Let $\alpha_*^+$ and $m$ be given by \eqref{EQ:13}.
\begin{Lemma}\label{LEMMA:00}
 For each $\delta \geq 0$, each $\alpha^+ > \alpha_*^+$ and $\alpha^- \in [\alpha_*^-, \alpha^{*,-})$
 the operator $(A_{\delta} + \frac{1}{\e}\widehat{L}_{\delta}^E, \mathcal{D}_{\alpha})$ is the generator of an analytic, 
 semigroup $S_{\delta,\e}^{\alpha}(t)$ of contractions on $\Lb_{\alpha}$.
 Moreover, given $\beta^+ > \alpha^+$ and $\beta^- \in [\alpha^-, \alpha^{*,-})$, then $S_{\delta,\e}^{\beta}(t) = S_{\delta,\e}^{\alpha}(t)|_{\Lb_{\beta}}$ for all $t \geq 0$.
\end{Lemma}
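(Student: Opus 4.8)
The plan is to split $A_\delta + \tfrac1\e\widehat L_\delta^E = T_\delta + P_\delta$, where $T_\delta$ is the diagonal (multiplication) part
\[
 (T_\delta G)(\eta) = -\bigl( M(\eta) + \tfrac1\e|\eta^-|\bigr)G(\eta), \qquad \tfrac1\infty := 0,
\]
and $P_\delta = P_\delta^S + \tfrac1\e P_\delta^E$ collects the two remaining birth-type terms: $P_\delta^S$ is the $a^+$-term of $A_\delta$ and $P_\delta^E$ is the nondiagonal ($\xi^- \subset \eta^-$-summation) term of $\widehat L_\delta^E$. Since $T_\delta$ is multiplication by a nonpositive measurable function, with domain $\mathcal D_\alpha$ as in \eqref{EQ:38} (which equals $D(T_\delta)$ in all cases $\e \in (0,\infty]$), its resolvent $(\mu - T_\delta)^{-1}$ is multiplication by $(\mu + M + \tfrac1\e|\eta^-|)^{-1}$, and a direct computation shows that for every $\omega \in (0,\tfrac\pi2)$ and every $\mu$ in the sector $\Sigma_\omega := \{\mu \ne 0 : |\arg\mu| < \tfrac\pi2 + \omega\}$ one has $\Vert(\mu - T_\delta)^{-1}\Vert_{\Lb_\alpha} \le (|\mu|\cos\omega)^{-1}$ and $\Vert T_\delta(\mu - T_\delta)^{-1}\Vert_{\Lb_\alpha} \le (\cos\omega)^{-1}$; in particular $T_\delta$ generates a bounded analytic $C_0$-semigroup of contractions on $\Lb_\alpha$.

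The core step is the relative bound
\[
 \Vert P_\delta G\Vert_{\Lb_\alpha} \le \widetilde q\, \Vert T_\delta G\Vert_{\Lb_\alpha}, \qquad G \in \mathcal D_\alpha, \qquad \widetilde q := \max\bigl\{ e^{-\alpha^+}\vartheta,\ \tfrac{\lambda}{\lambda+m}\,e^{-\alpha^+},\ z e^{-\alpha^-}C_\psi(\alpha^-)\bigr\},
\]
with the middle entry read as $0$ when $\lambda = 0$; the inequality $\widetilde q < 1$ is exactly what the hypotheses $\alpha^+ > \alpha_*^+$ (see \eqref{EQ:13}) and \eqref{ENV:COND} encode. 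Writing $E_{a^-}(\eta^+) := \sum_{x\in\eta^+}\sum_{y\in\eta^+\backslash x}a^-(x-y)$, one has $M(\eta) \ge (\lambda+m)|\eta^+| + E_{a^-}(\eta^+)$. For the $a^+$-term I would use $|(P_\delta^S G)(\eta)| \le (P_\delta^S|G|)(\eta)$ (the kernel is nonnegative), integrate against $e^{\alpha|\eta|}\,d\lambda$, shift the created point into the configuration via the definition \eqref{EQ:50} of $\lambda$ (which contributes a factor $e^{-\alpha^+}$), bound $R_\delta \le 1$, and apply the stability bound \eqref{EQ:07} in the form $\sum_{x\in\eta^+}\sum_{y\in\eta^+\backslash x}a^+(x-y) \le \vartheta E_{a^-}(\eta^+) + \lambda|\eta^+|$; since $e^{-\alpha^+}\bigl(\vartheta E_{a^-}(\eta^+) + \lambda|\eta^+|\bigr) \le \widetilde q\, M(\eta)$, this bounds $\Vert P_\delta^S G\Vert_{\Lb_\alpha}$ by $\widetilde q\int M\,|G|\,e^{\alpha|\eta|}\,d\lambda$. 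For $P_\delta^E$ I would again pass to $|G|$, use $|\mathcal E(e^{-\psi(x-\cdot)}-1;\cdot)| = \mathcal E(1-e^{-\psi(x-\cdot)};\cdot)$ (valid since $\psi \ge 0$), apply the combinatorial integration-by-parts formula \eqref{IBP} to the sum over $\xi^- \subset \eta^-$, evaluate the resulting $\Gamma_0^-$-integral by the exponential formula for $\lambda$ to produce precisely the factor $C_\psi(\alpha^-)$, shift the created point as before (factor $e^{-\alpha^-}$), and bound $z_\delta \le z$ and $e^{-E_\psi} \le 1$; this gives $\tfrac1\e\Vert P_\delta^E G\Vert_{\Lb_\alpha} \le z e^{-\alpha^-}C_\psi(\alpha^-)\cdot\tfrac1\e\Vert\,|\eta^-|\,G\Vert_{\Lb_\alpha}$. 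Adding the two bounds and using $M(\eta) + \tfrac1\e|\eta^-| \ge (\lambda+m)|\eta^+| + E_{a^-}(\eta^+) + \tfrac1\e|\eta^-|$ yields the claimed relative bound with no additive $\Vert G\Vert_{\Lb_\alpha}$-term.

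Granted $\widetilde q < 1$, fix $\omega > 0$ small enough that $\widetilde q(\cos\omega)^{-1} < 1$. Then for $\mu \in \Sigma_\omega$ we get $\Vert P_\delta(\mu - T_\delta)^{-1}\Vert_{\Lb_\alpha} \le \widetilde q(\cos\omega)^{-1} < 1$, so the Neumann series for $\bigl(I - P_\delta(\mu - T_\delta)^{-1}\bigr)^{-1}$ converges and
\[
 (\mu - T_\delta - P_\delta)^{-1} = (\mu - T_\delta)^{-1}\bigl(I - P_\delta(\mu - T_\delta)^{-1}\bigr)^{-1}
\]
furnishes a sectorial resolvent estimate for $T_\delta + P_\delta$ on $\mathcal D_\alpha = D(T_\delta)$ (closed there, the relative bound being $< 1$); hence $A_\delta + \tfrac1\e\widehat L_\delta^E$ generates an analytic $C_0$-semigroup $S_{\delta,\e}^\alpha$ on $\Lb_\alpha$. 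For contractivity I would verify dissipativity in $\Lb_\alpha$: pairing $(T_\delta + P_\delta)G$ with $\operatorname{sign}(G)$ against $e^{\alpha|\eta|}\,d\lambda$ and rerunning the above manipulations on the absolute values of the kernels, the diagonal part contributes $-\int(M + \tfrac1\e|\eta^-|)|G|\,e^{\alpha|\eta|}\,d\lambda$ and the two birth terms contribute at most $\widetilde q\int(M + \tfrac1\e|\eta^-|)|G|\,e^{\alpha|\eta|}\,d\lambda$, so the pairing is $\le (\widetilde q - 1)\int(M + \tfrac1\e|\eta^-|)|G|\,e^{\alpha|\eta|}\,d\lambda \le 0$; since $T_\delta + P_\delta$ already generates a $C_0$-semigroup, this forces $\Vert S_{\delta,\e}^\alpha(t)\Vert \le 1$. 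For the consistency statement, for $\beta^+ > \alpha^+$ and $\beta^- \in [\alpha^-, \alpha^{*,-})$ the embedding $\Lb_\beta \hookrightarrow \Lb_\alpha$ is continuous and dense, $\mathcal D_\beta \subset \mathcal D_\alpha$, the two realizations of $A_\delta + \tfrac1\e\widehat L_\delta^E$ agree on $\mathcal D_\beta$, and $B_{bs}(\Gamma_0^2)$ is a common core; uniqueness of the semigroup generated by a given operator then gives $S_{\delta,\e}^\beta(t) = S_{\delta,\e}^\alpha(t)|_{\Lb_\beta}$.

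I expect the main obstacle to be the estimate for $P_\delta^E$: combining \eqref{IBP} with the exponential moment identity behind $C_\psi(\alpha^-)$ so that exactly the factor $z e^{-\alpha^-}C_\psi(\alpha^-)$, and nothing larger, comes out, together with the bookkeeping needed so that all bounds hold uniformly in $\delta \ge 0$ and in $\e \in (0,\infty]$ — including the degenerate cases $\e = \infty$ (where $P_\delta = P_\delta^S$ and $\mathcal D_\alpha = \{G : M\cdot G \in \Lb_\alpha\}$) and $\lambda = 0$ or $m + \lambda = 0$ (where $M$ vanishes on a set of positive $\lambda$-measure, while $T_\delta$ remains a well-defined nonpositive multiplication operator generating an analytic contraction semigroup).
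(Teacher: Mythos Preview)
Your proposal is correct and follows essentially the same route as the paper: both split off the nonpositive multiplication operator, establish the two key relative bounds (the $a^+$-term controlled by $\max\{\vartheta,\tfrac{\lambda}{\lambda+m}\}e^{-\alpha^+}M$ via \eqref{EQ:07} and \eqref{IBP}, and the environment birth term controlled by $ze^{-\alpha^-}C_\psi(\alpha^-)|\eta^-|$), and then conclude by a perturbation argument. The only difference is that the paper outsources the final generation/consistency step to \cite[Proposition~3.1]{FK16b}, whereas you spell out the sectorial Neumann-series argument and the dissipativity/core reasoning explicitly; the estimates themselves are identical.
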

\begin{proof}
 We consider only the case where $\lambda > 0$. The other case follows by similar arguments.
 For each $0 \leq G \in \mathcal{D}_{\alpha}$ we get by \eqref{IBP}, $R_{\delta} \leq 1$ and \eqref{EQ:07}
 \begin{align*}
  &\ \int \limits_{\Gamma_0^2}\sum \limits_{x \in \eta^+}R_{\delta}(x)\int \limits_{\R^d}a^+(x-y)G(\eta^+ \cup y, \eta^-)d y e^{\alpha|\eta|}d \lambda(\eta)
  \\ &\leq \max\left\{ \frac{\lambda}{\lambda+m}, \vartheta \right\} e^{-\alpha^+} \int \limits_{\Gamma_0^2}M(\eta)G(\eta)e^{\alpha|\eta|}d \lambda(\eta).
 \end{align*}
 Similarly we obtain by $z_{\delta} \leq z$
 \begin{align*}
  &\ \int \limits_{\Gamma_0^2}\sum \limits_{\xi^- \subset \eta^-}\int \limits_{\R^d}z_{\delta}(x)e^{-E_{\psi}(x,\xi^-)}\mathcal{E}\left( |e^{-\psi(x - \cdot)} - 1|;\eta^- \backslash \xi^-\right) G(\eta^+, \xi^- \cup x)d xd \lambda(\eta)
  \\ &\leq z C_{\psi}(\alpha^-)e^{- \alpha^-} \int \limits_{\Gamma_0^2}G(\eta^+, \xi^-) |\xi^-| e^{\alpha^+|\eta^+|}e^{\alpha^-|\xi^-|}d \lambda(\eta^+, \xi^-).
 \end{align*}
 Since $\max\left\{ \frac{b}{b+m}, \vartheta \right\} e^{-\alpha^+} < 1$ and $z C_{\psi}(\alpha^-)e^{- \alpha^-} < 1$ 
 the assertion can be deduced by similar arguments to \cite[Proposition 3.1]{FK16b}.
\end{proof}
A pair $(\alpha,\beta)$ is said to be admissible if $\beta^+ > \alpha^+ > \alpha_*^+$ and $\alpha_*^- \leq \alpha^- \leq \beta^- < \alpha^{*,-}$.
For such a pair let
\[
 T(\alpha, \beta) := \frac{\beta^+ - \alpha^+}{e^{\beta^+}\| a^- \|_{L^1} + \lambda + \| a^+ \|_{L^1} + e^{\beta^-}g}.
\]
\begin{Proposition}\label{PROP:00}
 For each $\delta \geq 0$ there exists a family of bounded linear operators
 \[
  \left\{ \widehat{U}^{\beta,\alpha}_{\delta,\e}(t) \in L(\Lb_{\beta}, \Lb_{\alpha}) \ | \ 0 \leq t < T(\alpha,\beta), \ (\alpha,\beta) \text{ admissible pair } \right\}
 \]
 with 
 \begin{align}\label{EQ:18}
  \Vert \widehat{U}^{\beta,\alpha}_{\delta,\e}(t) \Vert_{L(\Lb_{\beta}, \Lb_{\alpha})} \leq \frac{T(\alpha,\beta)}{T(\alpha,\beta) - t}, \ \ 0 \leq t < T(\alpha,\beta)
 \end{align}
 such that the following properties are satisfied.
 \begin{enumerate}
  \item[(a)] For any admissible pair $(\alpha,\beta)$ with $\alpha^- < \beta^-$ and any $G \in \Lb_{\beta}$, 
  $G_t := \widehat{U}^{\beta,\alpha}_{\delta,\e}(t)G$ is the unique classical solution in $\Lb_{\alpha}$ to
  \begin{align}\label{EQ:08}
   \frac{\partial G_t}{\partial t} = \widehat{L}_{\delta,\e}G_t, \ \ G_t|_{t = 0}, \ \ t \in [0, T(\alpha,\beta)).
  \end{align}
  \item[(b)] Given two admissible pairs $(\alpha_0, \alpha)$ and $(\alpha,\beta)$, we have 
  \begin{align}\label{EQ:10}
   \widehat{U}_{\delta,\e}^{\beta,\alpha_0}(t)G = \widehat{U}^{\alpha,\alpha_0}_{\delta,\e}(t)G = \widehat{U}^{\beta, \alpha}_{\delta,\e}(t)G
  \end{align}
  for any $G \in \Lb_{\beta}$ and $0 \leq t < \min\{ T(\alpha,\beta), T(\alpha_0, \alpha), T(\alpha_0,\beta)\}$.
 \end{enumerate}
\end{Proposition}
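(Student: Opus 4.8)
The plan is to construct the operators $\widehat{U}^{\beta,\alpha}_{\delta,\e}(t)$ via the standard perturbation series (Picard/Duhamel iteration) around the analytic semigroup $S_{\delta,\e}^{\alpha}(t)$ from Lemma \ref{LEMMA:00}, treating $B_\delta$ as the perturbation. Concretely, one writes $\widehat{L}_{\delta,\e} = (A_\delta + \frac{1}{\e}\widehat{L}_\delta^E) + B_\delta$ and sets, formally,
\[
 \widehat{U}^{\beta,\alpha}_{\delta,\e}(t)G = \sum_{n=0}^\infty \int_0^t\!\!\int_0^{t_1}\!\!\cdots\!\!\int_0^{t_{n-1}} S^{\alpha_n}_{\delta,\e}(t-t_1)B_\delta S^{\alpha_{n-1}}_{\delta,\e}(t_1-t_2)B_\delta \cdots B_\delta S^{\beta}_{\delta,\e}(t_n)\,G\; dt_n\cdots dt_1,
\]
where the intermediate spaces $\Lb_{\alpha_j}$ with $\alpha = \alpha_n < \alpha_{n-1} < \cdots < \alpha_0 = \beta$ are chosen by equipartitioning the gap $\beta^+ - \alpha^+$ (and keeping the $-$ component fixed, say equal to $\beta^-$, or interpolating it too). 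This is exactly the Ovsyannikov-type scheme used in \cite{FKK12, KK16, FK17}: each factor $B_\delta$ loses a little bit of the Banach-space index by \eqref{EQ:20}, while each semigroup factor $S_{\delta,\e}(\cdot)$ is a contraction by Lemma \ref{LEMMA:00} and does not cost anything.

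The key steps, in order: (1) Choose the partition $\alpha^+_j = \beta^+ - \frac{j}{n}(\beta^+-\alpha^+)$, apply \eqref{EQ:20} to bound $\|B_\delta\|_{L(\Lb_{\alpha_{j-1}},\Lb_{\alpha_j})} \leq \frac{n}{e\,(\beta^+-\alpha^+)}\bigl(e^{\beta^+}\|a^-\|_{L^1}+\lambda+\|a^+\|_{L^1}+e^{\beta^-}g\bigr) = \frac{n}{e\,T(\alpha,\beta)}$, together with the contractivity $\|S_{\delta,\e}(\cdot)\| \leq 1$, to get that the $n$-th term of the series is bounded in $L(\Lb_\beta,\Lb_\alpha)$ by $\frac{t^n}{n!}\bigl(\frac{n}{e\,T(\alpha,\beta)}\bigr)^n$; (2) use $n^n/n! \leq e^n$ to see the $n$-th term is $\leq \bigl(t/T(\alpha,\beta)\bigr)^n$, so the series converges absolutely and uniformly for $t < T(\alpha,\beta)$ to an operator satisfying the bound \eqref{EQ:18} (geometric series); (3) differentiate the series term by term — legitimate because of the uniform convergence on compact subintervals and because on $\Lb_\beta$ the generator acts boundedly into $\Lb_{\alpha'}$ for any $\alpha' < \beta$ — to verify that $G_t$ solves \eqref{EQ:08} in $\Lb_\alpha$; for uniqueness, assume $\widetilde G_t$ is another classical solution in $\Lb_\alpha$ on $[0,T(\alpha,\beta))$, pass to the mild (integral) formulation $\widetilde G_t = S^{\alpha'}_{\delta,\e}(t) G + \int_0^t S^{\alpha'}_{\delta,\e}(t-s) B_\delta \widetilde G_s\,ds$ in a slightly larger space, and close a Gronwall-type estimate across the scale using \eqref{EQ:20}; (4) prove the consistency identity \eqref{EQ:10} by the usual argument: both $\widehat{U}_{\delta,\e}^{\beta,\alpha_0}(t)G$ and $\widehat{U}^{\alpha,\alpha_0}_{\delta,\e}(t)\widehat{U}^{\beta,\alpha}_{\delta,\e}(t)G$ — really one shows the semigroup-in-scale property termwise, or invokes uniqueness from part (a) on the common interval — solve the same Cauchy problem \eqref{EQ:08} in $\Lb_{\alpha_0}$ with the same initial datum, hence coincide; since the series expression does not actually depend on the intermediate targets beyond the endpoints (a rearrangement of the iterated integrals), the three expressions in \eqref{EQ:10} agree.

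The main obstacle I anticipate is the uniqueness statement in part (a), specifically handling the $\frac{1}{\e}\widehat{L}_\delta^E$ part and the unbounded multiplication operator $|\eta^-| + M$ appearing in $\mathcal{D}_\alpha$. The perturbation $B_\delta$ is only bounded from $\Lb_\beta$ into strictly larger $\Lb_\alpha$, so a naive Gronwall argument in a single space fails; one must run the comparison across a continuum (or finite ladder) of spaces and track how the radius of validity $T(\alpha,\beta)$ shrinks, exactly as in \cite[Section 3]{KK16} or \cite{FKK12}. A secondary technical point is that $\widehat{L}_\delta^E$ itself involves the infinite sum over subconfigurations $\xi^- \subset \eta^-$ weighted by $\mathcal{E}(e^{-\psi(x-\cdot)}-1;\eta^-\setminus\xi^-)$; one needs the bound established inside the proof of Lemma \ref{LEMMA:00} (controlled by $z C_\psi(\alpha^-)e^{-\alpha^-} < 1$) to guarantee that this operator, although already absorbed into the generator, does not spoil the contractivity or the resolvent estimates when we want mild-solution representations in a neighbouring space $\Lb_{\alpha'}$ with $\alpha'^- $ slightly below $\alpha^-$; here the room between $\alpha_*^-$ and $\alpha^{*,-}$ guaranteed by \eqref{ENV:COND} is exactly what is needed.
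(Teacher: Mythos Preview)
Your proposal is correct and follows exactly the same approach as the paper: the paper's proof simply invokes Lemma \ref{LEMMA:00} (contractive analytic semigroup for $A_\delta + \frac{1}{\e}\widehat{L}_\delta^E$) together with the Ovsyannikov-type bound \eqref{EQ:20} on $B_\delta$ and then cites the general perturbation-in-scales result \cite{F16a}, which is precisely the Duhamel/Picard series with equipartitioned intermediate indices that you spell out. Indeed, later in Section 7 the paper writes out this very construction explicitly (the operators $H_n^{\e}(t)$, the partition $\alpha_j^{\pm} = \alpha^{\pm} + \frac{j}{n}(\beta^{\pm}-\alpha^{\pm})$, and the geometric bound $(t/T_0)^n$), confirming that your reconstruction of the cited argument is on target.
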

\begin{proof}
 In view of Lemma \ref{LEMMA:00} and \eqref{EQ:20} the assertion follows from \cite{F16a}.
\end{proof}
By property \eqref{EQ:10} we simply write $\widehat{U}_{\delta,\e}(t)$ instead of $\widehat{U}^{\beta,\alpha}_{\delta,\e}(t)$ if no confusion may arise.

\subsection{Local evolution of correlation functions}
Here and below we let $\delta \geq 0$ be arbitrary.
In this section we study the adjoint Cauchy problem to \eqref{EQ:08}. 
The dual space $(\Lb_{\alpha})^*$ can be identified with $\K_{\alpha}$ by use of the duality
\begin{align}\label{EQ:11}
 \langle \langle G, k \rangle \rangle = \int \limits_{\Gamma_0^2}G(\eta)k(\eta)d \lambda(\eta), \ \ G \in \Lb_{\alpha}, \ \ k \in \K_{\alpha}.
\end{align}
\begin{Remark}
 For all $\alpha^{\pm} < \beta^{\pm}$ we have $\| \cdot \|_{\K_{\beta}} \leq \| \cdot \|_{\K_{\alpha}}, \K_{\alpha} \subset \K_{\beta}$.
 Hence $(\K_{\alpha})_{\alpha \in \R^2}$ is an increasing two-parameter scale of Banach spaces.
 Note that the embeddings $\K_{\alpha} \subset \K_{\beta}$ are not dense.
\end{Remark}
Since $\widehat{L}_{\delta,\e} \in L(\Lb_{\beta}, \Lb_{\alpha})$ for all $\alpha^{\pm} < \beta^{\pm}$,
the adjoint operator satisfies $L^{\Delta}_{\delta,\e} := \widehat{L}_{\delta,\e}^* \in L(\K_{\alpha}, \K_{\beta})$.
It is for all $G \in \Lb_{\beta}$ and $k \in \K_{\alpha}$ determined by the relation
\[
 \int \limits_{\Gamma_0^2}(\widehat{L}_{\delta,\e}G)(\eta)k(\eta)d \lambda(\eta) = \int \limits_{\Gamma_0^2}G(\eta) (L^{\Delta}_{\delta,\e}k)(\eta)d \lambda(\eta).
\]
Moreover, if $\mu \in \mathcal{P}_{\alpha}$, then by Remark \ref{REMARK:01} and definition of correlation functions
\[
 \int \limits_{\Gamma^2}(L_{\delta,\e}F)(\gamma)d\mu(\gamma) = \int \limits_{\Gamma_0^2}(\widehat{L}_{\delta,\e}G)(\eta)k_{\mu}(\eta)d \lambda(\eta), \ \ F = \mathbbm{K}G \in \mathcal{FP}(\Gamma^2).
\]
The action of this operator is given by
$L_{\delta,\e}^{\Delta} = L_{\delta}^{\Delta,S} + \frac{1}{\e}L_{\delta}^{\Delta,E}$ 
with $L_{\delta}^{\Delta,S} = A_{\delta}^{\Delta} + B_{\delta}^{\Delta}$ and
\begin{align}
  \label{EQ:35} (A_{\delta}^{\Delta}k)(\eta) &= - M(\eta)k(\eta) + \sum \limits_{x \in \eta^+}\sum \limits_{y \in \eta^+ \backslash x}R_{\delta}(y)a^+(x-y)k(\eta^+ \backslash x, \eta^-),
 \\ \label{EQ:36} (B_{\delta}^{\Delta}k)(\eta) &= - \sum \limits_{x \in \eta^+}\int \limits_{\R^d}a^-(x-y)k(\eta^+ \cup y, \eta^-)d y - g \sum \limits_{x \in \eta^+}\int \limits_{\R^d}b^-(x-y)k(\eta^+, \eta^- \cup y)d y
 \\ \notag &\ \ \ + \lambda|\eta^+|k(\eta) + \sum \limits_{x \in \eta^+}\int \limits_{\R^d}R_{\delta}(y) a^+(x-y)k(\eta^+ \backslash x \cup y, \eta^-)d y,
 \\ \label{EQ:37} (L_{\delta}^{\Delta, E}k)(\eta) &= -|\eta^-|k(\eta) 
 \\ \notag &\ \ \ + \sum \limits_{x \in \eta^-}z_{\delta}(x)e^{-E_{\psi}(x,\eta^- \backslash x)}\int \limits_{\Gamma_0^-}\mathcal{E}\left( e^{-\psi(x-\cdot)} - 1;\xi^-\right)k(\eta^+, \xi^- \cup \eta^- \backslash x)d \lambda(\xi^-).
\end{align}
The operator $B_{\delta}^{\Delta}$ satisfies for any $\alpha^{+} < \beta^{+}$ and $\alpha^- \leq \beta^-$
\[ 
 \Vert B_{\delta}^{\Delta}k \Vert_{\K_{\beta}} \leq \frac{e^{\beta^+}\| a^- \|_{L^1} + \lambda + \| a^+ \|_{L^1} + e^{\beta^-}g}{e(\beta^+ - \alpha^+)} \Vert k \Vert_{\K_{\alpha}}.
\]
Given any admissible pair $(\alpha,\beta)$ and $0 \leq t < T(\alpha,\beta)$ we let 
$U_{\delta,\e}^{\Delta, \alpha,\beta}(t) := \widehat{U}_{\delta,\e}^{\beta,\alpha}(t)^*$ be the adjoint operators. 
\begin{Proposition}\label{TH:00}
 Fix any $\delta \geq 0$. The family of adjoint operators
 \[
  \left \{ U_{\delta,\e}^{\Delta, \alpha,\beta}(t) \in L(\K_{\beta}, \K_{\alpha}) \ | \ 0 \leq t < T(\alpha,\beta), \ \ (\alpha,\beta) \text{ admissible pair } \right\}
 \]
 satisfies the following properties.
 \begin{enumerate}
  \item[(a)] Let $\alpha_*^+ < \alpha_0^+ < \alpha^+ < \beta^+$ and $\alpha_*^- \leq \alpha_0^- \leq \alpha^- \leq \beta^- < \alpha^{*,-}$. Then
   \begin{align}\label{EQ:33}
    U_{\delta,\e}^{\Delta,\alpha_0,\beta}(t)k = U_{\delta,\e}^{\Delta, \alpha_0, \alpha}(t)k = U_{\delta,\e}^{\Delta,\alpha,\beta}(t)k
   \end{align}
   for any $k \in \K_{\alpha_0}$ and $0 \leq t < \min \{ T(\alpha,\beta), T(\alpha_0, \alpha), T(\alpha_0,\beta)\}$.
  \item[(b)] For each admissible pair $(\alpha,\beta)$ with $\alpha^- < \beta^-$ and each $k_0 \in \K_{\alpha}$, 
   $k_t := U_{\delta,\e}^{\Delta, \alpha,\beta}(t)k_0$ is the unique classical solution in $\K_{\beta}$ to 
  \begin{align}\label{EQ:12}
   \frac{\partial k_t}{\partial t} = L_{\delta,\e}^{\Delta}k_t,\ \ k_t|_{t=0} = k_0, \ \ 0 \leq t < T(\alpha,\beta).
  \end{align}
  Moreover, if $k_0 \in \K_{\alpha_0}$ with $\alpha_0^{\pm} < \alpha^{\pm}$, then 
  $U_{\delta,\e}^{\Delta, \alpha,\beta}(t) L_{\delta,\e}^{\Delta}k_0 = L_{\delta,\e}^{\Delta}U_{\delta,\e}^{\Delta, \alpha,\beta}(t)k_0$
  holds for all $t \in [0,T(\alpha,\beta))$.
  \item[(c)] Let $(k_t)_{t \in [0,T)} \subset \K_{\beta}$ be such that $[0,T)\ni t \longmapsto \langle \langle G, k_t \rangle \rangle$ is continuous for every $G \in \Lb_{\beta}$ and
  \begin{align}\label{EQ:09}
   \langle \langle G, k_t \rangle \rangle = \langle \langle G, k_0 \rangle \rangle + \int \limits_{0}^{t}\langle \langle \widehat{L}_{\delta,\e}G, k_s^{\e} \rangle \rangle d s, \ \ t \in [0,T), \ \ G \in B_{bs}(\Gamma_0^2).
  \end{align}
  Then $k_{t} = U_{\delta,\e}^{\Delta, \alpha,\beta}(t)k_0$ holds for all $0 \leq t < \min\{ T, T(\alpha,\beta)\}$.
 \end{enumerate} 
\end{Proposition}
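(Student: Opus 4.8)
The plan is to obtain Proposition~\ref{TH:00} as the dual counterpart of Proposition~\ref{PROP:00}: parts (a) and (b) are produced by transposing the pre-dual statements through the pairing \eqref{EQ:11}, while the genuinely model-specific work is the weak-uniqueness assertion (c), which I would prove by testing a putative solution against the backward pre-dual evolution.

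For (a), note that $\widehat U_{\delta,\e}^{\beta,\alpha}(t)\in L(\Lb_\beta,\Lb_\alpha)$ has Banach adjoint $U_{\delta,\e}^{\Delta,\alpha,\beta}(t)=\widehat U_{\delta,\e}^{\beta,\alpha}(t)^{*}\in L(\K_\alpha,\K_\beta)$ of the same operator norm, which transposes \eqref{EQ:18}. Given $k\in\K_{\alpha_0}\subset\K_\alpha$ and arbitrary $G\in\Lb_\beta$, one has $\langle\langle G,U_{\delta,\e}^{\Delta,\alpha,\beta}(t)k\rangle\rangle=\langle\langle\widehat U_{\delta,\e}^{\beta,\alpha}(t)G,k\rangle\rangle$, and similarly with $(\alpha_0,\beta)$ resp.\ $(\alpha_0,\alpha)$ replacing $(\alpha,\beta)$; the three right-hand sides agree by the consistency \eqref{EQ:10} (using $\Lb_\beta\subset\Lb_\alpha\subset\Lb_{\alpha_0}$), and since $G$ is arbitrary, \eqref{EQ:33} follows. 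For (b) I would invoke the abstract construction of \cite{F16a} applied to the dual scale $(\K_\alpha)_\alpha$: the Banach adjoint of the analytic contraction semigroup $S_{\delta,\e}^{\alpha}(t)$ from Lemma~\ref{LEMMA:00} is again analytic, and $B_{\delta}^{\Delta}$ satisfies the inter-scale bound recorded after \eqref{EQ:37}; these are precisely the hypotheses of that scheme, and it yields an evolution family satisfying \eqref{EQ:12}, which one identifies with $\widehat U_{\delta,\e}^{\beta,\alpha}(t)^{*}$ since both arise from the same perturbation series, now built from $(S_{\delta,\e}^{\alpha}(t))^{*}$ and $B_{\delta}^{\Delta}$. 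The commutation $U_{\delta,\e}^{\Delta,\alpha,\beta}(t)L_{\delta,\e}^{\Delta}k_0=L_{\delta,\e}^{\Delta}U_{\delta,\e}^{\Delta,\alpha,\beta}(t)k_0$ for $k_0\in\K_{\alpha_0}$ with $\alpha_0^{\pm}<\alpha^{\pm}$ is then the usual generator--semigroup commutation, obtained by transposing the corresponding pre-dual identity. Here the scale is unavoidable: $L_{\delta,\e}^{\Delta}$ maps each $\K_\gamma$ only into a strictly larger space, so ``classical solution in $\K_\beta$'' relies on $k_t=U_{\delta,\e}^{\Delta,\alpha,\beta}(t)k_0$ remaining in $\K_{\beta(t)}$ with $\beta^{+}(t)<\beta^{+}$ for $t<T(\alpha,\beta)$.

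The heart of the proof is (c). Let $(k_t)_{t\in[0,T)}\subset\K_\beta$ satisfy the weak identity \eqref{EQ:09}. First, for each $G\in\Lb_\beta$ the function $t\mapsto\langle\langle G,k_t\rangle\rangle$ is continuous, hence bounded on any compact $[0,T'']\subset[0,\min\{T,T(\alpha,\beta)\})$, so by the uniform boundedness principle $\sup_{t\in[0,T'']}\|k_t\|_{\K_\beta}<\infty$; using this bound and the density of $B_{bs}(\Gamma_0^2)$ in $\Lb_\beta$, the identity \eqref{EQ:09} extends from $B_{bs}(\Gamma_0^2)$ to all $G\in\Lb_\beta$. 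Fix $T''$ and $G\in\Lb_\beta$ and set $\varphi(s):=\langle\langle\widehat U_{\delta,\e}^{\beta,\alpha}(T''-s)G,k_s\rangle\rangle$ for $s\in[0,T'']$. Combining the $C^{1}$-dependence of $r\mapsto\widehat U_{\delta,\e}^{\beta,\alpha}(r)G$ in $\Lb_\alpha$ from Proposition~\ref{PROP:00}(a) -- which contributes $-\langle\langle\widehat L_{\delta,\e}\widehat U_{\delta,\e}^{\beta,\alpha}(T''-s)G,k_s\rangle\rangle$ to $\varphi'(s)$ -- with the integral identity \eqref{EQ:09} for $k_s$ -- which contributes $+\langle\langle\widehat L_{\delta,\e}\widehat U_{\delta,\e}^{\beta,\alpha}(T''-s)G,k_s\rangle\rangle$ -- these two terms cancel, so $\varphi$ is constant on $[0,T'']$. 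Evaluating at the endpoints, $\langle\langle G,k_{T''}\rangle\rangle=\langle\langle\widehat U_{\delta,\e}^{\beta,\alpha}(T'')G,k_0\rangle\rangle=\langle\langle G,U_{\delta,\e}^{\Delta,\alpha,\beta}(T'')k_0\rangle\rangle$; since $G\in\Lb_\beta$ is arbitrary, $k_{T''}=U_{\delta,\e}^{\Delta,\alpha,\beta}(T'')k_0$ in $\K_\beta$, and $T''$ being arbitrary this proves (c).

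The main obstacle is making the computation of $\varphi'$ rigorous. The pairings $\langle\langle\widehat L_{\delta,\e}\widehat U_{\delta,\e}^{\beta,\alpha}(r)G,k_s\rangle\rangle$ sit on the boundary of where \eqref{EQ:11} is defined, because $\widehat L_{\delta,\e}$ shifts the Banach scale; moreover \eqref{EQ:09} is only hypothesised on $B_{bs}(\Gamma_0^2)$, so the cancellation should be established by discretising $s$ and/or approximating $G$, passing to the limit with the uniform bounds supplied by \eqref{EQ:18} and the Banach--Steinhaus estimate above, rather than by a formal product rule. A secondary technical point is the regularity of the adjoint evolution on the non-reflexive duals $\K_\alpha$ used in (b), but this is by now routine in the present framework (cf.\ \cite{F16a,KK16,FK16b}).
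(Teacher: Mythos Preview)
Your proposal is correct and follows the same approach as the paper, which simply writes ``Follows from \cite{F16a}.'' You have essentially unpacked what that citation contains: (a) by transposing the pre-dual consistency \eqref{EQ:10}, (b) by applying the abstract perturbation scheme on the dual scale, and (c) by the standard backward--forward testing argument $s\mapsto\langle\langle\widehat U_{\delta,\e}^{\beta,\alpha}(T''-s)G,k_s\rangle\rangle$, together with the Banach--Steinhaus bound and the density of $B_{bs}(\Gamma_0^2)$ to upgrade \eqref{EQ:09}.
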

\begin{proof}
 Follows from \cite{F16a}.
\end{proof}
Due to property \eqref{EQ:33} we omit the additional dependence on $(\alpha,\beta)$, if no confusion may arise. 
In such a case we let $U_{\delta,\e}^{\Delta}(t)$ stand for $U_{\delta,\e}^{\Delta,\alpha,\beta}(t)$. 
\begin{Proposition}
 Let $k_0 \in \K_{\alpha}$, $G \in B_{bs}(\Gamma_0^2)$ and $(\alpha,\beta)$ be any admissible pair with $\alpha^- < \beta^-$.
 Then for each $0 \leq t < \frac{1}{3}T(\alpha,\beta)$
 \begin{align}\label{EQ:17}
  \langle \langle G, U_{\delta,\e}^{\Delta}(t)k_0 \rangle \rangle \longrightarrow \langle \langle G, U_{0,\e}^{\Delta}(t)k_0 \rangle \rangle, \ \ \delta \to 0.
 \end{align}
\end{Proposition}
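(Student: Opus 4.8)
The plan is to pass to the pre-dual side through the duality \eqref{EQ:11}. Since $U_{\delta,\e}^{\Delta}(t) = \widehat{U}_{\delta,\e}^{\beta,\alpha}(t)^{*}$ and $\K_{\alpha}$ is identified with $(\Lb_{\alpha})^{*}$, for $G \in B_{bs}(\Gamma_0^2) \subset \Lb_{\beta}$ and $k_0 \in \K_{\alpha}$ one has $\langle \langle G, U_{\delta,\e}^{\Delta}(t) k_0 \rangle \rangle = \langle \langle \widehat{U}_{\delta,\e}^{\beta,\alpha}(t) G, k_0 \rangle \rangle$, so \eqref{EQ:17} will follow once I prove that $\widehat{U}_{\delta,\e}^{\beta,\alpha}(t) G \to \widehat{U}_{0,\e}^{\beta,\alpha}(t) G$ in $\Lb_{\alpha}$ as $\delta \to 0$, for every fixed $G \in \Lb_{\beta}$ (the method will in fact give this on all of $[0,T(\alpha,\beta))$, not only on the stated interval).

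The whole argument rests on two convergence facts, both immediate from $R_{\delta}(x) = e^{-\delta|x|^{2}} \uparrow 1$ and $z_{\delta} = zR_{\delta} \uparrow z$ as $\delta \downarrow 0$, with $0 \le R_{\delta} \le 1$. (i) For each admissible $\gamma$ the contraction semigroups of Lemma \ref{LEMMA:00} converge strongly, $S_{\delta,\e}^{\gamma}(t) \to S_{0,\e}^{\gamma}(t)$, uniformly on compact $t$-intervals: their generators $A_{\delta} + \frac{1}{\e}\widehat{L}_{\delta}^{E}$ have the common, $\delta$-independent domain $\mathcal{D}_{\gamma}$ of \eqref{EQ:38}, and on it $(A_{\delta} + \frac{1}{\e}\widehat{L}_{\delta}^{E})G \to (A_{0} + \frac{1}{\e}\widehat{L}_{0}^{E})G$ in $\Lb_{\gamma}$ by dominated convergence (the $\delta$-dependence sits only in $R_{\delta}$ multiplying the $a^{+}$-term and in $z_{\delta}$; the $\delta = 0$ integrands are $\Lb_{\gamma}$-integrable by the definition of $\mathcal{D}_{\gamma}$), so the Trotter--Kato approximation theorem applies. (ii) For all parameters with $\gamma^{+} < \gamma'^{+}$, $\gamma^{-} \le \gamma'^{-}$ the operators $B_{\delta} \in L(\Lb_{\gamma'},\Lb_{\gamma})$ are bounded uniformly in $\delta$ (the estimate \eqref{EQ:20} holds with $R_{\delta} \le 1$) and $B_{\delta} H \to B_{0} H$ in $\Lb_{\gamma}$ for every fixed $H \in \Lb_{\gamma'}$, again by dominated convergence. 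Both rely only on \eqref{IBP} and the normalisations in (S), (E).

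Next I would use the construction behind Proposition \ref{PROP:00} (following \cite{F16a}): writing $\widehat{L}_{\delta,\e} = \big(A_{\delta} + \frac{1}{\e}\widehat{L}_{\delta}^{E}\big) + B_{\delta}$, the operator $\widehat{U}_{\delta,\e}^{\beta,\alpha}(t)$ is the Duhamel series $\sum_{n \ge 0}\widehat{U}_{\delta,\e}^{(n)}(t)$, with $\widehat{U}_{\delta,\e}^{(0)}(t) = S_{\delta,\e}(t)$ and $\widehat{U}_{\delta,\e}^{(n)}(t)G = \int_{0}^{t}S_{\delta,\e}(t-s)B_{\delta}\widehat{U}_{\delta,\e}^{(n-1)}(s)G\,ds$ (uniqueness, Proposition \ref{PROP:00}(a), identifies the series with $\widehat{U}_{\delta,\e}^{\beta,\alpha}$). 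Inserting $n$ equally spaced intermediate parameters between $\alpha$ and $\beta$ with the minus-component held at $\beta^{-}$ (and using the contractive embedding $\Lb_{(\alpha^{+},\beta^{-})} \subset \Lb_{\alpha}$ at the end), contractivity of the $S$'s together with \eqref{EQ:20} yields $\|\widehat{U}_{\delta,\e}^{(n)}(t)\|_{L(\Lb_{\beta},\Lb_{\alpha})} \le (t/T(\alpha,\beta))^{n}$ uniformly in $\delta$, consistent with \eqref{EQ:18}. For fixed $n$, expanding $\widehat{U}_{\delta,\e}^{(n)}(t)G$ as an iterated integral over the simplex $\{t \ge s_{1} \ge \cdots \ge s_{n} \ge 0\}$ of $S_{\delta,\e}(t-s_{1})B_{\delta}\cdots B_{\delta}S_{\delta,\e}(s_{n})G$ and propagating convergence from the innermost factor outward — each block handled by the splitting $B_{\delta}H_{\delta} = B_{\delta}(H_{\delta} - H_{0}) + B_{\delta}H_{0}$ (first summand small by the uniform $\|B_{\delta}\|$-bound and the inductive convergence, second by (ii)), and similarly for the semigroup blocks via (i) — one obtains pointwise convergence of the integrand on the simplex, dominated there uniformly in $\delta$ by a constant, so the (Bochner) dominated convergence theorem gives $\widehat{U}_{\delta,\e}^{(n)}(t)G \to \widehat{U}_{0,\e}^{(n)}(t)G$ in $\Lb_{\alpha}$. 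Summing, for $t < T(\alpha,\beta)$ the tail $\sum_{n > N}$ is $< \theta$ uniformly in $\delta$ once $N$ is large, and the remaining finite sum is $< \theta$ once $\delta$ is small; this proves \eqref{EQ:17}.

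The main obstacle is structural rather than computational: one must resist treating $\widehat{U}_{\delta,\e}^{\beta,\alpha}(t) - \widehat{U}_{0,\e}^{\beta,\alpha}(t)$ in one step via a Duhamel identity $\int_{0}^{t}\widehat{U}_{0,\e}^{\gamma_{1},\alpha}(t-s)\big(\widehat{L}_{\delta,\e} - \widehat{L}_{0,\e}\big)\widehat{U}_{\delta,\e}^{\beta,\gamma_{2}}(s)G\,ds$, since the integrand still carries $\widehat{U}_{\delta,\e}^{\beta,\gamma_{2}}(s)G$, whose $\delta$-dependence reproduces the same difficulty one scale-level higher. Such an identity does hold with $\alpha^{+} < \gamma_{1}^{+} < \gamma_{2}^{+} < \beta^{+}$ trisecting $[\alpha^{+},\beta^{+}]$ and is valid for $t < \frac{1}{3} T(\alpha,\beta)$ — which is precisely where the factor $\frac{1}{3}$ of the statement comes from, the three nested layers $\widehat{U}_{\delta,\e}$, the jump $\widehat{L}_{\delta,\e} - \widehat{L}_{0,\e}$, $\widehat{U}_{0,\e}$ — and could be used once convergence for the pair $(\gamma_{2},\beta)$ is already known; but the term-by-term route above avoids the circularity altogether and delivers the conclusion on $[0,T(\alpha,\beta))$. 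Beyond this, the only quantitative inputs are the two dominated-convergence estimates of the second paragraph and the $\delta$-uniform simplex bound, all routine.
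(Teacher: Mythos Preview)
Your argument is correct and indeed yields the stronger conclusion on all of $[0,T(\alpha,\beta))$. It is, however, a genuinely different route from the paper's. The paper does \emph{not} expand the Duhamel series; instead it works on the correlation-function side and uses precisely the single Duhamel identity you dismiss in your last paragraph,
\[
 U_{\delta,\e}^{\Delta}(t)k_0 - U_{0,\e}^{\Delta}(t)k_0 = \int_0^t U_{0,\e}^{\Delta}(t-s)\bigl(L_{\delta,\e}^{\Delta} - L_{0,\e}^{\Delta}\bigr)U_{\delta,\e}^{\Delta}(s)k_0\,ds,
\]
with intermediate scales $\alpha \prec \beta_0 \prec \beta_1 \prec \beta$, which is exactly what produces the factor $\tfrac{1}{3}$. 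Your worry that the residual $U_{\delta,\e}^{\Delta}(s)k_0$ ``reproduces the same difficulty'' is unfounded: one does not need convergence of $k_s^{\delta,\e} := U_{\delta,\e}^{\Delta}(s)k_0$, only the $\delta$-uniform bound $\Vert k_s^{\delta,\e}\Vert_{\K_{\beta_0}} \le \tfrac{3}{2}$ from \eqref{EQ:18}. After pairing with $G$ and moving $U_{0,\e}^{\Delta}(t-s)$ to $G_{t-s}^{\e} := \widehat{U}_{0,\e}(t-s)G \in \Lb_{\beta_1}$ by duality, the difference $L_{\delta,\e}^{\Delta} - L_{0,\e}^{\Delta}$ contributes an explicit factor $H_{\delta,\e,\beta_0}(\eta)$ built out of $(1-R_{\delta})$-terms, which vanishes pointwise as $\delta \to 0$ and is dominated uniformly in $\delta$; a single application of dominated convergence then finishes the proof. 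Compared to your approach, this avoids Trotter--Kato and the term-by-term bookkeeping, at the cost of the weaker time interval; your approach is more modular and gives the sharper range, but invokes heavier abstract machinery.
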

\begin{proof}
 Let $(\alpha,\beta)$ be any admissible pair with $\alpha^- < \beta^-$ and fix any $0 \leq t < \frac{1}{3}T(\alpha,\beta)$.
 Define $\beta_0^{\pm} := \frac{\alpha^{\pm} + \beta^{\pm}}{2}$. Then $\alpha^{\pm} < \beta_0^{\pm} < \beta^{\pm}$,
 $T(\alpha,\beta_0) > \frac{1}{2}T(\alpha,\beta)$ and $T(\beta_0, \beta) = \frac{1}{2}T(\alpha,\beta)$.
 Since $\beta_1 \longmapsto T(\beta_1, \beta)$ is continuous, we can find $\beta_1^{\pm} \in (\beta_0^{\pm}, \beta^{\pm})$ such that
 \begin{align}\label{EQ:25}
  \frac{1}{3}T(\alpha,\beta) \leq \min\{ T(\alpha,\beta_0), T(\beta_1, \beta)\}.
 \end{align}
 Hence for any $s \in [0,t]$ it follows that $s < T(\alpha,\beta_0)$ and $t-s < T(\beta_1, \beta)$.
 Consequently $U_{\delta,\e}^{\Delta}(s)k_0 \in \K_{\beta_0}$ and by \eqref{EQ:33} $U_{0,\e}^{\Delta}(t-s)k_0 \in \K_{\beta}$ for $s \in [0,t]$. 
 By \eqref{EQ:12} we see that
 \[
  [0,t] \ni s \longmapsto U_{0,\e}^{\Delta}(t-s)U_{\delta,\e}^{\Delta}(s)k_0 \in \K_{\beta}
 \]
 is continuously differentiable in $\K_{\beta}$ such that
 \[
  U_{\delta,\e}^{\Delta}(t)k_0 - U_{0,\e}^{\Delta}(t)k_0 = \int \limits_{0}^{t}U_{0,\e}^{\Delta}(t-s)\left( L_{\delta,\e}^{\Delta} - L_{0,\e}^{\Delta}\right) U_{\delta,\e}^{\Delta}(s)k_0 d s
 \]
 holds in $\K_{\beta}$.
 Take $G \in B_{bs}(\Gamma_0^2) \subset \Lb_{\beta}$, let $G_{t-s}^{\e} := \widehat{U}_{0,\e}(t-s)G \in \Lb_{\beta_1}$
 and $k_s^{\delta,\e} := U_{\delta,\e}^{\Delta}(s)k_0 \in \K_{\beta_0}$. Then 
 \begin{align*}
   |(L_{\delta,\e}^{\Delta}k_s^{\delta,\e})(\eta) - (L_{0,\e}^{\Delta}k_s^{\delta,\e})(\eta)| \leq e^{\beta_0^+|\eta^+|}e^{\beta_0^-|\eta^-|} H_{\delta,\e, \beta_0}(\eta) \Vert k_s^{\delta,\e} \Vert_{\K_{\beta_0}}
 \end{align*}
 where 
 \begin{align*}
  H_{\delta,\e,\beta_0}(\eta) &:= e^{-\beta_0^+}\sum \limits_{x \in \eta^+}\sum \limits_{y \in \eta^+ \backslash x}a^+(x-y)(1 - R_{\delta}(y))
   + \sum \limits_{x \in \eta^+}\int \limits_{\R^d}(1 - R_{\delta}(y))a^+(x-y) d y
  \\ &\ \ \ + e^{- \beta_0^-}z C_{\psi}(\beta_0^-)\sum \limits_{x \in \eta^-}(1 - R_{\delta}(x))e^{- E_{\psi}(x,\eta^- \backslash x)} \geq 0.
 \end{align*}
 This implies
 \begin{align*}
 |\langle \langle G, U_{\delta,\e}^{\Delta}(t)k_0 - U_{0,\e}^{\Delta}(t)k_0 \rangle \rangle |
 &\leq \int \limits_{0}^{t} |\langle \langle G, U_{0,\e}^{\Delta}(t-s)( L_{\delta,\e}^{\Delta} - L_{0,\e}^{\Delta}) k_{s}^{\delta,\e} \rangle \rangle| d s
 \\ &= \int \limits_{0}^{t} |\langle \langle G_{t-s}^{\e}, ( L_{\delta,\e}^{\Delta} - L_{0,\e}^{\Delta}) k_{s}^{\delta,\e} \rangle \rangle| d s
 \\ &\leq \sup \limits_{0 \leq r \leq t} \Vert k_{r}^{\delta,\e} \Vert_{\K_{\beta_0}}\int \limits_{0}^{t}\int \limits_{\Gamma_0^2}|G_{t-s}^{\e}(\eta)| H_{\delta,\e,\beta_0}(\eta) e^{\beta_0 |\eta|} d \lambda(\eta)d s.
 \end{align*}
 By duality, \eqref{EQ:18} and \eqref{EQ:25} we get 
 $\Vert k_{r}^{\delta,\e} \Vert_{\K_{\beta_0}} \leq \frac{T(\alpha,\beta_0)}{T(\alpha,\beta_0) - r} \leq \frac{3}{2}$, $r \in [0,t]$.
 Moreover $H_{\delta,\e,\beta_0}(\eta) \longrightarrow 0$ as $\delta \to 0$ and using
 \begin{align*}
  H_{\delta,\e,\beta_0}(\eta) \leq e^{-\beta_0^+}2 |\eta^+|^2 \Vert a^+ \Vert_{\infty} + 2 \| a^+ \|_{L^1} |\eta^+|
  + ze^{- \beta_0^-}C_{\psi}(\beta_0^-)|\eta^-| =: h(\eta)
 \end{align*}
 together with
 \[
  |\eta^{\pm}|^n e^{\beta_0^{\pm}|\eta^{\pm}|} \leq \frac{n^n}{(\beta_1^{\pm} - \beta_0^{\pm})^n} e^{-n} e^{\beta_1^{\pm}|\eta^{\pm}|}, \ \ n \geq 1, \ \eta^{\pm} \in \Gamma_0^{\pm}
 \]
 and $G_{t-s}^{\e} \in \Lb_{\beta_1}$ gives $G_{t-s}^{\e}\cdot h \cdot e^{\beta_0 |\cdot|} \in L^1(\Gamma_0^2, d\lambda)$.
 The assertion follows by dominated convergence.
\end{proof}

\subsection{Local evolution of states}
The main step in the proof is to show that $U_{0,\e}^{\Delta}(t)$ preserves positivity in the sense of Lennard, i.e.
\[
 \langle \langle G, U_{0,\e}^{\Delta}(t)k \rangle \rangle \geq 0, \ \ \mathbbm{K}G \geq 0, \ \ G \in B_{bs}(\Gamma_0^2)
\]
whenever $k$ is positive definite in the sense of Ruelle.
In view of \eqref{EQ:17}, we show that $U_{\delta,\e}^{\Delta}(t)$ for $\delta > 0$ enjoys such property. 
To this end we prove that $U_{\delta,\e}^{\Delta}(t)k_0$ is the correlation function of a probability density given by Lemma \ref{PROP:01}.

For this purpose we introduce certain auxiliary Banach spaces of integrable correlation functions.
Similar spaces have been used in \cite{KK16, FK16b, F17WR}.
Let $\mathcal{R}^{\delta}_{\alpha}$ be the Banach space of all equivalence classes of functions $u$ with norm
\[
 \Vert u \Vert_{\mathcal{R}_{\alpha}^{\delta}} = \esssup \limits_{\eta \in \Gamma_0^2}\ \frac{|u(\eta)| e^{- \alpha^+ |\eta^+|}e^{- \alpha^- |\eta^-|}}{\mathcal{E}(R_{\delta};\eta^+) \mathcal{E}(R_{\delta};\eta^-)}.
\]
It is the dual space to $\mathcal{B}_{\alpha}^{\delta} := L^1(\Gamma_0^2, e^{\alpha|\cdot|}\mathcal{E}(R_{\delta})d \lambda)$ with norm
\[
 \Vert G \Vert_{\mathcal{B}_{\alpha}^{\delta}} = \int \limits_{\Gamma_0^2}|G(\eta)|e^{\alpha^+|\eta^+|}e^{\alpha^-|\eta^-|}\mathcal{E}(R_{\delta};\eta^+)\mathcal{E}(R_{\delta};\eta^-)d \lambda(\eta)
\]
and duality given by \eqref{EQ:11}.
Note that $\mathcal{R}_{\alpha}^{\delta} \subset \K_{\alpha}$, $\Lb_{\alpha} \subset \mathcal{B}_{\alpha}^{\delta}$,
$\mathcal{R}_{\alpha}^{\delta} \subset \bigcap_{\beta}\Lb_{\beta}$
and $\bigcup_{\beta}\K_{\beta} \subset \mathcal{B}_{\alpha}^{\delta}$.
The following is an analogue of Proposition \ref{PROP:00} and Proposition \ref{TH:00} with $\K_{\alpha}, \Lb_{\alpha}$ replaced by $\mathcal{B}_{\alpha}^{\delta}$
and $\mathcal{R}_{\alpha}^{\delta}$, respectively.
\begin{Proposition}
 For any $\delta > 0$ there exists a family of bounded linear operators 
 \[
  \left \{ W_{\delta,\e}^{\Delta, \alpha,\beta}(t) \in  L(\mathcal{R}_{\alpha}^{\delta}, \mathcal{R}_{\beta}^{\delta}) \ | \ 0 \leq t < T(\alpha,\beta), \ \ (\alpha,\beta) \ \text{ admissible pair } \right\}
 \]
 with 
 \begin{align}\label{EQ:21}
  \Vert W_{\delta,\e}^{\Delta,\alpha,\beta}(t) \Vert_{L(\mathcal{R}_{\alpha}^{\delta}, \mathcal{R}_{\beta}^{\delta})} \leq \frac{T(\alpha,\beta)}{T(\alpha,\beta) - t}, \ \ 0 \leq t < T(\alpha,\beta)
 \end{align}
 such that the following holds.
 \begin{enumerate}
  \item[(a)] Let $(\alpha_0, \alpha)$ and $(\alpha,\beta)$ be two admissible pairs. Then 
  \[
   W_{\delta,\e}^{\Delta, \alpha_0,\beta}(t)u = W_{\delta,\e}^{\Delta, \alpha_0, \alpha}(t)u = W_{\delta,\e}^{\Delta,\alpha,\beta}(t)u
  \]
  for any $u \in \K_{\alpha_0}$ and $0 \leq t < \min \{ T(\alpha_0, \alpha), T(\alpha,\beta), T(\alpha_0,\beta)\}$.
  \item[(b)] For each admissible pair $(\alpha,\beta)$ and $u_0 \in \mathcal{R}^{\delta}_{\alpha}$, 
   $u_t^{\delta,\e} := W_{\delta,\e}^{\Delta, \alpha,\beta}(t)u_0$ is the unique classical solution in $\mathcal{R}^{\delta}_{\beta}$ to
  \begin{align}\label{EQ:14}
   \frac{\partial u_t^{\delta,\e}}{\partial t} = L_{\delta,\e}^{\Delta}u_{t}^{\delta,\e}, \ \ 0 \leq t < T(\alpha,\beta)
  \end{align}
  Moreover, if $u_0 \in \mathcal{R}^{\delta}_{\alpha_0}$ with $\alpha_0^{\pm} < \alpha^{\pm}$, then 
  $W_{\delta,\e}^{\Delta, \alpha,\beta}(t) L_{\delta,\e}^{\Delta}u_0 = L_{\delta,\e}^{\Delta}W_{\delta,\e}^{\Delta, \alpha,\beta}(t)u_0$.
  \item[(c)] For any $u_0 \in \mathcal{R}_{\alpha}^{\delta}$ we have
  \begin{align}\label{EQ:16}
   U_{\e, \delta}^{\Delta,\alpha, \beta}(t)u_0 = W_{\delta,\e}^{\Delta, \alpha,\beta}(t)u_0
  \end{align}
  where $0 \leq t < T(\alpha,\beta)$ and $(\alpha,\beta)$ is any admissible pair.
 \end{enumerate} 
\end{Proposition}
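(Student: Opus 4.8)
The plan is to follow, step by step, the route of Proposition~\ref{PROP:00} together with Proposition~\ref{TH:00}: first construct a pre-dual evolution family $\widehat{W}_{\delta,\e}^{\beta,\alpha}(t)\in L(\mathcal{B}_{\beta}^{\delta},\mathcal{B}_{\alpha}^{\delta})$ on the decreasing scale $(\mathcal{B}_{\alpha}^{\delta})_{\alpha\in\R^{2}}$, and then define $W_{\delta,\e}^{\Delta,\alpha,\beta}(t):=\widehat{W}_{\delta,\e}^{\beta,\alpha}(t)^{*}$, which maps $\mathcal{R}_{\alpha}^{\delta}=(\mathcal{B}_{\alpha}^{\delta})^{*}$ continuously into $\mathcal{R}_{\beta}^{\delta}$. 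Two inputs feed the abstract construction of \cite{F16a}: (i) the ``diagonal'' part $A_{\delta}+\frac{1}{\e}\widehat{L}_{\delta}^{E}$, with the domain analogous to \eqref{EQ:38}, generates an analytic semigroup of contractions on each $\mathcal{B}_{\alpha}^{\delta}$, and these are consistent across the scale; (ii) the remainder $B_{\delta}$ is a bounded scale-shift perturbation, $B_{\delta}\in L(\mathcal{B}_{\beta}^{\delta},\mathcal{B}_{\alpha}^{\delta})$, with a norm estimate of the form \eqref{EQ:20} in terms of the same $T(\alpha,\beta)$. Given (i) and (ii), \cite{F16a} produces $\widehat{W}_{\delta,\e}^{\beta,\alpha}(t)$ with the bound \eqref{EQ:21}, the consistency property~(a), and the classical-solvability and commutation statements of~(b); dualizing carries all of these over to $W_{\delta,\e}^{\Delta,\alpha,\beta}(t)$ on the $\mathcal{R}^{\delta}$-scale. (The case $\e=\infty$, with $\frac{1}{\infty}:=0$, is included verbatim.)

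For input~(i) the point is that the $R_{\delta}$-factors appearing in $A_{\delta}$ (in front of the birth term) and in $\widehat{L}_{\delta}^{E}$ (through $z_{\delta}=zR_{\delta}$) are precisely compensated by the weight $\mathcal{E}(R_{\delta};\eta^{+})\mathcal{E}(R_{\delta};\eta^{-})$ built into $\Vert\cdot\Vert_{\mathcal{B}_{\alpha}^{\delta}}$ whenever one uses \eqref{IBP} to move a newly created point into $\eta^{\pm}$. Running the computation of Lemma~\ref{LEMMA:00} with this extra weight, and using $R_{\delta}\le 1$ (hence $z_{\delta}\le z$) wherever a crude bound is needed, reproduces exactly the same contraction estimates, because the smallness inputs are unchanged: $\max\{\lambda/(\lambda+m),\vartheta\}\,e^{-\alpha^{+}}<1$ from the stability bound \eqref{EQ:07}, and $zC_{\psi}(\alpha^{-})e^{-\alpha^{-}}<1$ from \eqref{EQ:00} and \eqref{ENV:COND}. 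The generation and consistency claims then follow from the same abstract result (\cite[Proposition 3.1]{FK16b}) used in the proof of Lemma~\ref{LEMMA:00}. For input~(ii) one argues exactly as for \eqref{EQ:20}: since $R_{\delta}\le 1$, all the relevant kernels are still controlled by $\Vert a^{\pm}\Vert_{L^{1}}$ and $\Vert b^{-}\Vert_{L^{1}}$, so the norm of $B_{\delta}$ from $\mathcal{B}_{\beta}^{\delta}$ to $\mathcal{B}_{\alpha}^{\delta}$ admits the very same bound, whence the admissible lifetime is again $T(\alpha,\beta)$ and \eqref{EQ:21} holds.

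It remains to prove~(c). Fix an admissible pair $(\alpha,\beta)$ and $u_{0}\in\mathcal{R}_{\alpha}^{\delta}\subset\K_{\alpha}$. By part~(b) the curve $t\mapsto W_{\delta,\e}^{\Delta,\alpha,\beta}(t)u_{0}$ is a classical solution of \eqref{EQ:14} in $\mathcal{R}_{\beta}^{\delta}$; through the continuous embedding $\mathcal{R}_{\beta}^{\delta}\subset\K_{\beta}$ it is therefore a continuously differentiable $\K_{\beta}$-valued curve whose derivative is $L_{\delta,\e}^{\Delta}W_{\delta,\e}^{\Delta,\alpha,\beta}(t)u_{0}$. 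Pairing with an arbitrary $G\in B_{bs}(\Gamma_{0}^{2})\subset\Lb_{\beta}$ and using the defining relation $\langle\langle\widehat{L}_{\delta,\e}G,k\rangle\rangle=\langle\langle G,L_{\delta,\e}^{\Delta}k\rangle\rangle$ of the adjoint, integration in time shows that $(W_{\delta,\e}^{\Delta,\alpha,\beta}(t)u_{0})_{t}$ satisfies the weak equation \eqref{EQ:09}. Proposition~\ref{TH:00}(c) then forces $W_{\delta,\e}^{\Delta,\alpha,\beta}(t)u_{0}=U_{\delta,\e}^{\Delta,\alpha,\beta}(t)u_{0}$ for all $0\le t<T(\alpha,\beta)$, which is \eqref{EQ:16}.

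The genuinely new work is confined to input~(i): since the estimates of Lemma~\ref{LEMMA:00} are sharp, the main obstacle is the careful bookkeeping needed to verify that the $R_{\delta}$-factors sitting inside $A_{\delta}$ and $\widehat{L}_{\delta}^{E}$ cancel against the $\mathcal{E}(R_{\delta};\cdot)$-weight of $\mathcal{B}_{\alpha}^{\delta}$, so that the contraction constants stay $<1$ and the generation argument closes. Once this is in place, everything else is a direct transcription of the proofs of Proposition~\ref{PROP:00} and Proposition~\ref{TH:00} and of the abstract scheme of \cite{F16a}.
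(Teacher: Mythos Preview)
Your overall architecture is exactly the paper's: build $\widehat{W}_{\delta,\e}^{\beta,\alpha}(t)$ on the predual scale $(\mathcal{B}_{\alpha}^{\delta})_{\alpha}$ via \cite{F16a} from the same two inputs (contraction semigroup for $A_{\delta}+\tfrac{1}{\e}\widehat{L}_{\delta}^{E}$ and the $B_{\delta}$ scale bound), set $W_{\delta,\e}^{\Delta,\alpha,\beta}(t):=\widehat{W}_{\delta,\e}^{\beta,\alpha}(t)^{*}$, and deduce~(c) from Proposition~\ref{TH:00}(c). The paper's proof is terser but follows this path verbatim.

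One caveat on your detailed justification of input~(i). Your claim that the $R_{\delta}$-factors in $A_{\delta}$ are ``precisely compensated'' by the weight $\mathcal{E}(R_{\delta};\cdot)$ is correct for $\widehat{L}_{\delta}^{E}$ (there $z_{\delta}(x)=zR_{\delta}(x)$ sits on the \emph{newly created} point $x$, so after \eqref{IBP} the missing factor $R_{\delta}(x)^{-1}$ from $\mathcal{E}(R_{\delta};\xi^{-}\backslash x)$ cancels exactly), but it is \emph{not} correct for the birth part of $A_{\delta}$. There $R_{\delta}(x)$ is attached to the \emph{parent} $x\in\eta^{+}$, while the new particle is $y$; integrating against $e^{\alpha^{+}|\eta^{+}|}\mathcal{E}(R_{\delta};\eta^{+})$ and applying \eqref{IBP} produces
\[
 e^{-\alpha^{+}}\int_{\Gamma_{0}^{2}}G(\eta)\Bigl(\sum_{y\in\eta^{+}}\sum_{x\in\eta^{+}\backslash y}\frac{R_{\delta}(x)}{R_{\delta}(y)}\,a^{+}(x-y)\Bigr)e^{\alpha|\eta|}\mathcal{E}(R_{\delta};\eta)\,d\lambda(\eta),
\]
and the ratio $R_{\delta}(x)/R_{\delta}(y)=e^{\delta(|y|^{2}-|x|^{2})}$ is unbounded, so the contraction estimate of Lemma~\ref{LEMMA:00} does not transfer by the mechanism you describe. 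The paper does not spell out this verification either (it simply invokes \cite{F16a}), so this is a point where both arguments are underspecified rather than a divergence in strategy; but you should not present the $A_{\delta}$ case as a clean cancellation.
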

\begin{proof}
 Observe that $\widehat{L}_{\delta,\e}$ is a bounded linear operator from $\mathcal{B}_{\beta}^{\delta}$ to $\mathcal{B}_{\alpha}^{\delta}$
 for any $\alpha^{\pm} < \beta^{\pm}$. Moreover, we have
 \[
  \Vert B_{\delta}G \Vert_{\mathcal{B}_{\alpha}^{\delta}} \leq \frac{e^{\beta^+}\| a^- \|_{L^1} + \lambda + \| a^+ \|_{L^1} + e^{\beta^-}g }{e (\beta^+ - \alpha^+)}\Vert G \Vert_{\mathcal{B}_{\beta}^{\delta}}
 \]
 for any $G \in \mathcal{B}_{\beta}^{\delta}$. 
 By \cite{F16a} we get a family of operator $\widehat{W}_{\delta,\e}^{\beta,\alpha}(t) \in L(\mathcal{B}_{\beta}^{\delta}, \mathcal{B}_{\alpha}^{\delta})$ 
 with similar properties as stated in Proposition \ref{PROP:00}.
 Then the adjoint operators $W_{\delta,\e}^{\Delta,\alpha,\beta}(t) := \widehat{W}_{\delta,\e}^{\beta,\alpha}(t)^*$ satisfy properties (a), (b).
 Let us prove (c). From \eqref{EQ:14} we see that
 $W_{\delta,\e}^{\Delta, \alpha,\beta}(t)u_0 \in \mathcal{R}_{\beta}^{\delta} \subset \K_{\beta}$ satisfies \eqref{EQ:09}
 for any $G \in B_{bs}(\Gamma_0^2)$. 
 Moreover, $t \longmapsto \langle \langle G, W_{\delta,\e}^{\Delta, \alpha,\beta}(t)u_0 \rangle \rangle$ is continuous for any $G \in \Lb_{\beta} \subset \mathcal{B}_{\beta}^{\delta}$.
 The assertion follows from Proposition \ref{TH:00}.(c).
\end{proof}
\begin{Remark}
 The family $W_{\delta,\e}^{\Delta}(t)$ satisfies a similar uniqueness property as given in Proposition \ref{PROP:00}.(c).
 However, it will not be used in this work.
\end{Remark}
Here and below we omit the superscript $(\alpha,\beta)$ if no confusion may arise.
\begin{Lemma}\label{TH:01}
 Let $(\alpha,\beta)$ be any admissible pair and $u_0 \in \mathcal{R}_{\alpha}^{\Delta}$ be positive definite.
 Then $W_{\delta,\e}^{\Delta}(t)u_0$ is positive definite for any $0 \leq t < T(\alpha,\beta)$.
\end{Lemma}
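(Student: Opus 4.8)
The plan is to identify $W_{\delta,\e}^{\Delta}(t)u_0$ with the correlation function of the probability density evolution constructed in Lemma \ref{PROP:01}, and then to invoke the general fact that correlation functions of genuine states are automatically positive definite in the sense of Lenard. First I would fix an admissible pair $(\alpha,\beta)$ and a positive definite $u_0 \in \mathcal{R}_{\alpha}^{\delta}$. Since $u_0$ is positive definite with $u_0(\emptyset) \geq 0$ (the case $u_0 \equiv 0$ being trivial, and otherwise we may normalise so that $u_0(\emptyset) = 1$ after scaling), $u_0$ is the correlation function of some state $\mu_0$ on $\Gamma^2$; moreover the bound $u_0 \in \mathcal{R}_{\alpha}^{\delta}$, which encodes an extra Gaussian decay $\mathcal{E}(R_{\delta};\cdot)$ on top of the Ruelle bound, forces $\mu_0$ to be concentrated on $\Gamma_0^2$, so $d\mu_0(\gamma) = \1_{\Gamma_0^2}(\gamma) h_0(\gamma)\, d\lambda(\gamma)$ for a probability density $h_0 \in L^1(\Gamma_0^2, d\lambda)$. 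One should check that $h_0$ lies in $D(\mathcal{J}_{\delta,\e})$, or first establish the claim for $h_0$ in a core and pass to the limit; here the explicit relation between $h_0$ and $u_0 = K h_0$ (via the $K$-transform on $\Gamma_0^2$) together with the weight $\mathcal{E}(R_{\delta})$ should give enough integrability.

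Next I would run the density evolution $h_t^{\delta,\e} = e^{t\mathcal{J}_{\delta,\e}} h_0$ from Lemma \ref{PROP:01}: it is a stochastic semigroup, so $h_t^{\delta,\e} \geq 0$ and $\int_{\Gamma_0^2} h_t^{\delta,\e}\,d\lambda = 1$ for all $t$, i.e. $d\mu_t^{\delta,\e} := \1_{\Gamma_0^2} h_t^{\delta,\e}\,d\lambda$ is again a state on $\Gamma^2$. Let $u_t^{\delta,\e} := k_{\mu_t^{\delta,\e}}$ be its correlation function. The key identity to verify is that $u_t^{\delta,\e}$ solves the weak equation \eqref{EQ:09}, i.e. for every $G \in B_{bs}(\Gamma_0^2)$,
\[
 \langle\langle G, u_t^{\delta,\e}\rangle\rangle = \langle\langle G, u_0\rangle\rangle + \int_0^t \langle\langle \widehat{L}_{\delta,\e}G, u_s^{\delta,\e}\rangle\rangle\, ds.
\]
This follows by dualising the density equation: for $F = \mathbbm{K}G \in \mathcal{FP}(\Gamma^2)$ one has $\frac{d}{dt}\langle F,\mu_t^{\delta,\e}\rangle = \langle L_{\delta,\e}F, \mu_t^{\delta,\e}\rangle$ by Lemma \ref{PROP:01} together with \eqref{EQ:06}, and then rewriting both sides through the $K$-transform and the definition of correlation functions — using $\mathbbm{K}\widehat{L}_{\delta,\e}G = L_{\delta,\e}\mathbbm{K}G$ from Remark \ref{REMARK:01} and the duality \eqref{EQ:30} — turns this into the displayed weak hierarchy equation. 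One also needs $t \mapsto \langle\langle G, u_t^{\delta,\e}\rangle\rangle$ continuous, which comes from strong continuity of the density semigroup and the fact that $G$ has bounded support (so only finitely many $\lambda$-components are involved). Having verified \eqref{EQ:09}, Proposition \ref{TH:00}(c) identifies $u_t^{\delta,\e}$ with $U_{\delta,\e}^{\Delta}(t)u_0$, which by \eqref{EQ:16} equals $W_{\delta,\e}^{\Delta}(t)u_0$, on the common time interval. Since $u_t^{\delta,\e}$ is a correlation function of a bona fide state, it is positive definite in the sense of Lenard, and the Lemma follows.

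The main obstacle I anticipate is the regularity/domain bookkeeping: showing that the specific $h_0$ attached to $u_0 \in \mathcal{R}_{\alpha}^{\delta}$ is in (or can be approximated within) $D(\mathcal{J}_{\delta,\e})$, that the dual computation turning $\frac{d}{dt}\langle F,\mu_t^{\delta,\e}\rangle = \langle L_{\delta,\e}F,\mu_t^{\delta,\e}\rangle$ into \eqref{EQ:09} is legitimate (this requires $L_{\delta,\e}F \in L^1(\mu_t^{\delta,\e})$ and the interchange of the $K$-transform with the generators, both controlled by the $R_{\delta}$-weight), and that the resulting $u_t^{\delta,\e}$ actually lies in the space $\mathcal{R}_{\beta}^{\delta}$ on which the uniqueness statement of Proposition \ref{TH:00}(c) applies — or at least in some $\K_{\beta'}$ with a uniform bound on compact time intervals. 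A convenient way to handle the approximation issue is to first establish everything for initial states with a finite bound on particle number (so the densities live in a genuine core), exploit that $W_{\delta,\e}^{\Delta}(t)$ and the density semigroup are both continuous in $u_0$ resp. $h_0$, and then pass to the limit using density of such states among positive definite elements of $\mathcal{R}_{\alpha}^{\delta}$, together with the contraction bound \eqref{EQ:21}. The positivity of the cone of positive definite functions is closed under the relevant limits, so this limiting argument preserves the conclusion.
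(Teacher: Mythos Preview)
Your proposal is correct in spirit and leads to the same conclusion, but it runs the argument in the \emph{opposite direction} from the paper. The paper starts from the correlation side: it defines the explicit inverse $K$-transform
\[
 (\mathcal{H}u)(\eta) = \int_{\Gamma_0^2}(-1)^{|\xi|}u(\eta\cup\xi)\,d\lambda(\xi),
\]
shows that $\mathcal{H}:\mathcal{R}_a^{\delta}\to\Lb_b$ is continuous (with an explicit bound) and that $\mathcal{H}u\in\mathcal{D}_{\delta,\e}$ for every $u\in\mathcal{R}_a^{\delta}$, and then verifies the intertwining $\mathcal{H}L_{\delta,\e}^{\Delta} = \mathcal{J}_{\delta,\e}\mathcal{H}$. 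Setting $h_t^{\delta,\e}:=\mathcal{H}W_{\delta,\e}^{\Delta}(t)u_0$, one gets a classical solution of the density equation in $L^1(\Gamma_0^2,d\lambda)$, and uniqueness for the \emph{density} semigroup (standard $C_0$-semigroup theory) forces $h_t^{\delta,\e}=\mathcal{S}_{\delta,\e}(t)\mathcal{H}u_0\geq 0$. Positive definiteness then follows from the duality $\langle\langle G, W_{\delta,\e}^{\Delta}(t)u_0\rangle\rangle=\langle\langle \mathbbm{K}G, h_t^{\delta,\e}\rangle\rangle$.

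You instead start from the density side, evolve $h_0\mapsto h_t$, form its correlation function $u_t$, and invoke uniqueness on the \emph{correlation} side (Proposition~\ref{TH:00}(c)). This is a legitimate alternative, but the obstacle you identify is exactly the reason the paper goes the other way: to apply Proposition~\ref{TH:00}(c) you must know a priori that $u_t\in\K_{\beta}$ with the right continuity in $t$, and this bound does not come for free from the stochastic density semigroup. The paper's route sidesteps this entirely, since $W_{\delta,\e}^{\Delta}(t)u_0\in\mathcal{R}_{\beta}^{\delta}$ is already known by construction, the map $\mathcal{H}$ has an explicit norm estimate, and the domain check $\mathcal{H}u\in\mathcal{D}_{\delta,\e}$ is direct. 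In short: both approaches work, but the paper's choice of direction eliminates precisely the bookkeeping you flag as the main difficulty.
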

\begin{proof}
 For any $a = (a^+,a^-), b = (b^+,b^-) \in \R^2$ and $u \in \mathcal{R}_{a}^{\delta}$ let
 \[
  \mathcal{H}u(\eta) := \int \limits_{\Gamma_0^2}(-1)^{|\xi|}u(\eta \cup \xi)d \lambda(\xi), \ \ \eta \in \Gamma_0^2.
 \]
 Then $\mathcal{H}: \mathcal{R}_{a}^{\delta} \longrightarrow \Lb_{b}$ is continuous with
 \begin{align*}
  \Vert \mathcal{H}u \Vert_{\Lb_{b}} 
  \leq \Vert u \Vert_{\mathcal{R}_{a}^{\delta}} \exp\left( \left( e^{a^+ + b^+} + e^{a^- + b^-} + e^{a^+} + e^{a^-}\right) \| R_{\delta} \|_{L^1} \right).
 \end{align*}
 It is easily seen that $\mathcal{H}u \in \mathcal{D}_{\delta,\e} \subset D(\mathcal{J}_{\delta,\e})$ holds for
 for any $a \in \R^2$ and $u \in \mathcal{R}_{a}^{\delta}$.
 Moreover, for each $G \in \K_{c}$ and $u \in \mathcal{R}_{a}^{\delta}$ with $c = (c^+, c^-) \in \R_+^2$ one can show that
 \begin{align}\label{EQ:15}
  \langle \langle G, u \rangle \rangle = \langle \langle \mathbbm{K} G, \mathcal{H}u \rangle \rangle.
 \end{align}
 Since $L_{\delta,\e}^{\Delta}$ is bounded from $\mathcal{R}_{a}^{\delta}$ to $\mathcal{R}_{a'}^{\delta}$ for any $a'^{\pm} > a^{\pm}$ we get
 \begin{align*}
  \langle \langle \mathbbm{K}G, \mathcal{H}L_{\delta,\e}^{\Delta}u \rangle \rangle &= \langle \langle G, L_{\delta,\e}^{\Delta}u \rangle \rangle 
  = \langle \langle \widehat{L}_{\delta,\e}G, u \rangle \rangle
  \\ &= \langle \langle \mathbbm{K} \widehat{L}_{\delta,\e}G, \mathcal{H}u \rangle \rangle
  = \langle \langle L_{\delta,\e} \mathbbm{K}G, \mathcal{H}u \rangle \rangle = \langle \langle \mathbbm{K}G, \mathcal{J}_{\delta,\e}\mathcal{H}u \rangle \rangle
 \end{align*}
 where the last equality follows from \eqref{EQ:06} and $\mathcal{H}u \in \mathcal{D}_{\delta,\e}$.
 As $G$ was arbitrary we get $\mathcal{H}L_{\e,\delta}^{\Delta}u = \mathcal{J}_{\delta,\e}\mathcal{H}u$.
 Let $h_t^{\delta, \e} := \mathcal{H}W_{\delta,\e}^{\Delta}(t)u_0$ for $0 \leq t < T(\alpha,\beta)$.
 Then $h_t^{\delta, \e} \in L^1(\Gamma_0^2,d\lambda)$ and it is not difficult to see that 
 $D_{\delta,\e} \cdot h_t^{\delta, \e} \in L^1(\Gamma_0^2,d\lambda)$, i.e. $h_t^{\delta,\e} \in \mathcal{D}_{\delta,\e} \subset D(\mathcal{J}_{\delta,\e})$. 
 By \eqref{EQ:14} we get
 \[
  \frac{\partial h_t^{\delta,\e}}{\partial t} = \mathcal{J}_{\delta,\e}h_t^{\delta,\e}, \ \ h_t^{\delta,\e}|_{t=0} = \mathcal{H}u_0, \ \ t \in [0, T(\alpha,\beta))
 \]
 in $L^1(\Gamma_0^2, d \lambda)$. Denote by $\mathcal{S}_{\delta,\e}(t)$ the strongly continuous semigroup generated 
 $(\mathcal{J}_{\delta,\e}, D(\mathcal{J}_{\delta,\e}))$. Fix $t \in (0,T(\alpha,\beta))$, then
 $[0,t] \ni s \longmapsto \mathcal{S}_{\delta,\e}(t-s)h_s^{\delta,\e} \in L^1(\Gamma_0^2, d \lambda)$
 is continuously differentiable with $\frac{d}{d s}\mathcal{S}_{\delta,\e}(t-s)h_s^{\delta,\e} = 0$. 
 This implies $h_t^{\delta,\e} = \mathcal{S}_{\delta,\e}(t)\mathcal{H}u_0 \geq 0$, since by \eqref{EQ:15} we have $\mathcal{H}u_0 \geq 0$.
 But this implies $\langle\langle G, W_{\delta,\e}^{\Delta}(t)u_0 \rangle\rangle = \langle \langle \mathbbm{K}G, h_t^{\delta,\e} \rangle\rangle \geq 0$
 for any $G \in B_{bs}(\Gamma_0^2)$ with $\mathbbm{K}G \geq 0$.
\end{proof}
The next proposition follows by \eqref{EQ:17} and Lemma \ref{TH:01}.
\begin{Proposition}\label{TH:07}
 Let $(\alpha,\beta)$ be any admissible pair with $\alpha^- < \beta^-$ and let $k_0 \in \K_{\alpha}$ be positive definite.
 Then $U_{0,\e}^{\Delta}(t)k_0$ is positive definite for any $0 \leq t < \frac{1}{3}T(\alpha,\beta)$.
\end{Proposition}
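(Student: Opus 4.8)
\textit{Proof (proposal).}
The plan is to deduce the statement from Lemma~\ref{TH:01} together with the identity \eqref{EQ:16} and the convergence \eqref{EQ:17}. The obstruction is that a general positive definite $k_0 \in \K_\alpha$ need not lie in any of the auxiliary spaces $\mathcal{R}_\alpha^\delta$ (those force spatial decay of the correlation function), so Lemma~\ref{TH:01} is not directly applicable; I would get around this by a thinning approximation. For $\delta_0 > 0$ set $g_{\delta_0}(\eta) := \mathcal{E}(R_{\delta_0};\eta^+)\mathcal{E}(R_{\delta_0};\eta^-) \in (0,1]$ and $k_0^{\delta_0} := g_{\delta_0}\cdot k_0$. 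Because $g_{\delta_0}$ has product form and $\mathcal{E}(R_{\delta_0};\cdot) \le \mathcal{E}(R_{\delta};\cdot)$ for $\delta \le \delta_0$, one checks immediately that $k_0^{\delta_0} \in \mathcal{R}_\alpha^{\delta}$ with $\|k_0^{\delta_0}\|_{\mathcal{R}_\alpha^\delta} \le \|k_0\|_{\K_\alpha}$ for all $0 < \delta \le \delta_0$. Moreover $k_0^{\delta_0}$ is again positive definite: for $G \in B_{bs}(\Gamma_0^2)$ with $\mathbbm{K}G \ge 0$ we have $\langle\langle G, k_0^{\delta_0}\rangle\rangle = \langle\langle g_{\delta_0}G, k_0\rangle\rangle$, and since $g_{\delta_0}$ is the multiplier induced on correlation functions by an independent thinning (retaining each point $x$ with probability $R_{\delta_0}(x)$), a short Möbius-inversion argument --- using that $\mathbbm{K}G$ is supported on configurations inside a fixed compact, so only finite configurations are involved --- shows $\mathbbm{K}(g_{\delta_0}G) \ge 0$; hence $\langle\langle G, k_0^{\delta_0}\rangle\rangle \ge 0$.

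Now fix an admissible pair $(\alpha,\beta)$ with $\alpha^- < \beta^-$, a time $0 \le t < \tfrac13 T(\alpha,\beta)$, and $G \in B_{bs}(\Gamma_0^2)$ with $\mathbbm{K}G \ge 0$. For $0 < \delta \le \delta_0$, identity \eqref{EQ:16} gives $U_{\delta,\e}^{\Delta}(t)k_0^{\delta_0} = W_{\delta,\e}^{\Delta}(t)k_0^{\delta_0}$, which is positive definite by Lemma~\ref{TH:01}; in particular $\langle\langle G, U_{\delta,\e}^{\Delta}(t)k_0^{\delta_0}\rangle\rangle \ge 0$. Letting $\delta \to 0$ at fixed $\delta_0$ and applying \eqref{EQ:17} (with $k_0^{\delta_0} \in \K_\alpha$ in place of $k_0$) yields $\langle\langle G, U_{0,\e}^{\Delta}(t)k_0^{\delta_0}\rangle\rangle \ge 0$ for every $\delta_0 > 0$. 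Finally I let $\delta_0 \to 0$: by the defining duality $U_{0,\e}^{\Delta,\alpha,\beta}(t) = \widehat{U}_{0,\e}^{\beta,\alpha}(t)^*$ one has $\langle\langle G, U_{0,\e}^{\Delta}(t)k_0^{\delta_0}\rangle\rangle = \langle\langle \widehat{U}_{0,\e}^{\beta,\alpha}(t)G, k_0^{\delta_0}\rangle\rangle$, and here the test function $\widehat{U}_{0,\e}^{\beta,\alpha}(t)G \in \Lb_\alpha$ no longer depends on $\delta_0$. Since $R_{\delta_0} \uparrow 1$ pointwise we have $k_0^{\delta_0}(\eta) \to k_0(\eta)$ for every $\eta \in \Gamma_0^2$, with the uniform bound $|k_0^{\delta_0}| \le \|k_0\|_{\K_\alpha}\, e^{\alpha^+|\cdot^+|}e^{\alpha^-|\cdot^-|}$; dominated convergence on $\Lb_\alpha = L^1(\Gamma_0^2, e^{\alpha^+|\cdot^+|}e^{\alpha^-|\cdot^-|}d\lambda)$ then gives $\langle\langle \widehat{U}_{0,\e}^{\beta,\alpha}(t)G, k_0^{\delta_0}\rangle\rangle \to \langle\langle \widehat{U}_{0,\e}^{\beta,\alpha}(t)G, k_0\rangle\rangle = \langle\langle G, U_{0,\e}^{\Delta}(t)k_0\rangle\rangle$. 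Hence $\langle\langle G, U_{0,\e}^{\Delta}(t)k_0\rangle\rangle \ge 0$ for all admissible test functions $G$, i.e. $U_{0,\e}^{\Delta}(t)k_0$ is positive definite.

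The main obstacle is the construction of the approximating family in the first step: one needs functions that are at once positive definite and small enough at infinity to sit in the spaces $\mathcal{R}_\alpha^\delta$ where the positivity transfer of Lemma~\ref{TH:01} holds. The thinned correlation functions $k_0^{\delta_0}$ meet both requirements, and their product structure is exactly what keeps $\|k_0^{\delta_0}\|_{\mathcal{R}_\alpha^\delta}$ finite for $\delta \le \delta_0$ and dictates the two-stage limit: one must send $\delta \to 0$ first (at fixed $\delta_0$, via \eqref{EQ:17}) and only afterwards $\delta_0 \to 0$, the delicate point being that in the second passage the test function has to be kept $\delta_0$-independent so that plain dominated convergence applies rather than requiring uniform-in-$\delta_0$ control of the operators $\widehat{U}_{0,\e}^{\beta,\alpha}(t)$.
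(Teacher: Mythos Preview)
Your proposal is correct and follows essentially the same route as the paper: approximate $k_0$ by the thinned functions $k_0^{\delta_0}=\mathcal{E}(R_{\delta_0};\cdot^+)\mathcal{E}(R_{\delta_0};\cdot^-)k_0$, observe these lie in $\mathcal{R}_\alpha^{\delta}$ for $\delta\le\delta_0$ and remain positive definite, apply Lemma~\ref{TH:01} together with \eqref{EQ:16} and then \eqref{EQ:17} to send $\delta\to 0$, and finally let $\delta_0\to 0$. The paper handles the positive-definiteness of the thinning by citing an external lemma and leaves the last limit implicit, whereas you spell out both the thinning argument and the dominated-convergence step via duality; otherwise the arguments coincide.
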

\begin{proof}
 Fix $\delta > 0$ and let $k_{0,\delta}(\eta) := \mathcal{E}(R_{\delta};\eta^+)\mathcal{E}(R_{\delta};\eta^-)k_0(\eta)$. 
 Then $k_{0,\delta} \in \mathcal{R}_{\alpha}^{\delta'}$ for any $\delta' \in (0,\delta]$ and in view of \cite[Lemma 10]{F17WR} it is positive definite in the 
 sense of Lennard. Moreover, $W_{\delta',\e}^{\Delta}(t)k_{0,\delta}$ is positive definite for $0 \leq t < T(\alpha,\beta)$.
 Let $G \in B_{bs}(\Gamma_0^2)$ be such that $\mathbbm{K}G \geq 0$.
 By \eqref{EQ:16}, Lemma \ref{TH:01} and \eqref{EQ:17} we get for any $0 \leq t < \frac{1}{3}T(\alpha,\beta)$
 \[
  0 \leq \langle \langle G, W_{\delta',\e}^{\Delta}(t)k_{0,\delta} \rangle \rangle = \langle \langle G, U_{\delta',\e}^{\Delta}(t)k_{0,\delta}\rangle \rangle \longrightarrow \langle \langle G, U_{0,\e}^{\Delta}(t)k_{0,\delta} \rangle \rangle, \ \ \delta' \to 0.
 \]
 This shows that $U_{0,\e}^{\Delta}(t)k_{0,\delta}$ is positive definite. Letting $\delta \to 0$ yields the assertion.
\end{proof}
Existence and uniqueness of (local) solutions to the Fokker-Planck equation \eqref{EQ:05} is stated below.
\begin{Proposition}
 Let $(\alpha,\beta)$ be an admissible pair with $\alpha^- < \beta^-$ and $\mu_0 \in \mathcal{P}_{\alpha}$.
 The following assertions hold.
 \begin{enumerate}
  \item[(a)] There exists a family of states $(\mu^{\e}_t)_{0 \leq t < \frac{1}{3}T(\alpha,\beta)} \subset \mathcal{P}_{\beta}$ 
  such that for each $F \in \mathcal{FP}(\Gamma^2)$ we have $L_{0,\e}F \in L^1(\Gamma^2,d\mu_t^{\e})$,
  $[0,\frac{1}{3}T(\alpha,\beta)) \ni t \longmapsto \langle L_{0,\e}F, \mu_t^{\e}\rangle$ is continuous and \eqref{EQ:05}
  holds for all such $t$. Moreover, the correlation functions satisfy
  \begin{align}\label{EQ:22}
   \Vert k_{\mu_t^{\e}} \Vert_{\K_{\beta}} \leq \frac{3}{2}, \ \ 0 \leq t < \frac{1}{3}T(\alpha,\beta).
  \end{align}
  \item[(b)] Let $(\nu_t^{\e})_{0 \leq t < T} \subset \mathcal{P}_{\beta}$ be another family of states such that 
   for each $F \in \mathcal{FP}(\Gamma^2)$ we have $L_{0,\e}F \in L^1(\Gamma^2,d\nu_t^{\e})$,
  $[0,\frac{1}{3}T(\alpha,\beta)) \ni t \longmapsto \langle L_{0,\e}F, \nu_t^{\e} \rangle$ is locally integrable and \eqref{EQ:05}
  holds for a.a. $t$. Suppose that
  \begin{align}\label{EQ:24}
   \sup \limits_{t \in [0,T')} \Vert k_{\nu_t^{\e}} \Vert_{\K_{\beta}} < \infty, \ \ \forall T' \in (0,T).
  \end{align}
  Then $\nu_t^{\e} = \mu_t^{\e}$ for all $0 \leq t < \min \left\{T, \frac{1}{3}T(\alpha,\beta)\right\}$.
 \end{enumerate}
\end{Proposition}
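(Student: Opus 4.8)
The plan is to reduce both assertions to the correlation-function evolution $U_{0,\e}^{\Delta,\alpha,\beta}(t)$ constructed in Proposition~\ref{TH:00}, which has already been shown to preserve Lenard positivity on $[0,\tfrac13 T(\alpha,\beta))$ in Proposition~\ref{TH:07}. For part~(a) I would set $k_t^{\e} := U_{0,\e}^{\Delta,\alpha,\beta}(t)k_{\mu_0}$ for $0\le t<\tfrac13 T(\alpha,\beta)$ and show that each $k_t^{\e}$ is the correlation function of a state $\mu_t^{\e}\in\mathcal{P}_{\beta}$. Three properties are needed. First, $k_t^{\e}\in\K_{\beta}$ together with the bound~\eqref{EQ:22}, which is immediate from $U_{0,\e}^{\Delta,\alpha,\beta}(t)=\widehat U_{0,\e}^{\beta,\alpha}(t)^{*}$ and the operator estimate~\eqref{EQ:18} evaluated at $t<\tfrac13 T(\alpha,\beta)$. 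Second, $k_t^{\e}(\emptyset)=1$ for all $t$: at $\eta=\emptyset$ every sum in \eqref{EQ:35}--\eqref{EQ:37} is empty and the prefactors $M(\emptyset)$, $|\eta^{-}|$, $\lambda|\eta^{+}|$ vanish, so $(L_{0,\e}^{\Delta}k)(\emptyset)=0$ for every $k$, whence $t\mapsto k_t^{\e}(\emptyset)$ is constant by~\eqref{EQ:12}. Third, $k_t^{\e}$ is positive definite in the sense of Lenard, which is exactly Proposition~\ref{TH:07}. By the Lenard characterization recalled in Section~2 these three facts produce a state $\mu_t^{\e}$ with $k_{\mu_t^{\e}}=k_t^{\e}$, and $k_t^{\e}\in\K_{\beta}$ gives $\mu_t^{\e}\in\mathcal{P}_{\beta}$.

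Next I would verify the Fokker--Planck equation and the integrability claim. For $F=\mathbbm{K}G$ with $G\in B_{bs}(\Gamma_0^2)$ one first notes that $\widehat L_{0,\e}G\in B_{bs}(\Gamma_0^2)$ as well: although $a^{+}$ need not have compact support, each birth term of $A_{0}$, $B_{0}$ and $\widehat L_{0}^{E}$ inserts a new point into a configuration on which $G$ must be supported, confining that point to the bounded support of $G$, while boundedness follows from $\|a^{+}\|_{L^{1}}<\infty$, $R_{0}\equiv 1$, $z_{0}\equiv z$ and $0\le e^{-\psi}\le 1$. Hence $L_{0,\e}F=\mathbbm{K}\widehat L_{0,\e}G\in\mathcal{FP}(\Gamma^{2})$ by Remark~\ref{REMARK:01}, and $|L_{0,\e}F|\le\mathbbm{K}|\widehat L_{0,\e}G|$ gives $\int_{\Gamma^2}|L_{0,\e}F|\,d\mu_t^{\e}\le\langle\langle|\widehat L_{0,\e}G|,k_t^{\e}\rangle\rangle<\infty$ since $|\widehat L_{0,\e}G|\in B_{bs}(\Gamma_0^2)\subset\Lb_{\beta}$ and $k_t^{\e}\in\K_{\beta}$; thus $L_{0,\e}F\in L^{1}(\Gamma^{2},d\mu_t^{\e})$. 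Since $t\mapsto k_t^{\e}$ is a classical solution of~\eqref{EQ:12} in $\K_{\beta}$ (Proposition~\ref{TH:00}(b)) and $\langle F,\mu_t^{\e}\rangle=\langle\langle G,k_t^{\e}\rangle\rangle$, $\langle L_{0,\e}F,\mu_t^{\e}\rangle=\langle\langle\widehat L_{0,\e}G,k_t^{\e}\rangle\rangle=\langle\langle G,L_{0,\e}^{\Delta}k_t^{\e}\rangle\rangle$, pairing this $C^{1}$-in-$\K_{\beta}$ path with the fixed functional $G\in\Lb_{\beta}$ yields continuity of $t\mapsto\langle L_{0,\e}F,\mu_t^{\e}\rangle$ and
\[
 \frac{d}{dt}\langle F,\mu_t^{\e}\rangle=\langle\langle G,\partial_t k_t^{\e}\rangle\rangle=\langle\langle G,L_{0,\e}^{\Delta}k_t^{\e}\rangle\rangle=\langle L_{0,\e}F,\mu_t^{\e}\rangle,
\]
which is~\eqref{EQ:05}.

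For the uniqueness part~(b) I would invoke Proposition~\ref{TH:00}(c). From~\eqref{EQ:05} for $\nu_t^{\e}$ and $F=\mathbbm{K}G$, $G\in B_{bs}(\Gamma_0^2)$, together with the intertwining $\langle L_{0,\e}\mathbbm{K}G,\nu_t^{\e}\rangle=\langle\langle\widehat L_{0,\e}G,k_{\nu_t^{\e}}\rangle\rangle$ of Remark~\ref{REMARK:01}, the functions $k_t:=k_{\nu_t^{\e}}\in\K_{\beta}$ satisfy the weak form
\[
 \langle\langle G,k_t\rangle\rangle=\langle\langle G,k_0\rangle\rangle+\int_{0}^{t}\langle\langle\widehat L_{0,\e}G,k_s\rangle\rangle\,ds,\quad G\in B_{bs}(\Gamma_0^2),
\]
with $k_0=k_{\mu_0}$. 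To apply Proposition~\ref{TH:00}(c) I still need $t\mapsto\langle\langle G,k_t\rangle\rangle$ continuous for \emph{every} $G\in\Lb_{\beta}$: for $G\in B_{bs}(\Gamma_0^2)$ this follows from the displayed identity, since $s\mapsto\langle\langle\widehat L_{0,\e}G,k_s\rangle\rangle$ is locally integrable by hypothesis~(i), and the general case follows by approximating $G$ in $\Lb_{\beta}$-norm by elements of $B_{bs}(\Gamma_0^2)$ and passing to the limit uniformly on compact time intervals using the uniform bound~\eqref{EQ:24}. Proposition~\ref{TH:00}(c) then gives $k_{\nu_t^{\e}}=U_{0,\e}^{\Delta}(t)k_{\mu_0}=k_{\mu_t^{\e}}$ for $0\le t<\min\{T,\tfrac13 T(\alpha,\beta)\}$, and since a state in $\mathcal{P}_{\beta}$ is determined by its correlation function, $\nu_t^{\e}=\mu_t^{\e}$ on that interval.

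I expect the only genuinely delicate steps to be bookkeeping: checking $\widehat L_{0,\e}G\in B_{bs}(\Gamma_0^2)$ (so that $L_{0,\e}F$ returns to $\mathcal{FP}(\Gamma^2)$ and all pairings involve $B_{bs}$-functions, which lie in every $\Lb_{\beta}$), and the $\Lb_{\beta}$-approximation argument needed to upgrade the continuity hypothesis of Proposition~\ref{TH:00}(c) from $B_{bs}(\Gamma_0^2)$ to all of $\Lb_{\beta}$. Everything substantive — the operators $U_{0,\e}^{\Delta}$, their norm bounds, the adjoint and intertwining relations, and preservation of Lenard positivity including the $\delta\to 0$ passage — is already available from Propositions~\ref{PROP:00}, \ref{TH:00} and \ref{TH:07}.
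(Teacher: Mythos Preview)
Your overall strategy matches the paper's proof exactly: build $\mu_t^{\e}$ from $k_t^{\e}=U_{0,\e}^{\Delta}(t)k_{\mu_0}$ via Lenard's characterization and Proposition~\ref{TH:07}, then read off the Fokker--Planck equation from Proposition~\ref{TH:00}(b); for uniqueness, reduce to Proposition~\ref{TH:00}(c) and upgrade continuity from $B_{bs}(\Gamma_0^2)$ to $\Lb_{\beta}$ by density plus the uniform bound~\eqref{EQ:24}. The paper does precisely this.

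There is, however, one concrete error in your integrability argument. The claim $\widehat L_{0,\e}G\in B_{bs}(\Gamma_0^2)$ is false in general. You correctly handle the birth-type integrals, but several terms in $B_{0}$ and $\widehat L_{0}^{E}$ \emph{remove} a point from the argument of $G$ and leave it free in $\eta$: for instance $-\sum_{x\in\eta^+}\sum_{y\in\eta^+\setminus x}a^-(x-y)G(\eta^+\setminus x,\eta^-)$ only forces $\eta^+\setminus x\subset\Lambda$, so $x$ may lie anywhere in $\R^d$ (and $a^-$ is merely $L^1\cap L^\infty$, not compactly supported). The jump term $\sum_{x\in\eta^+}\int a^+(x-y)G(\eta^+\setminus x\cup y,\eta^-)\,dy$ and the Glauber term with $\mathcal{E}(e^{-\psi(x-\cdot)}-1;\eta^-\setminus\xi^-)$ have the same defect. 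Consequently $L_{0,\e}F\notin\mathcal{FP}(\Gamma^2)$ in general, and your inequality $\int|L_{0,\e}F|\,d\mu_t^{\e}\le\langle\langle|\widehat L_{0,\e}G|,k_t^{\e}\rangle\rangle$ cannot be justified by the $B_{bs}$-definition~\eqref{EQ:30} alone.

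The fix is what the paper does: pick $\beta_0$ with $\beta_0^{\pm}>\beta^{\pm}$, use that $\widehat L_{0,\e}\in L(\Lb_{\beta_0},\Lb_{\beta})$ to get $\widehat L_{0,\e}G\in\Lb_{\beta}$, observe $\Lb_{\beta}\subset L^1(\Gamma_0^2,k_{\mu_t^{\e}}d\lambda)$ since $k_{\mu_t^{\e}}\in\K_{\beta}$, and then invoke the extension of the $K$-transform $\mathbbm{K}:L^1(\Gamma_0^2,k_{\mu_t^{\e}}d\lambda)\to L^1(\Gamma^2,d\mu_t^{\e})$ from \cite{KK02}. This yields $L_{0,\e}F=\mathbbm{K}\widehat L_{0,\e}G\in L^1(\Gamma^2,d\mu_t^{\e})$ together with the pairing identity $\langle L_{0,\e}F,\mu_t^{\e}\rangle=\langle\langle\widehat L_{0,\e}G,k_{\mu_t^{\e}}\rangle\rangle$, and the rest of your argument goes through unchanged.
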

\begin{proof}
 (a) Previous proposition shows that $U_{0,\e}^{\Delta}(t)k_0$ is positive definite.
 Hence there exists a family of states $(\mu_t^{\e})_{0 \leq t < \frac{1}{3}T(\alpha,\beta)} \subset \mathcal{P}_{\beta}$
 such that $U_{0,\e}^{\Delta}(t)k_0 = k_{\mu_t^{\e}}$ (see \cite{L74,L75}).
 Let $F \in \mathcal{FP}(\Gamma^2)$ and $G \in B_{bs}(\Gamma_0^2)$ with $F = \mathbbm{K}G$. Take $\beta_0^{\pm} > \beta^{\pm}$, then $G \in \Lb_{\beta_0}$
 and $\widehat{L}_{0,\e}G \in \Lb_{\beta}$. For each $0 \leq t < \frac{1}{3}T(\alpha,\beta)$ we have
 \[
  \langle \langle G, k_{\mu_t^{\e}} \rangle \rangle =  \langle \langle G, k_{0} \rangle \rangle + \int \limits_{0}^{t} \langle \langle \widehat{L}_{0,\e}G, k_{\mu_s^{\e}}\rangle \rangle ds.
 \]
 Since $\Lb_{\beta} \subset L^1(\Gamma_0^2, k_{\mu_s^{\e}}d\lambda)$ for $s \in [0,t]$ and 
 $\mathbbm{K}: L^1(\Gamma_0^2, k_{\mu_s^{\e}}d\lambda) \longrightarrow L^1(\Gamma^2, d\mu_s^{\e})$ is continuous (see \cite{KK02}) it follows that 
 $\mathbbm{K}\widehat{L}_{0,\e}G \in L^1(\Gamma^2, d\mu_s^{\e})$. By $\mathbbm{K} \widehat{L}_{0,\e}G = L_{0,\e}F$ and \eqref{EQ:30} we see that
 $t \longmapsto \langle \langle \widehat{L}_{0,\e}G, k_{\mu_t^{\e}}\rangle \rangle$ is continuous and
 \eqref{EQ:05} holds for $0 \leq t \frac{1}{3}T(\alpha,\beta)$.
 This shows that $(\mu_t^{\e})_{0 \leq t < \frac{1}{3}T(\alpha,\beta)}$ has the desired properties where \eqref{EQ:22} follows by duality, 
 \eqref{EQ:18} and $t < \frac{1}{3}T(\alpha,\beta)$.
 \newline 
 (b) Let $G \in B_{bs}(\Gamma_0^2)$. Then \eqref{EQ:05} implies that
 $[0,T) \ni t \longmapsto \langle \mathbbm{K}G,\nu_t^{\e} \rangle = \langle \langle G, k_{\nu_t^{\e}} \rangle \rangle$ is continuous and satisfies
 \begin{align*}
  \langle \langle G, k_{\nu_t^{\e}} \rangle \rangle = \langle \langle G, k_{\mu_0} \rangle \rangle + \int \limits_{0}^{t}\langle \langle \widehat{L}_{\e}G, k_{\nu_s^{\e}} \rangle \rangle ds.
 \end{align*}
 By Proposition \ref{TH:00}.(c) we get $k_{\nu_t^{\e}} = k_{\mu_t^{\e}}$ for all $0 \leq t < \min\{ T, T(\alpha,\beta) \}$ and hence $\mu_t^{\e} = \nu_t^{\e}$,
 provided we can show that $t \longmapsto \langle \langle G, k_{\nu_t^{\e}} \rangle \rangle$ is continuous for any $G \in \Lb_{\beta}$.
 This clearly holds for all $G \in B_{bs}(\Gamma_0^2)$. The general case $G \in \Lb_{\beta}$ follows by \eqref{EQ:24} and a standard density argument.
\end{proof}

\subsection{Global evolution of states}
It remains to extend the constructed evolution of states to all $t \geq 0$. Following the ideas from \cite{KK16}, 
we establish reasonable a priori estimates on solutions to \eqref{EQ:09}. Such estimates are used for the continuation of the local evolution of states.
Since we only deal with $\delta = 0$ below, we let $U_{\e}^{\Delta}(t) := U_{0,\e}^{\Delta}(t)$ and likewise 
$A^{\Delta} := A_0^{\Delta}, B^{\Delta} := B_0^{\Delta}, L^{\Delta, E} := L_0^{\Delta,E}, \widehat{L}_{\e} := \widehat{L}_{0,\e}$, etc.
Define a new operator $B_0^{\Delta}$ by
\[
 (B_0^{\Delta}k)(\eta) = \lambda|\eta^+|k(\eta) + \sum \limits_{x \in \eta^+}\int \limits_{\R^d} a^+(x-y)k(\eta^+ \backslash x \cup y, \eta^-)d y.
\]
For all $\alpha^+ < \beta^+$ with $\alpha^- = \beta^-$ we get
\[
 \Vert B_0^{\Delta}k\Vert_{\K_{\beta}} \leq \frac{\lambda + \| a^+ \|_{L^1}}{e(\beta^+ - \alpha^+)}\Vert k \Vert_{\K_{\alpha}}.
\]
Let $T_1(\alpha,\beta) := \frac{\beta^+ - \alpha^+}{\lambda + \| a^+ \|_{L^1}} \geq T(\alpha,\beta)$.
\begin{Lemma}\label{TH:02}
 There exists a family of bounded linear operators
 \[
  \left\{ Q_{\e}^{\Delta, \alpha,\beta}(t) \in L(\K_{\alpha}, \K_{\beta}) \ | \ 0 \leq t < T_1(\alpha,\beta), \ \ \log(\vartheta) < \alpha^+ < \beta^+, \ \ \alpha_*^- \leq \alpha^- \leq \beta^- < \alpha^{*,-} \right\}
 \]
 with $\Vert Q_{\e}^{\Delta,\alpha,\beta}(t) \Vert_{L(\K_{\alpha},\K_{\beta})} \leq \frac{T_1(\alpha,\beta)}{T_1(\alpha,\beta) - t}$, $0 \leq t < T_1(\alpha,\beta)$
 such that for all $k \in \K_{\alpha}$ and 
 \[
  \log(\vartheta) < \alpha^+ < \beta^+, \ \alpha_*^- \leq \alpha^- \leq \beta^- < \alpha^{*,-}
 \]
 the following properties hold.
 \begin{enumerate}
  \item[(a)] Let $\alpha_0^{+} \in (\alpha^+, \beta^+)$ and $\alpha_0^- \in [\alpha^-, \beta^-]$. Then 
  \[
   Q_{\e}^{\Delta,\alpha,\alpha_0}(t)k = Q_{\e}^{\Delta,\alpha_0,\beta}k = Q_{\e}^{\Delta, \alpha,\beta}(t)k
  \]
  for any $0 \leq t < \min\{ T_1(\alpha,\alpha_0), T_1(\alpha_0,\beta), T_1(\alpha,\beta)\}$. Moreover
  \[
   \left(A^{\Delta} + \frac{1}{\e}L^{\Delta, E} + B_0^{\Delta}\right)Q_{\e}^{\Delta,\alpha,\alpha_0}(t)k = Q_{\e}^{\Delta,\alpha_0,\beta}(t)\left(A^{\Delta} + \frac{1}{\e}L^{\Delta, E} + B_0^{\Delta}\right)k
  \]
  holds for $0 \leq t < \min \{ T_1(\alpha,\alpha_0), T_1(\alpha_0,\beta)\}$.
  \item[(b)] If $k \geq 0$, then $Q_{\e}^{\Delta, \alpha,\beta}(t)k \geq 0$ for $t \in [0, T_1(\alpha,\beta))$.
  \item[(c)] Suppose that $\alpha^- < \beta^-$, then $k_t^{\e} := Q_{\e}^{\Delta,\alpha,\beta}(t)k$ is the unique classical solution in $\K_{\beta}$ to
  \[
   \frac{\partial k_t^{\e}}{\partial t} = \left(A^{\Delta} + \frac{1}{\e}L^{\Delta, E} + B_0^{\Delta}\right)k_t^{\e}, \ \ 0 \leq t < T_1(\alpha,\beta).
  \]
  \item[(d)] Let $k$ be positive definite. Then $U_{\e}^{\Delta,\alpha,\beta}(t)k \leq Q_{\e}^{\Delta,\alpha,\beta}(t)k$ for $0 \leq t < \frac{1}{3}T(\alpha,\beta)$.
 \end{enumerate} 
\end{Lemma}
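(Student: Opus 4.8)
The operator generating the family we seek is $\mathcal{G}_Q := A^{\Delta} + \frac{1}{\e}L^{\Delta,E} + B_0^{\Delta}$, and it differs from the full hierarchy generator $L_{\e}^{\Delta} = A^{\Delta} + B^{\Delta} + \frac{1}{\e}L^{\Delta,E}$ only through the ``competition'' terms
\[
 (R^{\Delta}k)(\eta) = - \sum \limits_{x \in \eta^+}\int \limits_{\R^d}a^-(x-y)k(\eta^+ \cup y, \eta^-)d y - g\sum \limits_{x \in \eta^+}\int \limits_{\R^d}b^-(x-y)k(\eta^+, \eta^- \cup y)d y,
\]
i.e. $L_{\e}^{\Delta} = \mathcal{G}_Q + R^{\Delta}$. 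The plan for parts (a) and (c) is to build $Q_{\e}^{\Delta}$ by the very same perturbation construction on the scale $(\K_{\alpha})_{\alpha}$ already used in Proposition \ref{PROP:00} and Proposition \ref{TH:00} (following \cite{F16a}), taking $A^{\Delta} + \frac{1}{\e}L^{\Delta,E}$ as the unperturbed part and $B_0^{\Delta}$ as the bounded perturbation. Repeating the estimates of Lemma \ref{LEMMA:00} — but now letting the $\lambda|\eta^+|$-contribution coming from \eqref{EQ:07} be carried by the multiplication term $\lambda|\eta^+|k(\eta)$ retained inside $B_0^{\Delta}$ — one checks that $A^{\Delta} + \frac{1}{\e}L^{\Delta,E}$ generates a positive contraction semigroup $S_{\e}(t)$ on $\K_{\alpha}$ for every $\alpha$ with $\alpha^+ > \log(\vartheta)$ and $\alpha^- \in [\alpha_*^-, \alpha^{*,-})$, consistently on this part of the scale; only $\vartheta e^{-\alpha^+} < 1$ and $ze^{-\alpha^-}C_{\psi}(\alpha^-) < 1$ are needed. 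Combined with the bound $\|B_0^{\Delta}k\|_{\K_{\beta}} \leq \frac{\lambda + \|a^+\|_{L^1}}{e(\beta^+ - \alpha^+)}\|k\|_{\K_{\alpha}}$ recorded above, the abstract results of \cite{F16a} furnish the operators $Q_{\e}^{\Delta,\alpha,\beta}(t)$ with life-span $T_1(\alpha,\beta) = (\beta^+ - \alpha^+)/(\lambda + \|a^+\|_{L^1})$, the norm estimate $T_1(\alpha,\beta)/(T_1(\alpha,\beta) - t)$, the consistency relations and the commutation with $\mathcal{G}_Q$ of (a), and the Cauchy problem of (c) — all exact analogues of the corresponding statements for $U_{\e}^{\Delta}$.

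Part (b) I would obtain from positivity of both building blocks of the Dyson expansion. The unperturbed semigroup is positive because its pre-dual $A + \frac{1}{\e}\widehat{L}^E$ is of Kolmogorov type on the $L^1$-space $\Lb_{\alpha}$ (multiplication by $-M \leq 0$ plus nonnegative birth and hopping kernels), hence sub-Markov, and $S_{\e}(t)$ is its dual; and $B_0^{\Delta} \geq 0$ by inspection. Therefore the perturbation series
\[
 Q_{\e}^{\Delta,\alpha,\beta}(t)k = \sum \limits_{n=0}^{\infty}\int \limits_{0 \leq s_1 \leq \cdots \leq s_n \leq t}S_{\e}(t - s_n)B_0^{\Delta}S_{\e}(s_n - s_{n-1})\cdots B_0^{\Delta}S_{\e}(s_1)k \, ds_1 \cdots ds_n ,
\]
convergent in $L(\K_{\alpha},\K_{\beta})$ for $t < T_1(\alpha,\beta)$, is a series of positive operators, so $Q_{\e}^{\Delta,\alpha,\beta}(t)k \geq 0$ whenever $k \geq 0$.

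For part (d) I would set up the variation-of-constants formula relative to $Q_{\e}^{\Delta}$. Fix an admissible pair $(\alpha,\beta)$ with $\alpha^- < \beta^-$ and $0 \leq t < \frac{1}{3}T(\alpha,\beta)$, and choose intermediate levels $\beta_0^{\pm} \in (\alpha^{\pm}, \beta^{\pm})$ as in the proof of \eqref{EQ:17}, so that for every $s \in [0,t]$ both $U_{\e}^{\Delta}(s)$ into $\K_{\beta_0}$ and $Q_{\e}^{\Delta}(t-s)$ out of $\K_{\beta_0}$ into $\K_{\beta}$ are defined; since $R^{\Delta}$ is bounded from $\K_{\beta_0}$ into $\K_{\beta}$ with a $1/(\beta^+-\beta_0^+)$-type bound, the map $s \mapsto Q_{\e}^{\Delta}(t-s)U_{\e}^{\Delta}(s)k$ is continuously differentiable in $\K_{\beta}$, its $s$-derivative equals $Q_{\e}^{\Delta}(t-s)(L_{\e}^{\Delta}-\mathcal{G}_Q)U_{\e}^{\Delta}(s)k = Q_{\e}^{\Delta}(t-s)R^{\Delta}U_{\e}^{\Delta}(s)k$, and integration yields
\[
 U_{\e}^{\Delta}(t)k = Q_{\e}^{\Delta}(t)k + \int \limits_0^t Q_{\e}^{\Delta}(t-s)R^{\Delta}U_{\e}^{\Delta}(s)k \, ds
\]
in $\K_{\beta}$. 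Now let $k$ be positive definite; by Proposition \ref{TH:07} we have $U_{\e}^{\Delta}(s)k = k_{\mu_s^{\e}} \geq 0$ for $s < \frac{1}{3}T(\alpha,\beta)$, the operator $R^{\Delta}$ sends nonnegative functions to nonpositive ones (because $a^-, b^-, g \geq 0$), and $Q_{\e}^{\Delta}(t-s) \geq 0$ by (b); hence the integrand is pointwise $\leq 0$ and $U_{\e}^{\Delta}(t)k \leq Q_{\e}^{\Delta}(t)k$, which is (d).

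The main obstacle is the uniform bookkeeping on the scale of Banach spaces: one must choose the intermediate levels $\beta_0^{\pm}$ so that all semigroups in the Duhamel argument stay below their respective life-spans $T$ and $T_1$ for the whole range $s \in [0,t]$ — this is exactly what forces the restriction $t < \frac{1}{3}T(\alpha,\beta)$ in (d) — and one must verify that the refined threshold $\alpha^+ > \log(\vartheta)$, rather than $\alpha^+ > \alpha_*^+$, is genuinely admissible, i.e. that moving the $\lambda|\eta^+|$-term into $B_0^{\Delta}$ leaves $A^{\Delta} + \frac{1}{\e}L^{\Delta,E}$ contractive on each $\K_{\alpha}$. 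Once these two points are in place, assertions (a)–(d) follow from the perturbation-series and Duhamel manipulations sketched above, precisely as in \cite{F16a, KK16}.
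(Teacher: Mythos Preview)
Your treatment of (a), (c), and (d) coincides with the paper's: the same abstract perturbation framework on the scale $(\K_{\alpha})$ from \cite{F16,F16a} for existence/consistency/Cauchy problem, and the same Duhamel identity $Q_{\e}^{\Delta}(t)k - U_{\e}^{\Delta}(t)k = \int_0^t Q_{\e}^{\Delta}(t-s)(B_0^{\Delta}-B^{\Delta})U_{\e}^{\Delta}(s)k\,ds$ for (d), with intermediate levels chosen exactly as you describe to keep all operators within their life-spans (the paper uses the pair $\beta_0,\beta_1$ from \eqref{EQ:25}).

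There is, however, a genuine gap in your justification of (b). You claim that the pre-dual $A + \tfrac{1}{\e}\widehat{L}^E$ on $\Lb_{\alpha}$ is ``multiplication by $-M$ plus nonnegative birth and hopping kernels''. This is correct for $A$ but \emph{false} for $\widehat{L}^E$: its off-diagonal part carries the factor
\[
 \mathcal{E}\bigl(e^{-\psi(x-\cdot)}-1;\eta^-\backslash\xi^-\bigr)=\prod_{y\in\eta^-\backslash\xi^-}\bigl(e^{-\psi(x-y)}-1\bigr),
\]
and since $\psi\geq 0$ each factor is $\leq 0$, so the product has sign $(-1)^{|\eta^-\backslash\xi^-|}$. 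One can check directly (take $G\geq 0$ supported on singletons and evaluate $(\widehat{L}^EG)(\{x_1\})$) that $e^{t\widehat{L}^E}$ on $\Lb_{\alpha}$ does \emph{not} preserve pointwise positivity when $\psi\not\equiv 0$; by duality the same holds for $S_{\e}(t)$ on $\K_{\alpha}$. Consequently your Dyson-series argument---positive $S_{\e}$ composed with positive $B_0^{\Delta}$---breaks down. The paper does not give a self-contained proof of (b) either; it simply invokes \cite[Theorem 3.1]{F16}, stating that the construction there proceeds ``by an approximation built by positive operators''. Whatever positivity mechanism is used in \cite{F16}, it is not the one you propose, and you should consult that reference rather than rely on the sub-Markov structure of the pre-dual.
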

\begin{proof}
 Existence of such family together with property (a) follows by \cite[Theorem 3.1]{F16}. 
 \newline (b) The construction of $Q_{\e}^{\Delta}(t)$ (see \cite[Theorem 3.1]{F16}) shows that it can be obtained by an approximation
 build by positive operators.
 \newline (c) This follows from \cite{F16a} together with the observation that $A^{\Delta} + \frac{1}{\e}L^{\Delta, E} + B_0^{\Delta}$ is a bounded linear
 operator from $\K_{\alpha}$ to $\K_{\beta}$ for any $\alpha^{\pm} < \beta^{\pm}$.
 \newline (d) Take $\beta_0,\beta$ as in the proof of Proposition \ref{TH:00}.(c). Then we can find $\beta_1^{\pm} \in (\beta_0^{\pm}, \beta^{\pm})$ with \eqref{EQ:25}.
 Hence $[0,t] \ni s \longmapsto Q_{\e}^{\Delta,\beta_1,\beta}(t-s)U_{\e}^{\Delta,\alpha,\beta_0}(s)k_0 \in \K_{\beta}$ is continuously differentiable which implies
 \[
  Q_{\e}^{\Delta,\alpha,\beta}(t)k_0 - U_{\e}^{\Delta,\alpha,\beta}(t)k_0 
  = \int \limits_{0}^{t}Q_{\e}^{\Delta,\beta_1,\beta}(t-s)\left( B_0^{\Delta} - B^{\Delta}\right)U_{\e}^{\Delta,\alpha,\beta_0}(s)k_0d s
 \]
 for $0 \leq t < \frac{1}{3}T(\alpha,\beta)$.
 Since $U_{\e}^{\Delta,\alpha,\beta_0}(s)k_0 \geq 0$ (see Corollary \ref{TH:07}) and $Q_{\e}^{\Delta,\beta_1,\beta}(t-s)$ preserves positivity, 
 the assertion follows from $\left( B_0^{\Delta} - B^{\Delta}\right)U_{\e}^{\Delta,\alpha,\beta_0}(s)k_0 \geq 0$.
\end{proof}
Due to part (a) we omit the dependence on $(\alpha,\beta)$.
The next statement provides a global correlation function evolution.
\begin{Proposition}
 Let $\alpha_0^+ > \alpha_*^+$, $\alpha_*^- \leq \alpha^-_0 < \alpha^{*,-}$ and $k_0 \in \K_{\alpha_0}$ be the correlation function
 for some $\mu_0 \in \mathcal{P}_{\alpha_0}$. Then there exists $k_t \subset \bigcup_{\beta^+ > \alpha_0^+}\K_{\beta^+, \alpha_0^-}, \ t\geq 0$,
 with the following properties.
 \begin{enumerate}
  \item[(a)] $k_t$ is positive definite for any $t \geq 0$.
  \item[(b)] For all $G \in B_{bs}(\Gamma_0^2)$
  \[
   \langle \langle G, k_t \rangle \rangle = \langle \langle G, k_0 \rangle \rangle + \int \limits_{0}^{t}\langle \langle \widehat{L}_{\e}G, k_s \rangle \rangle d s, \ \ t \geq 0.
  \]

  \item[(c)] Let $(r_t)_{t \in [0,T)} \subset \K_{\beta^+, \alpha_0^-}$ for some $\beta^+ > \alpha_0^+$ be such that
  $t \longmapsto \langle \langle G, r_t \rangle \rangle$ is continuous for any $G \in \Lb_{\beta^+, \alpha_0^-}$ and
  \[
   \langle \langle G, r_t \rangle \rangle = \langle \langle G, k_0 \rangle \rangle + \int \limits_{0}^{t}\langle \langle \widehat{L}_{\e}G, r_s \rangle \rangle d s, \ \ t \in [0,T), \ \ G \in B_{bs}(\Gamma_0^2).
  \]
  Then $k_t = r_t$ for all $0 \leq t < \min\{ T, T_1(\alpha_0, \beta)\}$ where $\beta = (\beta^+, \alpha_0^-)$.
 \end{enumerate}
\end{Proposition}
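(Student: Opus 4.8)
The plan is to build the global evolution $k_t$ by patching together the local correlation function evolution constructed above, and to use the auxiliary family $Q_\e^\Delta(t)$ of Lemma \ref{TH:02} --- which, in contrast to the local evolution $U_\e^\Delta(t)$, is available for \emph{all} times --- both as the engine that rules out a finite-time breakdown and as a bookkeeping device for the scale in which $k_t$ lives. To start, fix an admissible pair $(\alpha_0,\beta)$ with $\alpha_0^-<\beta^-$. The local construction established above, together with Proposition \ref{TH:07}, shows that $k_t:=U_\e^\Delta(t)k_0$ is, on $[0,\tfrac13 T(\alpha_0,\beta))$, positive definite, satisfies the weak identity of (b) for all $G\in B_{bs}(\Gamma_0^2)$, and obeys $\Vert k_t\Vert_{\K_\beta}\le\tfrac32$; this already yields (a) and (b) on a short interval.

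The key structural point is that, for the auxiliary evolution $Q_\e^\Delta$ built from the reduced operator $B_0^\Delta$ (which omits the death interactions), the existence time is the non-degenerate $T_1(\alpha,\beta)=(\beta^+-\alpha^+)/(\lambda+\Vert a^+\Vert_{L^1})$, depending only on $\beta^+-\alpha^+$. Since $k_0=k_{\mu_0}\ge 0$ and since Lemma \ref{TH:02} permits the $\alpha^-$-component to stay equal to $\alpha_0^-$, for every $\beta^+>\alpha_0^+$ the operator $Q_\e^{\Delta,\alpha_0,(\beta^+,\alpha_0^-)}(t)$ is defined for $t<T_1(\alpha_0,(\beta^+,\alpha_0^-))$, these definitions are consistent as $\beta^+$ varies (Lemma \ref{TH:02}(a)), and letting $\beta^+\to\infty$ yields a single family $Q_\e^\Delta(t)k_0$ defined for all $t\ge 0$: it is nonnegative, lies in $\K_{\beta^+,\alpha_0^-}$ whenever $\beta^+>\alpha_0^++(\lambda+\Vert a^+\Vert_{L^1})t$, and satisfies $\Vert Q_\e^\Delta(t)k_0\Vert_{\K_{\beta^+,\alpha_0^-}}\le T_1/(T_1-t)\cdot\Vert k_0\Vert_{\K_{\alpha_0}}$ with $T_1=T_1(\alpha_0,(\beta^+,\alpha_0^-))$. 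By Lemma \ref{TH:02}(d), $0\le k_t\le Q_\e^\Delta(t)k_0$ wherever $k_t$ is defined.

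To pass from the local to the global $k_t$, split $L_\e^\Delta=\bigl(A^\Delta+\tfrac{1}{\e}L^{\Delta,E}+B_0^\Delta\bigr)-N$ with the positive ``creation'' operator $N:=B_0^\Delta-B^\Delta\ge 0$. The role of the a priori bound is that $Nk_t$ enters the equation for $k_t$ with a minus sign and is dominated by $NQ_\e^\Delta(t)k_0$, so the death terms can only decrease $k_t$ and do not drive the growth of $\Vert k_t\Vert$, which is therefore governed by the non-degenerate time $T_1$. I would implement this by solving, on an arbitrary $[0,T]$ after choosing $\beta^+>\alpha_0^++(\lambda+\Vert a^+\Vert_{L^1})T$ so that $T<T_1$, the Volterra equation $k_t=Q_\e^\Delta(t)k_0-\int_0^t Q_\e^\Delta(t-s)Nk_s\,ds$, which is equivalent to \eqref{CORR:00}: the iteration $k^{(0)}_t:=Q_\e^\Delta(t)k_0$, $k^{(j+1)}_t:=Q_\e^\Delta(t)k_0-\int_0^t Q_\e^\Delta(t-s)Nk^{(j)}_s\,ds$ has, by positivity of $N$ and of $Q_\e^\Delta$, monotone and sandwiched odd- and even-indexed subsequences, hence converges in $\K_{\beta^+,\alpha_0^-}$ to a limit $k_t$ with $0\le k_t\le Q_\e^\Delta(t)k_0$ that solves the weak identity \eqref{EQ:09}; different choices of $T$ give consistent $k_t$ by the uniqueness in (c). This yields (b) for all $t\ge 0$ and $k_t\in\bigcup_{\beta^+>\alpha_0^+}\K_{\beta^+,\alpha_0^-}$. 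For (a), positive definiteness holds on the initial short interval by Proposition \ref{TH:07}, and on all of $[0,\infty)$ by running the argument of Lemma \ref{TH:01} and Proposition \ref{TH:07} at the global level --- regularizing by $\delta>0$, identifying the regularized evolution with the correlation function of the probability density transported by the stochastic semigroup of Lemma \ref{PROP:01}, and letting $\delta\to0$ via \eqref{EQ:17} --- the uniform bound above supplying the control needed to take these limits on the whole time axis.

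For the uniqueness statement (c), let $(r_t)_{t\in[0,T)}\subset\K_{\beta^+,\alpha_0^-}$ satisfy the stated hypotheses; then $w_t:=k_t-r_t$ starts at $0$, solves \eqref{EQ:09} with zero datum, and hence $w_t=-\int_0^t Q_\e^\Delta(t-s)Nw_s\,ds$, which upon iteration --- precisely the uniqueness for the scale evolution equation driven by $A^\Delta+\tfrac{1}{\e}L^{\Delta,E}+B_0^\Delta$, valid up to $T_1(\alpha_0,(\beta^+,\alpha_0^-))$, cf.\ \cite{F16,F16a} --- forces $w\equiv 0$, i.e.\ $k_t=r_t$, on $[0,\min\{T,T_1(\alpha_0,(\beta^+,\alpha_0^-))\})$. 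I expect the continuation in the third step to be the main obstacle: the local existence time $T(\alpha,\beta)$ of $U_\e^\Delta$ shrinks to $0$ as $\alpha^+\to\infty$, since the estimate for $B^\Delta$ carries the factor $e^{\beta^+}\Vert a^-\Vert_{L^1}$, so iterating the local evolution naively reaches only a finite time; the substance is to show that these death terms, being of one sign and dominated by the globally defined majorant $Q_\e^\Delta(t)k_0$, do not in fact obstruct continuation, so that the iteration is instead governed by the non-degenerate time $T_1$.
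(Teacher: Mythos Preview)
Your strategy is different from the paper's, and the difference matters. The paper does \emph{not} solve a Volterra equation with $Q_\e^\Delta$ as propagator. Instead it iterates the \emph{local} evolution $U_\e^\Delta$ itself: it fixes a step size $T^{(n+1)}=\tfrac14\tau(\alpha_n)$ with $\tau(\alpha)=T(\alpha,(\alpha^++1,\alpha^-+\kappa))$, sets $k^{(n+1)}_t:=U_\e^\Delta(t)k^{(n)}_{T^{(n)}}$ on $[0,T^{(n+1)}]$, and uses the comparison $U_\e^\Delta(t)k\le Q_\e^\Delta(t)k$ of Lemma \ref{TH:02}(d) only to locate $k^{(n)}_t$ in $\K_{\alpha_n}$ with $\alpha_n^+=\alpha_0^++(\lambda+\|a^+\|_{L^1})S_n$. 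The punchline is a short contradiction: if $\sum T^{(n)}<\infty$ then $\alpha_n^+$ stays bounded, hence $T^{(n)}$ is bounded below away from $0$, contradiction. Positive definiteness in (a) then comes for free at every step from Proposition \ref{TH:07}, and (b) is an elementary induction on $n$; (c) is obtained by iterating Proposition \ref{TH:00}(c) along the same partition.

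Your Volterra scheme has two genuine gaps. First, the ``monotone and sandwiched'' argument does not close: the map $\Phi(k)_t=Q_\e^\Delta(t)k_0-\int_0^tQ_\e^\Delta(t-s)Nk_s\,ds$ is antitone, and to sandwich you must know that $k^{(1)}=\Phi(Q_\e^\Delta(\cdot)k_0)\ge0$, which you never establish; without that, neither monotonicity of the subsequences nor convergence to a common limit follows. A norm-based iteration instead of a monotone one runs into the very obstacle you flag at the end: $N$ loses one unit on the scale (it carries the factors $e^{\beta^+}\|a^-\|_{L^1}$ and $ge^{\beta^-}$), so the usual Ovsjannikov bookkeeping for $\sum_n\int\cdots\int Q_\e^\Delta N\cdots Q_\e^\Delta N$ reproduces an existence time of order $T(\alpha,\beta)$, not $T_1$. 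Second, your route to (a) is not a proof: the $\delta\to0$ convergence \eqref{EQ:17} and Lemma \ref{TH:01}/Proposition \ref{TH:07} are established only for $t<\tfrac13T(\alpha,\beta)$, and ``running the argument at the global level'' would require redoing the approximation uniformly in $t$, which is exactly what the paper's piecewise construction via $U_\e^\Delta$ sidesteps. The cleanest fix is the paper's: patch $U_\e^\Delta$ on successive intervals, use $Q_\e^\Delta$ solely as an a priori bound to control the scale index, and read off (a) from Proposition \ref{TH:07} applied at each step.
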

\begin{proof}
 First, we show that $U_{\e}^{\Delta}(t)k_{\mu_0}$ can be extended to all $t \geq 0$. Let $\kappa := \frac{\alpha^{*,-} - \alpha_0^-}{2}$,
 \begin{align}
   \label{EQ:03} \beta^+(\alpha^+, T) &:= \alpha^+ + (\lambda + \| a^+ \|_{L^1})T
  \\ \notag \tau(\alpha) &:= T(\alpha, (\alpha^+ + 1, \alpha^- + \kappa)) = \left( g e^{\kappa}e^{\alpha^-} + e^{\alpha^+}\| a^- \|_{L^1} e + \lambda + \| a^+ \|_{L^1} \right)^{-1}.
 \end{align}
 Define two sequences $(T^{(n)})_{n \in \N_0}$ and $(\alpha_n)_{n \in \N_0}$ by $T^{(0)} := 0$ and
 \begin{align}\label{EQ:04}
  T^{(n+1)} := \frac{1}{4}\tau(\alpha_n), \ \ \alpha_{n+1} := (\alpha_{n+1}^+, \alpha_{n+1}^-) := (\beta^+(\alpha_n^+,T^{(n+1)}), \alpha_0^-).
 \end{align}
 Let $k_t^{(1)} := U_{\e}^{\Delta}(t)k_{\mu_0}$ for $t \in [0,T^{(1)}]$ and define recursively
 \begin{align}\label{EQ:34}
  k_t^{(n+1)} := U_{\e}^{\Delta}(t)k^{(n)}_{T^{(n)}}, \ \ t \in [0, T^{(n+1)}]
 \end{align}
 where $n \geq 1$. Let us show the following properties:
 \begin{enumerate}
  \item[(i)] $k_t^{(n)} \in \K_{\alpha_n}$ for all $n \geq 1$ and $t \in [0, T^{(n)}]$.
  \item[(ii)] $\alpha_n^+ = \alpha_0^+ + (\| a^+ \|_{L^1} + \lambda)(T^{(1)} + \dots + T^{(n)})$ and $\alpha_n^- = \alpha_0^-$.
  \item[(iii)] $\sum_{n=1}^{\infty}T^{(n)} = \infty$.
 \end{enumerate}
 (i) and (ii): We proceed by induction. For $n = 1$ we get
 \[
  T^{(1)} = \frac{1}{4}\tau(\alpha_0) < \frac{1}{3}T(\alpha_0, (\alpha_0^+ + 1, \alpha_0^- + \kappa))
 \]
 and hence $k_t^{(1)} \in \K_{\alpha_0^+ + 1, \alpha_0^- + \kappa}$ and $k_t^{(1)} \leq Q_{\e}^{\Delta}(t)k_{\mu_0}$ for $t \in [0,T^{(1)}]$, see Lemma \ref{TH:02}.(d). 
 Property (ii) follows from
 \[
  \alpha_1^+ = \beta^+(\alpha_0^+, T^{(1)}) = \alpha_0^+ + (\| a^+ \|_{L^1} + \lambda)T^{(1)}.
 \]
 Since $Q_{\e}^{\Delta}(t)k_{\mu_0} \in \K_{\alpha_1}$ we obtain $k_t^{(1)} \in \K_{\alpha_1}$ for all $t \in [0,T^{(1)}]$.
 Suppose that (i) and (ii) hold for some $n \geq 1$. Then, by (ii), 
 \[
  \alpha_{n+1}^+ = \beta^+(\alpha_n^+, T^{(n+1)}) = \alpha_n^+ + (\lambda + \| a^+ \|_{L^1})T^{(n+1)} = \alpha_0^+ + (\lambda + \| a^+ \|_{L^1})(T^{(1)} + \dots + T^{(n+1)}).
 \]
 Finally, 
 \[
  k_t^{(n+1)} = U_{\e}^{\Delta}(t)k_{T^{(n)}}^{(n)} \leq Q_{\e}^{\Delta}(t)k_{T^{(n)}}^{(n)} \in \K_{\alpha_{n+1}}
 \]
 for $t \in [0,T^{(n+1)}]$, since $T^{(n+1)} < \frac{1}{3}T(\alpha_n, (\alpha_n^+ +1, \alpha_n^- + \kappa))$.
 
 Let us prove (iii). Suppose that $\sum_{n=1}^{\infty}T^{(n)} < \infty$. Then $T^{(n)} \longrightarrow 0$ as $n \to \infty$.
 From (ii) we get $\alpha_{n}^+ - \alpha_{n-1}^+ = ( \lambda + \| a^+ \|_{L^1}) T^{(n)}, \ \ n \geq 1$ and hence
 \[
  \sum \limits_{n=1}^{N}T^{(n)} = \frac{1}{\lambda + \| a^+ \|_{L^1}} ( \alpha_N^+ - \alpha_0^+ ), \ \ N \geq 1.
 \]
 In particular, $\alpha_N^+$ has to be bounded, say $\alpha_N^+ \leq c$ for all $N \geq 1$. But this implies
 \begin{align*}
  T^{(n)} &= \frac{1}{4}\tau(\alpha_n) = \frac{1}{4} \frac{1}{e \| a^- \|_{L^1} e^{\alpha_n^+} + \lambda + \| a^+ \|_{L^1} + g e^{\kappa} e^{\alpha^-}} 
  \\ &\geq \frac{1}{4} \frac{1}{e \| a^- \|_{L^1} e^{c} + \lambda + \| a^+ \|_{L^1} + g e^{\kappa} e^{\alpha^-}} > 0,
 \end{align*}
 which contradicts to $T^{(n)} \longrightarrow 0$ as $n \to \infty$. Hence (iii) holds.
 
 Define $S_n := T^{(1)} + \dots + T^{(n)}, \ n \geq 1$ and $S_0 := 0$. Let 
 \begin{align}\label{EQ:46}
  k_t := k_{t - S_{n-1}}^{(n)}, \ \ \ S_{n-1} \leq t \leq S_{n}, \ \  n \geq 1.
 \end{align}
 Then $k_t$ is well-defined for any $t \geq 0$ and due to the semigroup property of $U_{\e}^{\Delta}(t)$ (see \cite{F16a}) 
 the evolution obtained by this procedure is independent of the particular choice of $T^{(n)}$. 
 It remains to prove (a) -- (c). Property (a) follows by Corollary \ref{TH:07}.
 \newline (b) By (ii) it suffices to show that for any $n \geq 1$ and any $t \in [0, S_n ]$
 \begin{align}\label{EQ:26}
  \langle \langle G, k_t \rangle \rangle = \langle \langle G, k_0 \rangle \rangle + \int \limits_{0}^{t} \langle \langle \widehat{L}_{\e}G, k_s \rangle \rangle ds, \ \ G \in B_{bs}(\Gamma_0^2).
 \end{align}
 First observe that by \eqref{EQ:34} we have $k_t = k_{t - S_n}^{(n+1)} = U_{\e}^{\Delta}(t - S_n)k_{T^{(n)}}^{(n)}$.
 Proposition \ref{TH:00}.(b) together with the substitution $s = r - S_n$ implies
 \begin{align*}
  \langle \langle G, k_t \rangle \rangle &= \langle \langle G, k_{T^{(n)}}^{(n)}\rangle \rangle + \int \limits_{0}^{t - S_n} \langle \langle \widehat{L}_{\e}G, U_{\e}^{\Delta}(s)k_{T^{(n)}}^{(n)} \rangle \rangle ds
  \\ &= \langle \langle G, k_{T^{(n)}}^{(n)} \rangle \rangle + \int \limits_{S_n}^{t}\langle \langle \widehat{L}_{\e}G, U_{\e}^{\Delta}(r - S_n)k_{T^{(n)}}^{(n)} \rangle \rangle dr.
 \end{align*}
 Taking into account that $k_{T^{(n)}}^{(n)} = U_{\e}^{\Delta}(0)k_{T^{(n)}}^{(n)} = k_0^{(n+1)} = k_{S_n}$ and
 $U_{\e}^{\Delta}(r - S_n)k_{T^{(n)}}^{(n)} = k_{r - S_n}^{(n+1)} = k_r$ yields for $n \geq 0$ and $t \in [S_n, S_{n+1}]$
 \begin{align}\label{EQ:27}
  \langle \langle G, k_t \rangle \rangle = \langle \langle G, k_{S_n} \rangle \rangle + \int \limits_{S_n}^{t} \langle \langle \widehat{L}_{\e}G, k_s \rangle \rangle ds, \ \ G \in B_{bs}(\Gamma_0^2).
 \end{align}
 For $n = 0$ identity \eqref{EQ:26} readily follows from \eqref{EQ:27}. 
 Proceeding by induction we get for $t \in [S_n, S_{n+1}]$ by \eqref{EQ:27} and induction hypothesis 
 \begin{align*}
  \langle \langle G, k_t \rangle \rangle &= \langle \langle G, k_{S_n} \rangle \rangle + \int \limits_{S_n}^{t}\langle \langle \widehat{L}_{\e}G, k_s \rangle \rangle ds
  \\ &= \langle \langle G, k_0 \rangle \rangle + \int \limits_{0}^{S_n} \langle \langle \widehat{L}_{\e}G, k_s \rangle \rangle ds + \int \limits_{S_n}^{t}\langle \langle \widehat{L}_{\e}G, k_s \rangle \rangle ds
      = \langle \langle G, k_0 \rangle \rangle + \int \limits_{0}^{t} \langle \langle \widehat{L}_{\e}G, k_s \rangle \rangle ds.
 \end{align*}
  Finally, property (c) follows from the construction of $k_t$ and Proposition \ref{TH:00}.(c)
\end{proof}
Note that from \eqref{EQ:34} and \eqref{EQ:46} we obtain
\begin{align}\label{CORRDEF}
 k_t = k_{t- S_n}^{(n+1)} = U_{\e}^{\Delta}(t-S_n)k_{T^{(n)}}^{(n)}, \ \ t \in [S_n, S_{n+1}]
\end{align}
and $k_{T^{(n)}}^{(n)} = k_{S_n}$ for all $n \geq 0$. This completes, by Proposition \ref{TH:00}, the proof of Theorem \ref{TH:03}.

\section{Proof Theorem \ref{APRIORI}}

\subsection{Case $m \leq \| a^+ \|_{L^1}$}
Fix $\e > 0$. Let $\alpha^+ > \alpha_*^+$ and $\alpha^- \in [\alpha_*^-, \alpha^{*,-})$. 
Let $\delta$ and $\alpha_{\delta}^+$ be given as in Theorem \ref{APRIORI}.(a). Set 
\[
 r_t(\eta) := e^{\alpha_{\delta}^+|\eta^+|}e^{\alpha^-|\eta^-|}e^{\left( \| a^+ \|_{L^1} + \delta - m\right)|\eta^+|t}.
\]
Then $r_t \in K_{\alpha_{\delta}(t)}$ where $\alpha_{\delta}(t) = (\alpha_{\delta}^+ + ( \| a^+ \|_{L^1} + \delta - m)t, \alpha^-)$.
\begin{Lemma}
 For each $n \geq 0$ it holds that
 \[
  Q_{\e}(t)r_{S_n} \leq r_{S_n + t}, \ \ t \in [0,T^{(n+1)}]
 \]
 where $T^{(n+1)}$ is defined in \eqref{EQ:04} and $S_n := T^{(1)} + \dots + T^{(n)}$ with $S_0 := 0$.
\end{Lemma}
\begin{proof}
 Observe that $\delta - m < \lambda$ and $\alpha^+ \leq \alpha_{\delta}^+$ implies
 \[
  \alpha^+ + (\| a^+ \|_{L^1} + \delta - m)T \leq \beta^+((\alpha_{\delta}^+,\alpha^-), T), \ \ \forall T > 0.
 \]
 Hence we have $r_{S_n + t} \in \K_{\beta((\alpha_{\delta}^+, \alpha^-), S_{n+1})}$, $t \in [0,T^{(n+1)}]$. 
 By Lemma \ref{TH:02}.(a),(c) we get
 \begin{align}\label{EQ:49}
  Q_{\e}^{\Delta}(t)r_{S_n} - r_{S_n + t} = \int \limits_{0}^{t}Q_{\e}^{\Delta}(t-s)N_s ds
 \end{align}
 where $N_s \in \K_{\beta((\alpha_{\delta}^+, \alpha^-), S_{n+1}) + (\delta,\delta)}$ is given by
 \begin{align*}
  N_s(\eta) &= -\delta|\eta^+| r_{S_n +s }(\eta) + \frac{1}{\e}(L^{\Delta,E}r_{S_n + s})(\eta)
  \\ &\ \ \ - \sum \limits_{x \in \eta^+}\sum \limits_{y \in \eta^+ \backslash x}a^-(x-y)r_{S_n + s}(\eta) - g \sum \limits_{x \in \eta^+}\sum \limits_{y \in \eta^-}b^-(x-y)r_{S_n+s}(\eta)
  \\ &\ \ \ + \sum \limits_{x \in \eta^+}\sum \limits_{y \in \eta^+ \backslash x}a^+(x-y)r_{S_n +s}(\eta^+ \backslash x, \eta^-).
 \end{align*}
 By $L^{\Delta,E}r_{S_n+s} \leq 0$, condition (S), $e^{-\alpha_{\delta}^+}\vartheta < 1$ and $\lambda e^{- \alpha_{\delta}^+} < \delta$ we obtain
 \begin{align*}
  N_s(\eta) &\leq r_{S_n+s}(\eta)|\eta^+|\left( \lambda e^{- \alpha_{\delta}^+} - \delta\right)
  + r_{S_n + s}(\eta) \left( e^{- \alpha_{\delta}^+}\vartheta - 1\right)\sum \limits_{x \in \eta^+}\sum \limits_{y \in \eta^+ \backslash x}a^-(x-y) \leq 0.
 \end{align*}
 The assertion follows by Lemma \ref{TH:02}.(b).
\end{proof}
We prove Theorem \ref{APRIORI} by induction. For $n = 0$ and $t \in [0, T^{(1)}]$ we get by Lemma \ref{TH:02}.(d) and \eqref{CORRDEF}
\begin{align*}
 k_{\mu_t^{\e}} = U_{\e}^{\Delta}(t)k_{\mu_0} \leq Q_{\e}^{\Delta}(t)k_{\mu_0}.
\end{align*}
Next, observe that $k_{\mu_0} \leq \Vert k_{\mu_0}\Vert_{\K_{\alpha}}r_{0}$ and hence by previous lemma
\[
 Q_{\e}^{\Delta}(t)k_{\mu_0} \leq \Vert k_{\mu_0} \Vert_{\K_{\alpha}} Q_{\e}^{\Delta}(t)r_{0} \leq \Vert k_{\mu_0} \Vert_{\K_{\alpha}} r_{t}.
\]
This proves the assertion for $n = 0$. Suppose that the assertion holds for $n \geq 0$. 
Then $k_{T^{(n)}}^{(n)} = k_{S_n} \leq \Vert k_{\mu_0} \Vert_{\K_{\alpha}} r_{S_n}$ and hence by Lemma \ref{TH:02}.(d) and \eqref{CORRDEF}
\begin{align*}
 k_{\mu_t^{\e}} &= U_{\e}^{\Delta}(t)k_{T^{(n)}}^{(n)} \leq Q_{\e}^{\Delta}(t-S_n)k_{T^{(n)}}^{(n)}
 \leq \Vert k_{\mu_0} \Vert_{\K_{\alpha}} Q_{\e}^{\Delta}(t-S_n)r_{S_n} \leq \Vert k_{\mu_0}\Vert_{\K_{\alpha}} r_t.
\end{align*}
This proves the assertion in this case.

\subsection{Case $m > \| a^+ \|_{L^1}$}
Fix $\e > 0$. Let $\alpha^+ > \alpha_*^+$ and $\alpha^- \in [\alpha_*^-, \alpha^{*,-})$. 
Take $\delta$ and $\alpha_{\delta}^+$ as in Theorem \ref{APRIORI}. Set 
\[
 r_t(\eta) := e^{\alpha_{\delta}^+|\eta^+|}e^{\alpha^-|\eta^-|}e^{- \delta t}.
\]
Then $r_t \in K_{(\alpha_{\delta}^+, \alpha^-)}$ for all $t \geq 0$.
\begin{Lemma}
 For each $n \geq 0$ it holds that
 \[
  Q_{\e}(t)r_{S_n} \leq r_{S_n + t}, \ \ t \in [0,T^{(n+1)}]
 \]
 where $T^{(n+1)}$ is defined in \eqref{EQ:04} and $S_n := T^{(1)} + \dots + T^{(n)}$ with $S_0 := 0$.
\end{Lemma}
\begin{proof}
 By Lemma \ref{TH:02}.(a),(c) we see that \eqref{EQ:49} holds for $N_s \in \K_{(\alpha_{\delta}^+, \alpha^-) + (\delta,\delta)}$ given by
 \begin{align*}
  N_s(\eta) &= (- m + \| a^+ \|_{L^1})|\eta^+| r_{S_n + s}(\eta) + \frac{1}{\e}(L^{\Delta,E}r_{S_n + s})(\eta)
  \\ &\ \ \ - \sum \limits_{x \in \eta^+}\sum \limits_{y \in \eta^+ \backslash x}a^-(x-y)r_{S_n+s}(\eta) - g \sum \limits_{x \in \eta^+}\sum \limits_{y \in \eta^-}b^-(x-y)r_{S_n+s}(\eta)
  \\ &\ \ \ + \sum \limits_{x \in \eta^+}\sum \limits_{y \in \eta^+ \backslash x}a^+(x-y)r_{S_n+s}(\eta^+ \backslash x,\eta^-) + \delta r_{S_n + s}(\eta).
 \end{align*}
 Then $N_s(\emptyset,\eta^-) \leq 0$, 
 \[
  N_s(\{x\}, \eta^-) \leq r_{S_n+s}(\{x\},\eta^-)\left( - m + \| a^+ \|_{L^1} + \delta \right) \leq 0
 \]
 and for $|\eta^+| \geq 2$
 \begin{align*}
  N_s(\eta) &\leq r_{S_n +s}(\eta)\left( e^{- \alpha_{\delta}^+}\vartheta - 1 \right) \sum \limits_{x \in \eta^+}\sum \limits_{y \in \eta^+ \backslash x}a^-(x-y)
  \\ &\ \ \ + r_{S_n + s}(\eta)\left( \delta + |\eta^+|\left( \lambda e^{-\alpha_{\delta}^+} + \| a^+ \|_{L^1} - m \right) \right)
  \\ &\leq r_{S_n + s}(\eta)\delta(1 - |\eta^+|) \leq 0.
 \end{align*}
 The assertion follows by Lemma \ref{TH:02}.(b).
\end{proof}
The assertion now follows again by induction similarly to the previous case.

\section{Proof Theorem \ref{LYP:TH:01}}
Let $\Gamma_{\infty}^2 = \{ \gamma \in \Gamma^2 \ | \ \mathbb{M}(\gamma) < \infty \}$, where
\begin{align*}
  \mathbb{M}(\gamma) := \mathbb{V}(\gamma) 
  &+ \sum \limits_{x \in \gamma^+}\left( \sum\limits_{w \in \gamma^+ \backslash x}a^-(x-w)\right) \left(\sum \limits_{y \in \gamma^+ \backslash x}\Xi(x,y) + \sum \limits_{y \in \gamma^-}\Xi(x,y)\right)
  \\ &+ \sum \limits_{x \in \gamma^+}\left( \sum \limits_{w \in \gamma^-}b^-(x-w)\right)\left( \sum \limits_{y \in \gamma^+ \backslash x}\Xi(x,y) + \sum \limits_{y \in \gamma^-}\Xi(x,y) \right) < \infty.
\end{align*}
\begin{Lemma}
 For any $\mu \in \mathcal{P}$ we have $\mu(\Gamma_{\infty}^2) = 1$.
\end{Lemma}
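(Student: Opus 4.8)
The plan is to prove the stronger quantitative statement $\langle \mathbb{M}, \mu \rangle < \infty$ for every $\mu \in \mathcal{P}$; since $\Gamma_{\infty}^2 = \{\gamma \in \Gamma^2 : \mathbb{M}(\gamma) < \infty\}$, this forces $\mathbb{M} < \infty$ $\mu$-a.e.\ and hence $\mu(\Gamma_{\infty}^2) = 1$. Fix $\mu \in \mathcal{P}_{\alpha}$ for some $\alpha = (\alpha^+, \alpha^-) \in \R^2$; by definition of $\mathcal{P}_{\alpha}$ the correlation function $k_{\mu}$ exists and satisfies the Ruelle bound $k_{\mu}(\eta) \leq \|k_{\mu}\|_{\K_{\alpha}} e^{\alpha^+ |\eta^+|} e^{\alpha^- |\eta^-|}$ for $\lambda$-a.e.\ $\eta \in \Gamma_0^2$. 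Since $\mathbb{M} \geq 0$ all integrals below make sense in $[0,\infty]$ and finiteness is the whole point.

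The next step is to decompose $\mathbb{M} = \sum_j \mathbb{M}_j$ into finitely many non-negative functionals, each a multiple sum over points of $\gamma^+$ and/or $\gamma^-$ of a fixed non-negative kernel built from $e$, $\Xi$, $a^-$, $b^-$, of total degree at most three. Each $\mathbb{M}_j$ is then rewritten, after separating the diagonal contributions arising from coinciding summation indices (a point of $\gamma^+$ coinciding with another point of $\gamma^+$, or a point of $\gamma^-$ with another of $\gamma^-$; recall points of $\gamma^+$ and $\gamma^-$ never coincide), as a finite sum of terms of the form $\mathbbm{K}G$ with $G \geq 0$ measurable and supported on $\{\eta \in \Gamma_0^2 : |\eta| \leq 3\}$. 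Approximating such $G$ from below by functions in $B_{bs}(\Gamma_0^2)$ and passing to the monotone limit in \eqref{EQ:30} yields $\langle \mathbbm{K}G, \mu \rangle = \int_{\Gamma_0^2} G(\eta) k_{\mu}(\eta) d\lambda(\eta)$, so by the Ruelle bound $\langle \mathbbm{K}G, \mu \rangle \leq \|k_{\mu}\|_{\K_{\alpha}} \int_{\Gamma_0^2} G(\eta) e^{\alpha^+ |\eta^+|} e^{\alpha^- |\eta^-|} d\lambda(\eta)$, which by \eqref{EQ:50} (and its two-component version) is a finite sum of finite-dimensional Lebesgue integrals of products of the kernels $e, \Xi, a^-, b^-$ against constants.

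It then remains to check that each of these finitely many integrals is finite. For the terms coming from $\mathbb{V}$ (no $a^-, b^-$) this reduces to $\|e\|_{L^1} < \infty$ and $\|\Xi\|_{L^1(\R^d \times \R^d)} < \infty$, which hold by (L) and by the choice $\kappa \in (0,d)$. For the off-diagonal terms containing $a^-$ or $b^-$ one factors out $\int_{\R^d} a^-(x-w)\, dw = \|a^-\|_{L^1}$ (resp.\ $\|b^-\|_{L^1}$) and is left with $\|\Xi\|_{L^1(\R^d \times \R^d)}$, both finite by (S) and the choice of $\kappa$. The only slightly delicate pieces are the diagonal terms of the form $\int_{\R^d \times \R^d} a^-(x-w)\Xi(x,w)\, dx\, dw$ and its $b^-$ analogue: there I would use $0 < e \leq 1$ to bound $\Xi(x,w) = e(x)e(w)(1 + |x-w|^{-\kappa}) \leq e(w)(1 + |x-w|^{-\kappa})$, substitute $z = x-w$, and split the $|z|^{-\kappa}$-singularity into $\{|z| \leq 1\}$ and $\{|z| > 1\}$; the first part is controlled by $\|a^-\|_{L^{\infty}} \int_{|z| \leq 1} |z|^{-\kappa} dz < \infty$ since $\kappa < d$, the second by $\|a^-\|_{L^1}$, so the whole integral is at most a constant times $\|e\|_{L^1}(\|a^-\|_{L^1} + \|a^-\|_{L^{\infty}}) < \infty$, using $a^-, b^- \in L^1 \cap L^{\infty}$ from (S). Summing the finitely many finite bounds gives $\langle \mathbb{M}, \mu \rangle < \infty$, hence $\mu(\Gamma_{\infty}^2) = 1$.

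The main obstacle is purely organizational: correctly reorganizing the multi-index sums into $K$-transforms by accounting for the diagonal (coinciding-index) contributions, and then tracking which $L^1$ or $L^{\infty}$ norm of which kernel controls which finite-dimensional integral. The off-diagonal integrals are immediate from $a^-, b^- \in L^1$ and $\Xi \in L^1(\R^d \times \R^d)$, whereas the diagonal ones genuinely require the local integrability coming from $\kappa < d$ together with the boundedness of $a^-, b^-$.
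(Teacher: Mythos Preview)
Your approach is correct and coincides with the paper's: the paper's proof is the two-line sketch ``It suffices to show that $\int_{\Gamma^2}\mathbb{M}(\gamma)\,d\mu(\gamma) < \infty$; the latter can be shown by direct computation using $e \in L^1(\R^d)$ and $\Xi \in L^1(\R^d \times \R^d)$.'' You have simply carried out that direct computation in detail, and in fact more carefully than the hint suggests --- the paper does not mention the diagonal contributions $\int a^-(x-w)\Xi(x,w)\,dx\,dw$ (and the $b^-$ analogue), which, as you correctly observe, are not controlled by $\|\Xi\|_{L^1(\R^d\times\R^d)}$ alone but require $a^-,b^- \in L^{\infty}$ together with $\kappa < d$.

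One small inaccuracy: your parenthetical ``recall points of $\gamma^+$ and $\gamma^-$ never coincide'' is not literally true in this framework (the two components are independent copies of $\Gamma$ and may share locations). What is true, and what you actually use, is that in the two-component $K$-transform the $+$ and $-$ variables are kept separate, so no diagonal identification between a $+$-index and a $-$-index ever arises. This does not affect the argument.
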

\begin{proof}
 It suffices to show that $\int_{\Gamma^2}\mathbb{M}(\gamma)d\mu(\gamma) < \infty$. 
 The latter one can be shown by direct computation using the fact that $e \in L^1(\R^d)$ and $\Xi \in L^1(\R^d \times \R^d)$.
\end{proof}
It is not difficult to prove that $(L_{\e}\mathbb{V})(\gamma)$ is well-defined for any $\gamma \in \Gamma_{\infty}^2$.
\begin{Lemma}
 There exists a constant $c_{\e} > 0$ such that
 \[
  (L_{\e}\mathbb{V})(\gamma) \leq c_{\e}\mathbb{V}(\gamma) + \frac{z}{\e}\| e\|_{L^1}, \ \ \gamma \in \Gamma_{\infty}^2.
 \]
\end{Lemma}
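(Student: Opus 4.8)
The plan is to expand $L_{\e}\mathbb{V}$ summand by summand. Because $L^E$ acts only in the $\gamma^-$-variable while $L^S$ only inserts or deletes points of $\gamma^+$, one has $L^E V_0^+ = L^E V_1^+ = 0$ and $L^S V_0^- = L^S V_1^- = 0$, so
\[
 L_{\e}\mathbb{V} = L^S V_0^+ + L^S V_1^+ + L^S W + \frac{1}{\e}\bigl(L^E V_0^- + L^E V_1^- + L^E W\bigr).
\]
I would then split each of these six generators into its death and birth parts. Every death increment of $V_0^{\pm}, V_1^{\pm}, W$ is $\le 0$, every death rate is $\ge 0$, and on $\Gamma_{\infty}^2$ all the sums and integrals defining $(L_{\e}\mathbb{V})(\gamma)$ converge absolutely — this is exactly what the extra terms in $\mathbb{M}$ guarantee. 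Hence the death contributions may simply be discarded, and it remains to bound the non-negative birth contributions by a multiple of $\mathbb{V}(\gamma)$ plus the constant $\frac{z}{\e}\|e\|_{L^1}$.

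The two analytic ingredients I would prepare first are: (i) submultiplicativity of $e$ — applying (L) to the arguments $x$ and $y-x$ and using symmetry gives $e(y) \le e(x)/e(x-y)$ for all $x,y \in \R^d$; and (ii) the convolution bound
\[
 C_I := \sup_{v \in \R^d}\int_{\R^d}\frac{a^+(w)}{e(w)}\frac{1+|v+w|^{\kappa}}{|v+w|^{\kappa}}\,dw < \infty,
\]
together with the analogous $C := \sup_{x \in \R^d}\int_{\R^d}e(y)\frac{1+|x-y|^{\kappa}}{|x-y|^{\kappa}}\,dy < \infty$. For (ii) one writes $\frac{1+|v+w|^{\kappa}}{|v+w|^{\kappa}} = 1 + |v+w|^{-\kappa}$, substitutes $u = v+w$, and splits the singular integral into $\{|u|<1\}$, bounded by $\|a^+/e\|_{L^{\infty}}\int_{|u|<1}|u|^{-\kappa}\,du$ (finite since $\kappa < d$, recall (L)), and $\{|u|\ge 1\}$, bounded by $\|a^+/e\|_{L^1}$; both bounds are independent of $v$. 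The same split with $a^+/e$ replaced by $e$ gives $C$; in particular $\int_{\R^d}\Xi(x,y)\,dy \le C\,e(x)$ for every $x$.

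With (i) and (ii) in hand the individual estimates are mechanical. For $L^S V_0^+$ the birth part is $\sum_{x \in \gamma^+}\int a^+(x-y)e(y)\,dy \le \|a^+/e\|_{L^1}V_0^+(\gamma)$. For $L^S V_1^+$ the birth increment at a new point $y$ equals $\sum_{w \in \gamma^+}\Xi(y,w)$, so the birth part is $\sum_{x \in \gamma^+}\sum_{w \in \gamma^+}\int a^+(x-y)\Xi(y,w)\,dy$; by (i) and the substitution from (ii) each integral is $\le e(x)e(w)C_I$, and splitting the double sum into the diagonal $w=x$ (contributing $\le C_I\sum_x e(x)^2 \le C_I V_0^+(\gamma)$, using $e \le 1$) and the off-diagonal $w \ne x$ (contributing $\le C_I\sum_x\sum_{w \ne x}e(x)e(w) \le C_I\sum_x\sum_{w \ne x}\Xi(x,w) = 2C_I V_1^+(\gamma)$) gives $L^S V_1^+ \le C_I\bigl(V_0^+ + 2V_1^+\bigr)$. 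The identical computation, the birth increment of $W$ at $y$ now being $\sum_{w \in \gamma^-}\Xi(y,w)$, yields $L^S W \le C_I W$. On the environment side, using $e^{-E_{\psi}(x,\gamma^-)} \le 1$ and $\int\Xi(x,y)\,dy \le C\,e(x)$ one obtains $L^E V_0^- \le z\|e\|_{L^1}$, $L^E V_1^- \le zC\,V_0^-$, and $L^E W \le zC\,V_0^+$. Adding everything,
\[
 L_{\e}\mathbb{V} \le \left(\|a^+/e\|_{L^1} + C_I + \frac{zC}{\e}\right)V_0^+ + 2C_I V_1^+ + C_I W + \frac{zC}{\e}V_0^- + \frac{z}{\e}\|e\|_{L^1},
\]
and the assertion follows with $c_{\e}$ the maximum of the coefficients of $V_0^+, V_1^+, W$ and $V_0^-$ (the absent coefficient of $V_1^-$ being harmless since $\mathbb{V} \ge V_1^- \ge 0$).

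The only non-bookkeeping step is the convolution bound (ii): it is where $a^+/e \in L^1$, $a^+/e \in L^{\infty}$ and $\kappa \in (0,d)$ are all needed, and it is what converts the branching integral $\int a^+(x-y)\Xi(y,w)\,dy$ back into a constant times $\Xi(x,w)$. I expect this to be the main obstacle; the remaining work is just a careful accounting of the birth and death contributions of $L^S$ and $L^E$ on the five pieces of $\mathbb{V}$.
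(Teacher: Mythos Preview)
Your proposal is correct and follows essentially the same route as the paper: decompose $\mathbb{V}$ into its five pieces, bound the birth contributions of $L^S$ and $\frac{1}{\e}L^E$ on each using submultiplicativity of $e$ together with the key convolution estimate (your $C_I$, the paper's $c'$), and absorb everything into a multiple of $\mathbb{V}$ plus the constant $\frac{z}{\e}\|e\|_{L^1}$. The only cosmetic difference is that the paper keeps some of the negative death terms (the $-m$ and $-\frac{1}{\e}$ contributions) in its intermediate inequalities, whereas you discard all death parts at the outset; this affects only the value of $c_{\e}$, not the argument.
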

\begin{proof}
Write $\mathbb{V}(\gamma) = V_0(\gamma) + V_1^+(\gamma) + V_1^-(\gamma) + W(\gamma)$
with $V_0(\gamma) = \sum_{x \in \gamma^+}e(x) + \sum_{x \in \gamma^-}e(x)$,
$V_1^+(\gamma) = \frac{1}{2}\sum_{x \in \gamma^+}\sum_{y \in \gamma^+ \backslash x}\Xi(x,y)$ and
$V_1^-(\gamma) = \frac{1}{2}\sum_{x \in \gamma^-}\sum_{y \in \gamma^- \backslash x}\Xi(x,y)$. 
Let us estimate each term in 
\begin{align}\label{LYP:00}
 (L_{\e}\mathbb{V})(\gamma) &= (L_{\e}V_0)(\gamma) + (L_{\e}V_1^+)(\gamma) + (L_{\e}V_1^-)(\gamma) + (L_{\e}W)(\gamma).
\end{align}
separately. For the first term we get
\begin{align}\label{LYP:01}
 (L_{\e}V_0)(\gamma) \leq \left( \left \Vert \frac{a^+}{e} \right \Vert_{L^1} - m \right) \sum \limits_{x \in \gamma^+}e(x) - \frac{1}{\e}\sum \limits_{x \in \gamma^-}e(x)
    + \frac{z}{\e} \Vert e \Vert_{L^1}
\end{align}
where we have used $\int_{\R^d}a^+(x-y)e(y)dy \leq e(x) \left \Vert \frac{a^+}{e} \right \Vert_{L^1}$.
For the second term we get
\begin{align*}
 (L_{\e}V_1^+)(\gamma) &\leq \sum \limits_{x \in \gamma^+}\int \limits_{\R^d}a^+(x-y)\sum \limits_{w \in \gamma^+}\Xi(w,y)dy
 \\ &= \sum \limits_{x \in \gamma^+}\sum \limits_{w \in \gamma^+ \backslash x}\int \limits_{\R^d}a^+(x-y)\Xi(w,y)dy + \sum \limits_{x \in \gamma^+}\int \limits_{\R^d}a^+(x-y)\Xi(x,y)dy
     = I_1 + I_2.
\end{align*}
Then we obtain
\[
 I_2 = \sum \limits_{x \in \gamma^+}e(x) \int \limits_{\R^d}a^+(y) \frac{1 + |y|^{\kappa}}{|y|^{\kappa}}e(x-y)dy \leq c_{\kappa} \sum \limits_{x \in \gamma^+}e(x)
\]
where $c_{\kappa} := \int_{\R^d}a^+(y) \frac{1 + |y|^{\kappa}}{|y|^{\kappa}}dy < \infty$. For $I_1$ we have
\[
 I_1 \leq \sum \limits_{x \in \gamma^+}\sum \limits_{w \in \gamma^+ \backslash x} e(w)e(x) \int \limits_{\R^d}\frac{a^+(y)}{e(y)} \frac{1 + |y - (x- w)|^{\kappa}}{|y - (x-w)|^{\kappa}}dy
 \leq c' V_1^+(\gamma)
\]
where 
\[
 c' := \sup \limits_{w \in \R^d} \int \limits_{\R^d}\frac{a^+(y)}{e(y)} \frac{1 + |y - w|^{\kappa}}{|y - w|^{\kappa}}dy
 \leq \left \Vert \frac{a^+}{e} \right \Vert_{L^{\infty}} \int \limits_{\R^d}\frac{1}{|y|^{\kappa}}dy + 2 \left \Vert \frac{a^+}{e} \right \Vert_{L^1} < \infty.
\]
Altogether we see that
\begin{align}\label{LYP:02}
 (L_{\e}V_1^+)(\gamma) &\leq c' V_1^+(\gamma) + c_b \sum \limits_{x \in \gamma^+}e(x).
\end{align}
In the same way
\begin{align}\label{LYP:03}
 (L_{\e}V_1^-)(\gamma) &\leq - \frac{1}{\e} \sum \limits_{x \in \gamma^-}e(x) + c_- \frac{z}{\e} \sum \limits_{w \in \gamma^-}e(x)
\end{align}
where $c_- = \int_{\R^d}e(x)\frac{1 + |x|^{\kappa}}{|x|^{\kappa}}dx$. Finally we have
\begin{align*}
 (L_{\e}W)(\gamma) &\leq \sum \limits_{x \in \gamma^+}\sum \limits_{w \in \gamma^-}\int \limits_{\R^d}a^+(x-y)\Xi(y,w)dy
 \\ &\ \ \  - \frac{1}{\e}\sum \limits_{x \in \gamma^-}\sum \limits_{y \in \gamma^+}\Xi(x,y) + \frac{z}{\e}\sum \limits_{w \in \gamma^+}\int \limits_{\R^d}e^{-E_{\psi}(x,\gamma^-)}\Xi(w,x)dx.
\end{align*}
Using
\begin{align*}
 &\ \sum \limits_{x \in \gamma^+}\sum \limits_{w \in \gamma^-}\int \limits_{\R^d}a^+(x-y)\Xi(y,w)dy
  = \sum \limits_{x \in \gamma^+}\sum \limits_{w \in \gamma^-}e(w)\int \limits_{\R^d}a^+(x-y)e(y)\frac{1 + |y-w|^{\kappa}}{|y-w|^{\kappa}}dy
  \\ &\leq \sum \limits_{x \in \gamma^+}\sum \limits_{w \in \gamma^-}\Xi(x,w)\int \limits_{\R^d}\frac{a^+(y)}{e(y)}\frac{1 + |x-y-w|^{\kappa}}{|x-y-w|^{\kappa}}dy
      \leq c' W(\gamma)
\end{align*}
and
\begin{align*}
 \frac{z}{\e}\sum \limits_{w \in \gamma^+}\int \limits_{\R^d}e^{-E_{\psi}(x,\gamma^-)}\Xi(w,x)dx
 \leq \frac{z}{\e}\sum \limits_{w \in \gamma^+}e(w) \int \limits_{\R^d}e(x)\frac{1 + |w-x|^{\kappa}}{|w-x|^{\kappa}}dx
 \leq c_- \frac{z}{\e} \sum \limits_{w \in \gamma^+}e(w)
\end{align*}
where $c_- := \sup_{w \in \R^d}\int_{\R^d}e(x)\frac{1 + |w-x|^{\kappa}}{|w-x|^{\kappa}}dx < \infty$ we obtain
\begin{align}\label{LYP:04}
 (L_{\e}W)(\gamma) &\leq (c' - m - 1)W(\gamma) + c_- \frac{z}{\e}\sum \limits_{x \in \gamma^+}e(x).
\end{align}
The assertion follows from \eqref{LYP:00} -- \eqref{LYP:04}.
\end{proof}
Assertion Theorem \ref{LYP:TH:01}.(b) can be deduced from approximation of $\mathbb{V}$ by $\mathbb{V}_n \in \mathcal{FP}(\Gamma^2)$
together with Gronwall lemma.

\section{Proof Theorem \ref{TH:05}}
The proof is divided into three steps. First we consider an auxiliary problem and prove a preliminary version of the averaging principle.
Based on this auxiliary result we show the stochastic averaging principle for the evolution of quasi-observables.
Finally, Theorem \ref{TH:05} follows by standard duality arguments.

\subsection{Auxiliary convergence}
Since here and below we only consider the case $\delta = 0$ we let $\widehat{L}^S = \widehat{L}^S_{0}, \widehat{L}^E = \widehat{L}^E_0, A = A_0, B = B_0$,
where the latter operators have been defined in Section 4. Let 
\begin{align*}
 (C_{\e}G)(\eta) &= (AG)(\eta) - g (VG)(\eta) + \frac{1}{\e}(\widehat{L}^EG)(\eta)
 \\ (B'G)(\eta) &= (BG)(\eta) + g(VG)(\eta)
\end{align*}
where $A$, $B$ and $\widehat{L}^E$ have been defined in \eqref{EQ:35} -- \eqref{EQ:37} and
\[
 (VG)(\eta) := \sum \limits_{x \in \eta^+}\sum \limits_{y \in \eta^-}b^-(x-y)G(\eta^+, \eta^- \backslash y).
\]
These are well-defined linear operators on $\Lb_{\alpha}$ with domain $\mathcal{D}_{\alpha}$ defined by \eqref{EQ:38} for $\e \in (0,\infty)$.
Here and below we assume that \eqref{EQ:47} holds.
\begin{Lemma}
 Let $\alpha^- \in [\alpha_*^-, \alpha^{*,-})$ and $\alpha^+ > \widetilde{\alpha}_*^+$.
 Then $(C_{\e}, \mathcal{D}_{\alpha})$ is the generator of an analytic semigroup $(T_{\e}^{\alpha}(t))_{t \geq 0}$
 of contractions on $\Lb_{\alpha}$. Moreover $T_{\e}^{\alpha}(t)|_{\Lb_{\beta}} = T_{\e}^{\beta}(t)$ holds for all $\alpha^{\pm} < \beta^{\pm}$ with $\beta^- < \alpha^{*,-}$. 
\end{Lemma}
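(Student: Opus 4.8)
The plan is to mimic the proof of Lemma \ref{LEMMA:00}, treating the extra term $-gV$ as part of the off-diagonal perturbation. Write $C_\e = -D + Q_\e$, where $D(\eta):=M(\eta)+\frac1\e|\eta^-|$ acts as multiplication by a non-negative measurable function; then $(-D,\mathcal{D}_\alpha)$ generates an analytic contraction semigroup on $\Lb_\alpha$ (of sector angle $\pi/2$, since $|e^{-zD(\eta)}|\le 1$ for $\operatorname{Re} z\ge 0$), and $Q_\e$ collects the birth term $\sum_{x\in\eta^+}\int a^+(x-y)G(\eta^+\cup y,\eta^-)dy$ from $A$, the term $-g(VG)(\eta)$, and the integral part of $\frac1\e\widehat{L}^E$. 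All three of $A,\widehat{L}^E,V$ are well-defined from $\mathcal{D}_\alpha$ into $\Lb_\alpha$, so $C_\e$ is. As in Lemma \ref{LEMMA:00}, the generation statement will follow from the argument of \cite[Proposition 3.1]{FK16b} once I show that $Q_\e$ is $D$-bounded with relative bound at most one, together with the corresponding positivity-type dissipativity estimate.

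First I would establish the key estimate: for every $0\le G\in\mathcal{D}_\alpha$,
\[
 \int_{\Gamma_0^2}(Q_\e^{\mathrm{abs}}G)(\eta)\,e^{\alpha|\eta|}d\lambda(\eta) \le \int_{\Gamma_0^2}D(\eta)G(\eta)\,e^{\alpha|\eta|}d\lambda(\eta),
\]
where $Q_\e^{\mathrm{abs}}$ is obtained from $Q_\e$ by replacing all kernels by their absolute values; since $|Q_\e G|\le Q_\e^{\mathrm{abs}}|G|$ this yields $\Vert Q_\e G\Vert_{\Lb_\alpha}\le\Vert DG\Vert_{\Lb_\alpha}$ for general $G\in\mathcal{D}_\alpha$. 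The three contributions are handled separately. For the birth term, the combinatorial minus-one-point identity in the $\eta^+$-variable together with condition (S), i.e.\ \eqref{EQ:07}, gives the bound $e^{-\alpha^+}\int(\vartheta\sum_{x\in\eta^+}\sum_{y\in\eta^+\backslash x}a^-(x-y)+\lambda|\eta^+|)G(\eta)e^{\alpha|\eta|}d\lambda$, exactly as in Lemma \ref{LEMMA:00}. For $g(V\,\cdot\,)$, the minus-one-point identity in the $\eta^-$-variable and the normalization $\Vert b^-\Vert_{L^1}=1$ give precisely $ge^{\alpha^-}\int|\eta^+|G(\eta)e^{\alpha|\eta|}d\lambda$. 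For the integral part of $\frac1\e\widehat{L}^E$, bounding $\mathcal{E}(|e^{-\psi(x-\cdot)}-1|;\cdot)$ and using the definition of $C_\psi(\alpha^-)$ (with $z_0=z$) gives $\frac1\e zC_\psi(\alpha^-)e^{-\alpha^-}\int|\eta^-|G(\eta)e^{\alpha|\eta|}d\lambda$, again as in Lemma \ref{LEMMA:00}. Adding these, the $\frac1\e|\eta^-|$-part of $D$ dominates the environment contribution because $zC_\psi(\alpha^-)e^{-\alpha^-}<1$ by \eqref{ENV:COND}, while for the $M$-part one uses pointwise that $\vartheta e^{-\alpha^+}\le 1$ and $\lambda e^{-\alpha^+}+ge^{\alpha^-}\le m+\lambda$: the first holds since $\alpha^+>\widetilde{\alpha}_*^+\ge\log\vartheta$, and the second follows from $\alpha^+>\widetilde{\alpha}_*^+$, $\alpha^-<\alpha^{*,-}$ and \eqref{EQ:47}. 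This gives the displayed inequality.

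Granted the estimate, $(C_\e,\mathcal{D}_\alpha)$ generates an analytic contraction semigroup $(T_\e^\alpha(t))_{t\ge 0}$ on $\Lb_\alpha$ by the argument of \cite[Proposition 3.1]{FK16b}; in particular dissipativity of $C_\e$ is immediate, since applying the estimate to $|G|$ and pairing with the $L^1$-duality functional (which sends $G$ to $\operatorname{sgn}(G)\,e^{\alpha|\cdot|}$ up to normalization) yields $\operatorname{Re}\langle\langle C_\e G,G^*\rangle\rangle\le \Vert Q_\e G\Vert_{\Lb_\alpha}-\Vert DG\Vert_{\Lb_\alpha}\le 0$. Finally, the consistency $T_\e^\alpha(t)|_{\Lb_\beta}=T_\e^\beta(t)$ for $\alpha^\pm<\beta^\pm$ with $\beta^-<\alpha^{*,-}$ follows as in Lemma \ref{LEMMA:00}: $C_\e$ acts identically on $\mathcal{D}_\beta\subset\mathcal{D}_\alpha$, so the resolvents agree on $\Lb_\beta$, and the exponential formula (equivalently, uniqueness of the analytic semigroup) transfers this to the semigroups.

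I expect the main obstacle to be the bookkeeping in the second step: the genuinely dissipative term $-gV$, once absolute values are taken — which is unavoidable both because of the sign of $e^{-\psi(x-\cdot)}-1$ and because $V$ couples a configuration to ones with fewer environment points — re-enters with a $+$ sign and competes with the birth term for the ``budget'' provided by $M$. This is precisely why the sharper threshold $\widetilde{\alpha}_*^+$ and condition \eqref{EQ:47} are required rather than merely $\alpha^+>\alpha_*^+$, and the splitting of $M$ into its $(m+\lambda)|\eta^+|$-part, its $a^-$-part and its $b^-$-part has to be arranged so as to absorb the birth term and $gV$ simultaneously.
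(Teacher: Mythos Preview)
Your proof is correct and follows exactly the route the paper sketches: compute the $V$-contribution via the minus-one-point identity in $\eta^-$ to obtain $ge^{\alpha^-}\int|\eta^+|G\,e^{\alpha|\eta|}d\lambda$, then absorb this into the $(m+\lambda)|\eta^+|$-part of $M$ using \eqref{EQ:47} and proceed verbatim as in Lemma~\ref{LEMMA:00}. One small caveat: your assertion $\widetilde{\alpha}_*^+\ge\log\vartheta$ does not follow from \eqref{EQ:47} alone---the paper tacitly assumes $\widetilde{\alpha}_*^+\ge\alpha_*^+$, which is harmless since $\widetilde{\alpha}_*^+$ is a free parameter in Theorem~\ref{TH:05}.
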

\begin{proof}
 For $0 \leq G \in \mathcal{D}_{\alpha}$ we have
 \begin{align*}
  \int \limits_{\Gamma_0^2}(VG)(\eta)e^{\alpha|\eta|}d\lambda(\eta)
  &= e^{\alpha^-} \int \limits_{\Gamma_0^2}|\eta^+|G(\eta)e^{\alpha|\eta|}d \lambda(\eta).
 \end{align*}
 The assertion follows by similar arguments to Lemma \ref{LEMMA:00}.
\end{proof}
Next we construct the limiting semigroup when $\e \to 0$. 
Similarly to $\Lb_{\alpha}$ we define $\Lb_{\alpha^+}$ as the Banach space of all equivalence classes of functions with finite norm
\[
 \| G \|_{\Lb_{\alpha^+}} = \int \limits_{\Gamma_0^+}|G(\eta^+)|e^{\alpha^+|\eta^+|}d\lambda(\eta^+).
\]
Then $(\Lb_{\alpha^+})_{\alpha^+ \in \R}$ is an decreasing scale of Banach spaces with dense embeddings 
$\Lb_{\beta^+} \subset \Lb_{\alpha^+}$ whenever $\alpha^+ < \beta^+$. 
Define $\widehat{\overline{L}} = \overline{A} + \overline{B}$ where
\begin{align*}
 (\overline{A}G)(\eta^+) &= - \overline{M}(\eta^+)G(\eta^+)
   + \sum \limits_{x \in \eta^+}\int \limits_{\R^d}a^+(x-y)G(\eta^+ \cup y) d y,
 \\ (\overline{B}G)(\eta^+) &= \lambda |\eta^+| G(\eta^+)
     - \sum \limits_{x \in \eta^+}\sum \limits_{y \in \eta^+ \backslash x}a^-(x-y)G(\eta^+ \backslash x)   
     + \sum \limits_{x \in \eta^+}\int \limits_{\R^d}a^+(x-y)G(\eta^+ \backslash x \cup y)d y,
 \\ \overline{M}(\eta) &= \left( m + \lambda + g \rho \right)|\eta^+| + \sum \limits_{x \in \eta^+}\sum \limits_{y \in \eta^+ \backslash x}a^-(x-y).
\end{align*}
It is a well-defined linear operator on $\Lb_{\alpha^+}$ with domain
$\overline{\mathcal{D}} = \{ G \in \Lb_{\alpha^+} \ | \ \overline{M}\cdot G \in \Lb_{\alpha^+}\}$.
\begin{Remark}
 One can show that $K\widehat{\overline{L}}G := \overline{L}KG$ holds for any $G \in B_{bs}(\Gamma_0^+)$.
\end{Remark}
\begin{Lemma}
 Let $\alpha^+ > \alpha_*^+$. Then $(\overline{A}, \overline{\mathcal{D}}_{\alpha^+})$ is the generator of an analytic, positive semigroup 
 $\overline{T}^{\alpha^+}(t)$ of contractions on $\Lb_{\alpha^+}$. 
 Moreover $\overline{T}^{\alpha^+}(t)|_{\Lb_{\beta^+}} = \overline{T}^{\beta^+}(t)$ holds for $\beta^+ > \alpha^+$.
\end{Lemma}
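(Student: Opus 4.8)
The plan is to follow the proof of Lemma~\ref{LEMMA:00}, restricted to one component and with the environment part $\frac{1}{\e}\widehat{L}^E$ absent. Write $\overline{A} = -\overline{M} + P$, where $-\overline{M}$ is multiplication by the nonnegative function $\overline{M}$ on $\Lb_{\alpha^+}$ (with domain $\overline{\mathcal{D}}_{\alpha^+}$) and
\[
 (PG)(\eta^+) = \sum_{x \in \eta^+}\int_{\R^d}a^+(x-y)G(\eta^+ \cup y)\,dy
\]
is the positive ``birth'' operator. Since $\overline{M} \ge 0$, the multiplication operator $-\overline{M}$ generates the positive $C_0$-semigroup of contractions $G \mapsto e^{-t\overline{M}(\cdot)}G$ on $\Lb_{\alpha^+}$; moreover $|\mu + \overline{M}(\eta^+)| \ge \max\{|\mu|,\overline{M}(\eta^+)\}$ for every $\mu$ in the open right half-plane, hence $\|(\mu + \overline{M})^{-1}\| \le |\mu|^{-1}$ and $\|\overline{M}(\mu+\overline{M})^{-1}\| \le 1$ as operators on $\Lb_{\alpha^+}$, so $-\overline{M}$ already generates a bounded analytic semigroup of angle $\pi/2$. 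It remains to treat $P$ as a perturbation.

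The only genuinely model-specific step --- and the main point to check --- is the relative boundedness of $P$ with respect to $\overline{M}$. For $0 \le G \in \overline{\mathcal{D}}_{\alpha^+}$, applying the one-component version of \eqref{IBP} (the Minlos-type identity behind \eqref{EQ:50}) to move the extra point $y$ into $\eta^+$, and then the Ruelle stability condition \eqref{EQ:07} together with the elementary bound $\vartheta \sum_{x\in\zeta^+}\sum_{y\in\zeta^+\setminus x}a^-(x-y) + \lambda|\zeta^+| \le \max\{\vartheta, \tfrac{\lambda}{m+\lambda+g\rho}\}\,\overline{M}(\zeta^+)$, I would obtain
\[
 \|PG\|_{\Lb_{\alpha^+}} \le e^{-\alpha^+}\max\Bigl\{\vartheta,\tfrac{\lambda}{m+\lambda+g\rho}\Bigr\}\,\|\overline{M}G\|_{\Lb_{\alpha^+}};
\]
for general $G$ the pointwise estimate $|PG| \le P|G|$ and $\overline{M}\ge 0$ give the same inequality. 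Since $\alpha^+ > \alpha_*^+$ and, by \eqref{EQ:13}, $\max\{\vartheta, \tfrac{\lambda}{m+\lambda+g\rho}\} \le \max\{\vartheta, \tfrac{\lambda}{m+\lambda}\} = e^{\alpha_*^+}$ (the extra competition constant $g\rho \ge 0$ only improves the estimate compared with Lemma~\ref{LEMMA:00}), the constant $c_{\alpha^+} := e^{-\alpha^+}\max\{\vartheta, \tfrac{\lambda}{m+\lambda+g\rho}\}$ satisfies $c_{\alpha^+} \le e^{\alpha_*^+-\alpha^+} < 1$.

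From here everything is the standard bounded-perturbation-of-an-analytic-generator machinery already used in the paper (cf. the proof of Lemma~\ref{LEMMA:00} and \cite[Proposition~3.1]{FK16b}, \cite{F16a}). For $\mu$ in the right half-plane, $\|P(\mu+\overline{M})^{-1}\| \le c_{\alpha^+}\|\overline{M}(\mu+\overline{M})^{-1}\| \le c_{\alpha^+} < 1$, so $(\mu - \overline{A})^{-1} = (\mu + \overline{M})^{-1}\sum_{n \ge 0}\bigl(P(\mu+\overline{M})^{-1}\bigr)^n$ is a bounded operator on $\Lb_{\alpha^+}$ with $\|\mu(\mu-\overline{A})^{-1}\| \le (1 - c_{\alpha^+})^{-1}$ uniformly in such $\mu$; by the sectoriality criterion $(\overline{A},\overline{\mathcal{D}}_{\alpha^+})$ generates a bounded analytic semigroup $\overline{T}^{\alpha^+}(t)$, and positivity is immediate since $(\mu+\overline{M})^{-1}$ and $P$ are positive, so $(\mu-\overline{A})^{-1} \ge 0$ for $\mu > 0$. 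Contractivity follows from the Lumer--Phillips theorem: the range condition $\operatorname{ran}(\mu-\overline{A}) = \Lb_{\alpha^+}$, $\mu>0$, is exactly the Neumann series, while dissipativity on $\Lb_{\alpha^+}$ follows by pairing $\overline{A}G$ with $\operatorname{sgn}(G)e^{\alpha^+|\cdot|}$, giving $\int_{\Gamma_0^+}\operatorname{sgn}(G)(\overline{A}G)e^{\alpha^+|\eta^+|}\,d\lambda \le -\|\overline{M}G\|_{\Lb_{\alpha^+}} + \|P|G|\|_{\Lb_{\alpha^+}} \le (c_{\alpha^+}-1)\|\overline{M}G\|_{\Lb_{\alpha^+}} \le 0$. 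Finally, $\overline{T}^{\alpha^+}(t)|_{\Lb_{\beta^+}} = \overline{T}^{\beta^+}(t)$ for $\beta^+ > \alpha^+$ because $\overline{M}$ and $P$ are given by the same pointwise formulas on every scale and $c_{\beta^+} < 1$ as well, so the Neumann series represents $(\mu - \overline{A})^{-1}$ by the same expression on $\Lb_{\alpha^+}$ and on $\Lb_{\beta^+}$; hence the resolvents agree on $\Lb_{\beta^+} \subset \Lb_{\alpha^+}$, and consistency of the semigroups follows from the Yosida approximants (or from uniqueness of the Laplace transform).
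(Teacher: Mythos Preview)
Your proof is correct and follows essentially the same approach as the paper, which simply refers back to Lemma~\ref{LEMMA:00}: one writes $\overline{A}=-\overline{M}+P$, establishes the relative bound $\|PG\|_{\Lb_{\alpha^+}}\le c_{\alpha^+}\|\overline{M}G\|_{\Lb_{\alpha^+}}$ with $c_{\alpha^+}<1$ via \eqref{IBP} and \eqref{EQ:07}, and then invokes the standard perturbation machinery of \cite[Proposition~3.1]{FK16b}. Your observation that the extra $g\rho\ge 0$ in $\overline{M}$ only improves the constant compared with Lemma~\ref{LEMMA:00}, and your explicit treatment of positivity and of the consistency $\overline{T}^{\alpha^+}(t)|_{\Lb_{\beta^+}}=\overline{T}^{\beta^+}(t)$ via the common Neumann-series representation of the resolvent, merely spell out details the paper leaves implicit.
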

\begin{proof}
 The assertion follows by similar arguments to Lemma \ref{LEMMA:00}.
\end{proof}
For $G_1 \in \Lb_{\alpha^+}$ let $(G_1 \otimes 0^-)(\eta) = G_1(\eta^+)0^{|\eta^-|}$, $\eta \in \Gamma_0^2$, and define
\[
 \Lb_{\alpha^+}\otimes 0^- := \{ G_1 \otimes 0^-  \ | \ G_1 \in \Lb_{\alpha^+} \}.
\]
This defines an isometric isomorphism
\[
 P_+: \Lb_{\alpha^+} \longrightarrow \Lb_{\alpha^+} \otimes 0^-, \ \ (P_+G_1)(\eta) = G_1(\eta^+)0^{|\eta^-|} = (G_1 \otimes 0^-)(\eta)
\]
with inverse $(P_+^{-1}(G_1 \otimes 0^-))(\eta^+) = G_1(\eta^+)$.
The semigroup $\overline{T}^{\alpha}(t)$ can be naturally extended onto $\Lb_{\alpha^+}\otimes 0^-$
via $\overline{T}^{\alpha^+}(t) \otimes 0^- := P_+\overline{T}^{\alpha}(t)P_+^{-1}$, i.e. for $G_1 \otimes 0^- \in \Lb_{\alpha^+}\otimes 0^-$ set
\[
 (\overline{T}^{\alpha^+}(t) \otimes 0^-)(G_1 \otimes 0^-) := (\overline{T}^{\alpha^+}(t)G_1) \otimes 0^-.
\]
It has generator $(\overline{A} \otimes 0^-, \overline{\mathcal{D}}_{\alpha^+} \otimes 0^-)$, where
$(\overline{A}\otimes 0^-)(G_1 \otimes 0^-) = (\overline{A}G_1) \otimes 0^-$ and 
\[
 \overline{\mathcal{D}}_{\alpha^+} \otimes 0^- = \{ G_1 \otimes 0^- \ | \ G_1 \in \overline{\mathcal{D}}_{\alpha^+} \}.
\]
\begin{Proposition}
 Let $\alpha^+ > \widetilde{\alpha}_*^+$ and $\alpha^- \in [\alpha_*^-, \alpha^{*,-})$. Then for each $G \in \Lb_{\alpha^+}$
 \begin{align}\label{EQ:40}
  T_{\e}^{\alpha}(t)P_+G  \longrightarrow P_+\overline{T}^{\alpha^+}(t)G, \ \ \e \to 0
 \end{align}
 holds in $\Lb_{\alpha}$ uniformly on compacts w.r.t. $t \geq 0$.
\end{Proposition}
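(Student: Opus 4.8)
The operator $C_{\e} = D + \frac{1}{\e}\widehat{L}^E$ with $D := A - gV$ is a singular perturbation in which $\frac{1}{\e}\widehat{L}^E$ is the fast environment dynamics, acting only in the $\eta^-$-variable, and $D$ is the slow system dynamics. The plan is to run the classical averaging scheme, adapted to the two-parameter scale $(\Lb_{\alpha})_{\alpha}$ as in \cite{FK16,FK17}: identify the spectral projection $P$ of $\widehat{L}^E$ onto its kernel, check that its range is exactly $P_+\Lb_{\alpha^+}$, verify that the averaged generator $PDP$ coincides (under this identification) with $\overline{A}$, and then deduce the semigroup convergence from a variation-of-constants formula combined with the exponential ergodicity of the fast flow.

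First I would record the structure of $\widehat{L}^E$. Since it acts only in $\eta^-$, it generates a contraction semigroup on $\Lb_{\alpha^-}$, hence on $\Lb_{\alpha}$ acting only in the environment variable; by the ergodicity of the environment's Glauber dynamics (see the Remark after condition (S) and \cite{FK17}) together with duality, this semigroup converges exponentially fast to the projection
\[
 (PG)(\eta^+, \eta^-) = \mathbbm{1}_{\eta^- = \emptyset}\int \limits_{\Gamma_0^-}G(\eta^+, \xi^-)k_{\mu_{\mathrm{inv}}}(\xi^-)d\lambda(\xi^-).
\]
Thus $\widehat{L}^E P = P\widehat{L}^E = 0$, $\ker \widehat{L}^E = \mathrm{range}(P) = \{ G_1 \otimes 0^- \ | \ G_1 \in \Lb_{\alpha^+}\}$, $PP_+ = P_+$ (because $k_{\mu_{\mathrm{inv}}}(\emptyset) = 1$), and $\widehat{L}^E$ restricted to $\mathrm{range}(I-P) = \mathrm{range}(\widehat{L}^E)$ is boundedly invertible; its inverse again acts only in $\eta^-$ and is bounded on every space of the scale for which the environment contraction estimate of Lemma \ref{LEMMA:00} applies (the relevant condition only constrains $\alpha^- \in [\alpha_*^-, \alpha^{*,-})$, which is fixed throughout). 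Next, a direct computation identifies $PDP$ on $\mathrm{range}(P) \cong \Lb_{\alpha^+}$: the operator $A$ (with $R_0 \equiv 1$) leaves the sector $\eta^- = \emptyset$ invariant and acts there as $\overline{A} + g\rho|\eta^+|\,\cdot$, because $M(\eta^+,\emptyset) = (m+\lambda)|\eta^+| + \sum_{x\in\eta^+}\sum_{y\in\eta^+\backslash x}a^-(x-y)$; on the other hand $-gV$ maps the sector $\eta^- = \emptyset$ into $|\eta^-| = 1$, and applying $P$ integrates $\sum_{x\in\eta^+}b^-(x-y)$ against the one-point correlation function $k_{\mu_{\mathrm{inv}}}^{(1)} \equiv \rho$ in $y$, producing exactly $-g\rho|\eta^+|$ since $\|b^-\|_{L^1} = 1$. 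Hence $PDP = \overline{A}$, which (via \eqref{EQ:47}, ensuring $\alpha^+ > \alpha_*^+$) generates $\overline{T}^{\alpha^+}$ by the preceding Lemma.

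With $u_t^{\e} := T_{\e}^{\alpha}(t)P_+G$ and $\overline{u}_t := P_+\overline{T}^{\alpha^+}(t)G$, one has $\widehat{L}^E\overline{u}_t = 0$, $P\overline{u}_t = \overline{u}_t$ and $\frac{d}{dt}\overline{u}_t = PDP\overline{u}_t$, so $\frac{d}{dt}(u_t^{\e} - \overline{u}_t) = C_{\e}(u_t^{\e} - \overline{u}_t) + (I-P)D\overline{u}_t$ with vanishing initial datum, whence
\[
 u_t^{\e} - \overline{u}_t = \int \limits_0^t T_{\e}^{\alpha}(t-s)(I-P)D\overline{u}_s\,ds.
\]
Since $(I-P)D\overline{u}_s \in \mathrm{range}(I-P) = \mathrm{range}(\widehat{L}^E)$, I would introduce the corrector $\phi_s := (\widehat{L}^E)^{-1}(I-P)D\overline{u}_s$, write $(I-P)D\overline{u}_s = \widehat{L}^E\phi_s = \e(C_{\e} - D)\phi_s$, insert this, and integrate by parts in $s$ using $\frac{d}{dr}T_{\e}^{\alpha}(r)\phi = T_{\e}^{\alpha}(r)C_{\e}\phi$, to obtain
\[
 u_t^{\e} - \overline{u}_t = \e\left( T_{\e}^{\alpha}(t)\phi_0 - \phi_t + \int \limits_0^t T_{\e}^{\alpha}(t-s)\dot\phi_s\,ds - \int \limits_0^t T_{\e}^{\alpha}(t-s)D\phi_s\,ds\right),
\]
where $\dot\phi_s = (\widehat{L}^E)^{-1}(I-P)D\overline{A}\,\overline{u}_s$. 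As $T_{\e}^{\alpha}$ is a contraction semigroup, the right-hand side is $O(\e)$ uniformly for $t$ in compacts once $\phi_s$, $\dot\phi_s$ and $D\phi_s$ are bounded, which yields \eqref{EQ:40}.

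The genuine obstacle is the unboundedness of $A$, $V$ and hence $D$ on a single $\Lb_{\alpha}$: they shift $|\eta^+|$-mass, so the objects $\phi_s$, $D\phi_s$, $\dot\phi_s$ and the evolved data $\overline{T}^{\alpha^+}(t)G$ all live on progressively larger spaces of the scale. For a fixed horizon $[0,T]$ I would therefore fix a finite chain $\alpha^+ < \alpha_1^+ < \cdots < \alpha_N^+$ absorbing all the index shifts caused by $D$, by $D\overline{A}$ and by $\overline{T}^{\alpha^+}(t)$ for $t \le T$, carry the Duhamel computation out along this chain using the bounds \eqref{EQ:20} and the analyticity/contractivity of the relevant semigroups, and observe that $(\widehat{L}^E)^{-1}(I-P)$ is bounded uniformly along the chain since $\widehat{L}^E$ involves only $\eta^-$. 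Finally, when $G$ is not regular enough for $\dot\phi_s$ to be defined, one first proves \eqref{EQ:40} for $G$ in a dense subset such as $B_{bs}(\Gamma_0^+)$ and extends by the uniform contractivity of $T_{\e}^{\alpha}$ and $\overline{T}^{\alpha^+}$. This Banach-scale bookkeeping, not the averaging mechanism itself, is where the real work lies.
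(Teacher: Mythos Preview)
Your approach is correct in spirit and yields even more than is claimed (an explicit rate $O(\e)$), but it differs genuinely from what the paper does. The paper does not run the corrector method by hand: it simply verifies the hypotheses of Kurtz's singular perturbation theorem \cite[Theorem 2.1]{KURTZ73}. Concretely, the paper checks that (i) $D = A - gV$ generates an analytic contraction semigroup on $\Lb_{\alpha}$ with domain $\widetilde{\mathcal{D}}_{\alpha}$, (ii) $\widehat{L}^E$ generates an analytic contraction semigroup that is exponentially ergodic with projection $P$ (this is taken from \cite{FK17}), and (iii) $\widetilde{\mathcal{D}}_{\alpha} \cap D_{\alpha}(\widehat{L}^E) = \mathcal{D}_{\alpha}$ is a core for $\widehat{L}^E$; then it computes $P(A - gV)P_+ = P_+\overline{A}$ exactly as you do, and invokes Kurtz. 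That is the whole proof.

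Your route --- Duhamel plus the corrector $\phi_s = (\widehat{L}^E)^{-1}(I-P)D\overline{u}_s$ and integration by parts --- is essentially the analytic content behind Kurtz's theorem, carried out explicitly. The identifications you make (the range of $P$, $PDP_+ = P_+\overline{A}$ via the $\pm g\rho|\eta^+|$ cancellation using $\|b^-\|_{L^1} = 1$) are the same as the paper's and are correct. What you gain is an explicit convergence rate; what you pay is exactly the ``Banach-scale bookkeeping'' you flag at the end. One caution there: your integration-by-parts step requires $\phi_s \in D(C_{\e}) = \mathcal{D}_{\alpha}$, hence in particular $M\cdot\phi_s \in \Lb_{\alpha}$; since $(\widehat{L}^E)^{-1}(I-P)$ acts only in $\eta^-$, this reduces to $M\cdot(I-P)D\overline{u}_s \in \Lb_{\alpha}$, which for general $G \in \Lb_{\alpha^+}$ forces you through either the analytic regularisation of $\overline{T}^{\alpha^+}$ or the density argument you sketch. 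This can be made rigorous, but it is precisely the kind of detail that the paper sidesteps by quoting Kurtz.
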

\begin{proof}
  Observe that the following holds.
 \begin{enumerate}
  \item[(i)] Similar arguments as given in the proof of Lemma \ref{LEMMA:00} show that   
  $(A - gV, \widetilde{\mathcal{D}}_{\alpha})$ is the generator of an analytic semigroup of contractions where
  $\widetilde{\mathcal{D}}_{\alpha} = \{ G \in \Lb_{\alpha} \ |\ M \cdot G \in \Lb_{\alpha} \}$.
  \item[(ii)] By \cite[Lemma 6, Proposition 6]{FK17} it follows that   
  $(\widehat{L}^E, D_{\alpha}(\widehat{L}^E))$ is the generator of an analytic semigroup $(\widehat{T}^E_{\alpha}(t))_{t \geq 0}$ 
  of contractions on $\Lb_{\alpha}$ where $D_{\alpha}(\widehat{L}^E) = \{ G \in \Lb_{\alpha} \ | \ |\eta^-| G \in \Lb_{\alpha} \}$.
  
  Let $k_{\mathrm{inv}} \in \K_{\alpha_*^-}$ be the correlation function of the unique Gibbs measure with activity $z$ and potential $\psi$.
  Then $\widehat{T}^E_{\alpha}(t)$ is ergodic with projection operator
  \[
   (PG)(\eta) = 0^{|\eta^-|} \int \limits_{\Gamma_0^-}G(\eta^+,\xi^-)k_{\mathrm{inv}}(\xi^-)d \lambda(\xi^-),
  \]
  i.e. there exist constants $c_0,c_1 > 0$ such that for any $G \in \Lb_{\alpha}$
  \[
   \Vert \widehat{T}_{\alpha}^{E}(t)G - PG \Vert_{\Lb_{\alpha}} \leq c_0 e^{-c_1 t} \Vert G - PG \Vert_{\Lb_{\alpha}}, \ \ t \geq 0.
  \]
  \item[(iii)] By \cite{F17WR} it follows that $B_{bs}(\Gamma_0^2)$ is a core for $\widehat{L}^E$. 
  Since $B_{bs}(\Gamma_0^2) \subset \widetilde{\mathcal{D}}_{\alpha} \cap D_{\alpha}(\widehat{L}^E)$, we see 
  that $\widetilde{\mathcal{D}}_{\alpha} \cap D_{\alpha}(\widehat{L}^E) = \mathcal{D}_{\alpha}$ is a core for the generator $\widehat{L}^E$.
 \end{enumerate}
 It is easily seen that $\widetilde{\mathcal{D}}_{\alpha}\cap (\Lb_{\alpha^+}\otimes 0^-) = \overline{\mathcal{D}}_{\alpha^+} \otimes 0^-$
 and $P(A-gV)G = (\overline{A}G_1) \otimes 0^-$ holds for $G = G_1 \otimes 0^- \in \overline{\mathcal{D}}_{\alpha^+} \otimes 0^-$.
 Hence $(P(A - gV), \widetilde{\mathcal{D}}_{\alpha} \cap (\Lb_{\alpha^+}\otimes 0^-))$ is 
 the generator of $\overline{T}^{\alpha}(t) \otimes 0^-$. The assertion now follows from \cite[Theorem 2.1]{KURTZ73}.
\end{proof}

\subsection{Convergence of quasi-observables}
Using the relation $\widehat{L}_{\e} = C_{\e} + B'$ we construct the evolution of quasi-observables similarly to Proposition \ref{PROP:00}. 
Namely, observe that
\[
 \Vert B'G \Vert_{\Lb_{\alpha}} \leq \frac{e^{\alpha^+} \| a^- \|_{L^1} + \lambda + \| a^+ \|_{L^1}}{e(\beta^+ - \alpha^+)} \Vert G \Vert_{\Lb_{\beta}}
\]
holds for any $\alpha^+ < \beta^+$ and $\alpha^- \leq \beta^-$. Let 
\begin{align}\label{EQ:39}
 T_0(\alpha,\beta) = \frac{\beta^+ - \alpha^+}{e^{\beta^+}\| a^- \|_{L^1} + \lambda + \| a^+ \|_{L^1}}
\end{align}
for such $(\alpha,\beta)$. Then $T(\alpha,\beta) \leq T_0(\alpha,\beta)$ holds for all admissible pairs $(\alpha,\beta)$.
\begin{Proposition}
 There exists a family of bounded linear operators
 \[
  \left\{ \widehat{V}^{\beta,\alpha}_{\e}(t) \in L(\Lb_{\beta}, \Lb_{\alpha}) \ | \ 0 \leq t < T_0(\alpha,\beta), \ (\alpha,\beta) \text{ admissible pair } \right\}
 \]
 with $\Vert \widehat{V}^{\beta,\alpha}_{\e}(t) \Vert_{L(\Lb_{\beta}, \Lb_{\alpha})} \leq \frac{T_0(\alpha,\beta)}{T_0(\alpha,\beta) - t}$, 
 $0 \leq t < T_0(\alpha,\beta)$ such that
 \begin{enumerate}
  \item[(a)] For any admissible pair $(\alpha,\beta)$ with $\alpha^- < \beta^-$ and $G \in \Lb_{\beta}$, 
  $G_t := \widehat{V}^{\beta,\alpha}_{\e}(t)G$ is the unique classical solution in $\Lb_{\alpha}$ to
  \begin{align*}
   \frac{\partial G_t}{\partial t} = \widehat{L}_{\e}G_t, \ \ G_t|_{t = 0}, \ \ t \in [0, T_0(\alpha,\beta)).
  \end{align*}
  \item[(b)] Given two admissible pairs $(\alpha_0, \alpha)$ and $(\alpha,\beta)$, we have 
  \begin{align*}
   \widehat{V}^{\alpha,\alpha_0}_{\e}(t)G = \widehat{V}^{\beta, \alpha}_{\e}(t)G = \widehat{V}_{\e}^{\beta, \alpha_0}(t)G
  \end{align*}
  for any $G \in \Lb_{\beta}$ and $0 \leq t < \min\{ T_0(\alpha,\beta), T_0(\alpha_0, \alpha), T_0(\alpha_0, \beta)\}$.
 \end{enumerate}
\end{Proposition}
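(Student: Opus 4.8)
The plan is to argue exactly as in Proposition \ref{PROP:00}, now using the splitting $\widehat{L}_{\e} = C_{\e} + B'$ into the generator of a consistent family of analytic contraction semigroups plus a perturbation that is bounded on the scale $(\Lb_{\alpha})$ with the correct rate of blow-up. Both inputs are already at hand. By the preceding Lemma, for $\alpha^- \in [\alpha_*^-, \alpha^{*,-})$ and $\alpha^+ > \widetilde{\alpha}_*^+$ the operator $(C_{\e}, \mathcal{D}_{\alpha})$ generates an analytic semigroup $(T_{\e}^{\alpha}(t))_{t \geq 0}$ of contractions on $\Lb_{\alpha}$, and these semigroups are consistent: $T_{\e}^{\alpha}(t)|_{\Lb_{\beta}} = T_{\e}^{\beta}(t)$ whenever $\alpha^{\pm} < \beta^{\pm}$ and $\beta^- < \alpha^{*,-}$. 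The displayed bound on $B'$, combined with $e^{\alpha^+} \leq e^{\beta^+}$ and the definition \eqref{EQ:39}, reads $\Vert B'G \Vert_{\Lb_{\alpha}} \leq \frac{1}{e\,T_0(\alpha,\beta)} \Vert G \Vert_{\Lb_{\beta}}$ for all $\alpha^+ < \beta^+$, $\alpha^- \leq \beta^-$. This is precisely the abstract situation treated in \cite{F16a}.

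First I would invoke \cite{F16a} with these two inputs. It produces, for every admissible pair $(\alpha,\beta)$, a bounded operator $\widehat{V}^{\beta,\alpha}_{\e}(t) \in L(\Lb_{\beta}, \Lb_{\alpha})$, $0 \leq t < T_0(\alpha,\beta)$, obeying $\Vert \widehat{V}^{\beta,\alpha}_{\e}(t) \Vert_{L(\Lb_{\beta}, \Lb_{\alpha})} \leq \frac{T_0(\alpha,\beta)}{T_0(\alpha,\beta) - t}$; concretely $\widehat{V}^{\beta,\alpha}_{\e}(t)$ is the sum of the Duhamel series
\[
 \widehat{V}^{\beta,\alpha}_{\e}(t)G = \sum \limits_{n = 0}^{\infty} \ \int \limits_{0 \leq s_n \leq \dots \leq s_1 \leq t} T_{\e}(t - s_1) B' T_{\e}(s_1 - s_2) \cdots B' T_{\e}(s_n) G \, ds_1 \cdots ds_n,
\]
which converges in $L(\Lb_{\beta}, \Lb_{\alpha})$ for $t < T_0(\alpha,\beta)$ by telescoping through a chain $\alpha^+ = \gamma_0^+ < \gamma_1^+ < \dots < \gamma_n^+ = \beta^+$ and bounding each factor via the two inputs above. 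When $\alpha^- < \beta^-$ there is room in the scale to differentiate the series term by term, whence $G_t := \widehat{V}^{\beta,\alpha}_{\e}(t)G$ is the unique classical solution in $\Lb_{\alpha}$ of $\partial_t G_t = \widehat{L}_{\e} G_t$, $G_t|_{t=0} = G$, which is (a); the flow and consistency identities (b) then follow from the semigroup property of $T_{\e}$ together with the uniqueness of the series representation, exactly as \eqref{EQ:10} was obtained in Proposition \ref{PROP:00}.

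I do not expect a genuine obstacle here: every hypothesis of \cite{F16a} has been checked above. Two points deserve a line of care. First, in $B' = B + gV$ the two terms proportional to $g$ cancel, so $B'$ carries no $g$-dependent contribution and its $\Lb_{\beta} \to \Lb_{\alpha}$ bound has denominator $e^{\beta^+}\Vert a^- \Vert_{L^1} + \lambda + \Vert a^+ \Vert_{L^1}$, which is exactly what is built into \eqref{EQ:39}; in particular $T(\alpha,\beta) \leq T_0(\alpha,\beta)$, so the existence time here is at least as large as the one in Proposition \ref{PROP:00}. Second, the compensating term $-gV$ has been absorbed into the generator $C_{\e}$, and the contraction property of $C_{\e}$ across the whole scale — which is what legitimizes this splitting — was established in the preceding Lemma precisely under the hypothesis \eqref{EQ:47}. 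With these remarks the family $\widehat{V}^{\beta,\alpha}_{\e}(t)$ and the properties (a), (b) follow verbatim.
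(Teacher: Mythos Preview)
Your proposal is correct and matches the paper's approach exactly: the paper states this proposition without proof, having already observed the bound on $B'$ and written ``Using the relation $\widehat{L}_{\e} = C_{\e} + B'$ we construct the evolution of quasi-observables similarly to Proposition~\ref{PROP:00}.'' You have simply spelled out what that sentence means, invoking \cite{F16a} with the contraction semigroup $T_{\e}^{\alpha}(t)$ for $C_{\e}$ and the scale estimate for $B'$, and your two clarifying remarks (the cancellation of the $g$-terms in $B'$ and the role of \eqref{EQ:47} for the contraction property of $C_{\e}$) are accurate.
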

As before, we omit the additional dependence on $(\alpha,\beta)$.
\begin{Lemma}\label{LEMMA:01}
 Let $(\alpha,\beta)$ be an admissible pair with $\alpha^- < \beta^-$.
 Then $\widehat{V}^{\beta,\alpha}_{\e}(t) = \widehat{U}^{\beta,\alpha}_{\e}(t)$ holds for all $0 \leq t < T(\alpha,\beta)$.
\end{Lemma}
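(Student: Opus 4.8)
The plan is to recognise $\widehat{V}^{\beta,\alpha}_{\e}(t)$ and $\widehat{U}^{\beta,\alpha}_{\e}(t) = \widehat{U}^{\beta,\alpha}_{0,\e}(t)$ as solution families for one and the same abstract Cauchy problem on a common time interval, and then to conclude by uniqueness. Both families are produced by the construction of \cite{F16a} on the scale $(\Lb_{\gamma})_{\gamma \in \R^2}$, but starting from two different splittings of the \emph{same} operator $\widehat{L}_{\e}$: for $\widehat{U}_{\e}$ one uses $\widehat{L}_{\e} = \bigl( A + \frac{1}{\e}\widehat{L}^E \bigr) + B$ with $A + \frac{1}{\e}\widehat{L}^E$ the generator of the analytic contraction semigroup from Lemma \ref{LEMMA:00} and $B$ bounded from $\Lb_{\beta}$ to $\Lb_{\alpha}$, while for $\widehat{V}_{\e}$ one uses $\widehat{L}_{\e} = C_{\e} + B'$ with $C_{\e}$ the generator of $T_{\e}^{\alpha}(t)$ and $B'$ again bounded from $\Lb_{\beta}$ to $\Lb_{\alpha}$.

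First I would fix an admissible pair $(\alpha,\beta)$ with $\alpha^- < \beta^-$ and $G \in \Lb_{\beta}$. By Proposition \ref{PROP:00}.(a), $t \longmapsto \widehat{U}^{\beta,\alpha}_{\e}(t)G$ is the unique classical solution in $\Lb_{\alpha}$ of $\partial_t G_t = \widehat{L}_{\e}G_t$ with $G_0 = G$ on $[0,T(\alpha,\beta))$. On the other hand, part (a) of the Proposition defining $\widehat{V}_{\e}$ shows that $t \longmapsto \widehat{V}^{\beta,\alpha}_{\e}(t)G$ is a classical solution in $\Lb_{\alpha}$ of the very same Cauchy problem on the a priori larger interval $[0,T_0(\alpha,\beta))$; since $T(\alpha,\beta) \leq T_0(\alpha,\beta)$, its restriction to $[0,T(\alpha,\beta))$ remains a classical solution there. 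Hence the uniqueness clause in Proposition \ref{PROP:00}.(a) forces $\widehat{V}^{\beta,\alpha}_{\e}(t)G = \widehat{U}^{\beta,\alpha}_{\e}(t)G$ for all $t \in [0,T(\alpha,\beta))$, and since $G \in \Lb_{\beta}$ was arbitrary and both maps lie in $L(\Lb_{\beta},\Lb_{\alpha})$, the asserted operator identity follows.

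The main --- and rather mild --- point to verify is that the two constructions attach the \emph{same} meaning to ``classical solution in $\Lb_{\alpha}$'', so that the solution family for $\widehat{V}_{\e}$, after restriction to $[0,T(\alpha,\beta))$, is admissible input for the uniqueness statement governing $\widehat{U}_{\e}$. This is immediate, since both evolution families arise from the identical abstract scheme of \cite{F16a} applied to the one operator $\widehat{L}_{\e}$ on the fixed scale $(\Lb_{\gamma})_{\gamma}$: the regularity demanded of a solution (continuity into $\Lb_{\alpha}$, membership in the intermediate spaces on the respective subintervals, and the identity $\partial_t G_t = \widehat{L}_{\e}G_t$ in $\Lb_{\alpha}$) is formulated once and coincides in Proposition \ref{PROP:00} and in the Proposition for $\widehat{V}_{\e}$. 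Because $\alpha^- < \beta^-$ is assumed throughout, no $\delta \to 0$ regularisation enters at this point (both $\widehat{U}_{\e}$ and $\widehat{V}_{\e}$ are already the $\delta = 0$ objects), so nothing beyond the uniqueness argument is required.
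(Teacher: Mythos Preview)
Your proof is correct and follows essentially the same approach as the paper: the paper's proof simply observes that $G_t := \widehat{V}^{\beta,\alpha}_{\e}(t)G$ is a classical solution to \eqref{EQ:08} when restricted to $[0,T(\alpha,\beta))$ and concludes by uniqueness. Your version is more explicit about the inequality $T(\alpha,\beta) \leq T_0(\alpha,\beta)$ and about why the two notions of classical solution coincide, but the core argument is identical.
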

\begin{proof}
 Let $G \in \Lb_{\beta}$. Then $G_t := \widehat{V}^{\beta,\alpha}_{\e}(t)$ is a classical solution to \eqref{EQ:08}
 when restricted to $[0, T(\alpha,\beta))$. The assertion follows by uniqueness.
\end{proof}
Below we construct the limiting evolution of quasi-observables. For this purpose observe that
\[
 \Vert \overline{B}G \Vert_{\Lb_{\alpha^+}} \leq \frac{e^{\alpha^+} \| a^- \|_{L^1} + \lambda + \| a^+ \|_{L^1}}{e(\beta^+ - \alpha^+)} \Vert G \Vert_{\Lb_{\beta^+}}
\]
holds for all $\alpha^+ < \beta^+$. By abuse of notation we let $T_0(\alpha^+, \beta^+) = T_0(\alpha,\beta)$,
since $T_0$ given by \eqref{EQ:39} is independent of $\alpha^-,\beta^-$.
\begin{Proposition}
 There exists a family of bounded linear operators
 \[
  \left\{ \overline{U}^{\beta^+,\alpha^+}(t) \in L(\Lb_{\beta^+}, \Lb_{\alpha^+}) \ | \ 0 \leq t < T_0(\alpha^+,\beta^+), \ \beta^+ > \alpha^+ > \alpha_*^+ \right\}
 \]
 with $\Vert \overline{U}^{\beta^+,\alpha^+}(t) \Vert_{L(\Lb_{\beta^+}, \Lb_{\alpha^+})} \leq \frac{T_0(\alpha^+,\beta^+)}{T_0(\alpha^+,\beta^+) - t}, \ \ 0 \leq t < T_0(\alpha^+,\beta^+)$
 such that
 \begin{enumerate}
  \item[(a)] For any $\beta^+ > \alpha^+ > \alpha_*^+$ and any $G \in \Lb_{\beta^+}$, 
  $G_t := \overline{U}^{\beta^+,\alpha^+}(t)G$ is the unique classical solution in $\Lb_{\alpha^+}$ to
  \begin{align*}
   \frac{\partial G_t}{\partial t} = \widehat{\overline{L}}G_t, \ \ G_t|_{t = 0}, \ \ t \in [0, T_0(\alpha^+,\beta^+)).
  \end{align*}
  \item[(b)] Given $\alpha_*^+ < \alpha_0^+ < \alpha^+ < \beta^+$, we have 
  \begin{align*}
   \overline{U}^{\alpha^+,\alpha_0^+}(t)G = \overline{U}^{\beta^+, \alpha^+}(t)G = \overline{U}^{\beta^+, \alpha_0^+}(t)G
  \end{align*}
  for any $G \in \Lb_{\beta^+}$ and $0 \leq t < \min\{ T_0(\alpha^+,\beta^+), T_0(\alpha_0^+, \alpha^+), T_0(\alpha_0^+, \beta^+)\}$.
 \end{enumerate}
\end{Proposition}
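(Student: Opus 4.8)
The plan is to obtain the family $\overline{U}^{\beta^+,\alpha^+}(t)$ exactly as in the proof of Proposition \ref{PROP:00}, now on the one‑component scale $(\Lb_{\alpha^+})_{\alpha^+ \in \R}$, by perturbing the analytic semigroup generated by $\overline{A}$ with the scale‑bounded operator $\overline{B}$. First I would record the two structural ingredients. By the preceding lemma, for every $\alpha^+ > \alpha_*^+$ the operator $(\overline{A}, \overline{\mathcal{D}}_{\alpha^+})$ is the generator of an analytic, positive semigroup $\overline{T}^{\alpha^+}(t)$ of contractions on $\Lb_{\alpha^+}$, and these semigroups are consistent, $\overline{T}^{\alpha^+}(t)|_{\Lb_{\beta^+}} = \overline{T}^{\beta^+}(t)$ for $\beta^+ > \alpha^+$. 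Secondly, the estimate
\[
 \Vert \overline{B}G \Vert_{\Lb_{\alpha^+}} \leq \frac{e^{\alpha^+}\| a^- \|_{L^1} + \lambda + \| a^+ \|_{L^1}}{e(\beta^+ - \alpha^+)} \Vert G \Vert_{\Lb_{\beta^+}}
\]
recorded above shows $\overline{B} \in L(\Lb_{\beta^+}, \Lb_{\alpha^+})$ for all $\alpha^+ < \beta^+$ with a perturbation bound of the homogeneity $\mathrm{const}/(e(\beta^+-\alpha^+))$; bounding $e^{\alpha^+}$ by $e^{\beta^+}$ in the numerator makes the constant agree with the one defining $T_0(\alpha^+,\beta^+)$ in \eqref{EQ:39}.

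With these two facts in hand I would invoke the abstract construction of \cite{F16a} — the very result already used for $\widehat{U}^{\beta,\alpha}_{\delta,\e}(t)$ in Proposition \ref{PROP:00} and for $\widehat{V}^{\beta,\alpha}_{\e}(t)$ — applied to $A = \overline{A}$ and $B = \overline{B}$. This directly produces the family $\{ \overline{U}^{\beta^+,\alpha^+}(t) \}$ together with the contraction‑type estimate $\Vert \overline{U}^{\beta^+,\alpha^+}(t) \Vert_{L(\Lb_{\beta^+}, \Lb_{\alpha^+})} \leq T_0(\alpha^+,\beta^+)/(T_0(\alpha^+,\beta^+) - t)$, assertion (a) — namely that $G_t := \overline{U}^{\beta^+,\alpha^+}(t)G$ is the unique classical solution in $\Lb_{\alpha^+}$ of $\partial_t G_t = \widehat{\overline{L}}G_t$ on $[0, T_0(\alpha^+,\beta^+))$ — and the consistency property (b), since the abstract theorem yields a single evolution on the scale whose restrictions to smaller spaces coincide; uniqueness on each space is part of the same statement.

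I do not expect a genuine obstacle: the argument is the one‑component mirror of Proposition \ref{PROP:00}, and the only points requiring care are already settled, namely that $\overline{A}$ generates a \emph{consistent analytic} contraction semigroup across the whole scale (so that the holomorphic functional calculus underlying \cite{F16a} applies uniformly), and that the perturbation bound has exactly the scaling $1/(\beta^+-\alpha^+)$ in the scale parameter. Should one prefer a self‑contained argument rather than citing \cite{F16a}, the route is the standard Ovsyannikov / abstract Cauchy--Kovalevskaya iteration: define $\overline{U}^{\beta^+,\alpha^+}(t)$ by the Dyson series $\sum_{n\ge 0}\int_{\Delta_n(t)} \overline{T}(t-s_1)\,\overline{B}\,\overline{T}(s_1-s_2)\,\overline{B}\cdots \overline{B}\,\overline{T}(s_n)\,ds$, estimate the $n$‑th term along an intermediate chain $\alpha^+ = \gamma_0^+ < \gamma_1^+ < \dots < \gamma_n^+ = \beta^+$ using the displayed bound and the contractivity of $\overline{T}$, optimise over the partition, and sum the resulting geometric‑type series to recover the factor $T_0/(T_0-t)$; differentiability in $t$ and the consistency relation then follow term by term.
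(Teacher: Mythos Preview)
Your proposal is correct and matches the paper's approach: the proposition is stated without proof in the paper precisely because it is the one-component analogue of Proposition~\ref{PROP:00}, obtained by feeding the contraction semigroup generated by $\overline{A}$ (from the preceding lemma) and the scale bound on $\overline{B}$ into the abstract result of \cite{F16a}. Your optional Dyson-series alternative is also consistent with how the paper later unpacks $\overline{U}^{\beta^+,\alpha^+}(t)$ as $\overline{T}(t) + \sum_{n\ge 1}\overline{H}_n(t)$ in the proof of the convergence proposition.
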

Here and below we omit the dependence on $\alpha$ for $T_{\e}^{\alpha}(t)$, if no confusion may arise.
The next proposition establishes a local stochastic averaging principle.
\begin{Proposition}
 Let $(\alpha,\beta)$ be an admissible pair with $\alpha^- < \beta^-$, $G \in \Lb_{\beta^+}$ and $T \in (0, T(\alpha,\beta))$.
 Then
 \begin{align}\label{EQ:45}
  \widehat{U}_{\e}^{\beta,\alpha}(t)P_+G \longrightarrow P_+\overline{U}^{\beta^+,\alpha^+}(t)G, \ \ \e \to 0
 \end{align}
 holds uniformly on $[0,T]$ in $\Lb_{\alpha}$
\end{Proposition}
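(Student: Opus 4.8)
The plan is to deduce \eqref{EQ:45} from the averaging principle \eqref{EQ:40} for the free semigroups together with the ergodic decay of $\widehat{T}^E$. By Lemma~\ref{LEMMA:01} it suffices to prove the convergence with $\widehat{U}_\e^{\beta,\alpha}$ replaced by $\widehat{V}_\e^{\beta,\alpha}$. Write $G_t^\e := \widehat{V}_\e^{\beta,\alpha}(t)P_+G$ and $H_t := P_+\overline{U}^{\beta^+,\alpha^+}(t)G \in \Lb_{\alpha^+}\otimes 0^- \subset \Lb_{\alpha}$. Since $C_\e$ generates the contraction semigroup $T_\e(t)$ and $\overline{A}$ generates $\overline{T}^{\alpha^+}(t)$, and since $B'$ and $\overline{B}$ are bounded in the respective scales, variation of constants yields, in suitable spaces of the scale,
\[
 G_t^\e = T_\e(t)P_+G + \int_0^t T_\e(t-s)B'G_s^\e\,ds,
\]
\[
 H_t = \bigl(\overline{T}^{\alpha^+}(t)\otimes 0^-\bigr)P_+G + \int_0^t \bigl(\overline{T}^{\alpha^+}(t-s)\otimes 0^-\bigr)\,PB'H_s\,ds,
\]
where the second identity uses the structural facts $\widehat{L}^E(G_1\otimes 0^-) = 0$, $P(A-gV)(G_1\otimes 0^-) = (\overline{A}G_1)\otimes 0^-$ and $PB'(G_1\otimes 0^-) = (\overline{B}G_1)\otimes 0^-$, all verified by direct computation (here the normalisation $\|b^-\|_{L^1} = 1$ and $k_{\mathrm{inv}}(\{y\}) = \rho$ produce precisely the competition term $g\rho|\eta^+|$ appearing in $\overline{M}$).

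Subtracting the two identities and setting $\Delta_t := G_t^\e - H_t$, the integrand splits as $T_\e(t-s)B'\Delta_s + \mathcal{E}_\e(t,s)$ with $\mathcal{E}_\e(t,s) := T_\e(t-s)B'H_s - \bigl(\overline{T}^{\alpha^+}(t-s)\otimes 0^-\bigr)PB'H_s$. The contribution $T_\e(t-s)B'\Delta_s$ is absorbed by the usual Ovsyannikov/Gronwall argument in the scale of Banach spaces — the same bookkeeping that underlies \cite{F16a} and the construction of $\widehat{V}_\e$ — which closes on $[0,T]$ for every $T < T(\alpha,\beta)$. Hence $\sup_{t\le T}\|\Delta_t\|_{\Lb_\alpha}$ is controlled by $\sup_{t\le T}\|T_\e(t)P_+G - \bigl(\overline{T}^{\alpha^+}(t)\otimes 0^-\bigr)P_+G\|$ and $\sup_{t\le T}\int_0^t\|\mathcal{E}_\e(t,s)\|\,ds$, measured in suitable spaces of the scale. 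The first quantity tends to $0$ as $\e\to 0$ directly by \eqref{EQ:40}.

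For the second, decompose $B'H_s = PB'H_s + (I-P)B'H_s$, so that $\mathcal{E}_\e(t,s) = \bigl[T_\e(t-s) - \overline{T}^{\alpha^+}(t-s)\otimes 0^-\bigr]PB'H_s + T_\e(t-s)(I-P)B'H_s$. The curve $s\mapsto H_s$ is continuous (indeed $C^1$), so $\{H_s : 0\le s\le T\}$ is compact, and hence so are $\{PB'H_s\}$ and $\{(I-P)B'H_s\}$ in the relevant scale spaces; combined with the uniform contractivity of $T_\e(\cdot)$ and the locally uniform convergence in \eqref{EQ:40} (pointwise convergence of uniformly bounded operators is uniform on compacta), the first bracket tends to $0$ uniformly in $s$ and in $t-s\in[0,T]$. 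For the second summand, $P(I-P)B'H_s = 0$, so the argument behind \eqref{EQ:40}, namely \cite[Theorem 2.1]{KURTZ73} applied to an arbitrary element of $\Lb_\alpha$ rather than only to elements of $P_+\Lb_{\alpha^+}$, gives $T_\e(r)(I-P)B'H_s \to 0$ as $\e\to 0$ for each fixed $r>0$, uniformly over $s\in[0,T]$, while $\|T_\e(r)(I-P)B'H_s\| \le \|(I-P)B'H_s\|$ is bounded uniformly in $r,s$; dominated convergence in the variable $r = t-s$ then yields $\sup_{t\le T}\int_0^t\|\mathcal{E}_\e(t,s)\|\,ds \to 0$. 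Together with the Gronwall estimate this gives $\sup_{t\in[0,T]}\|G_t^\e - H_t\|_{\Lb_\alpha} \to 0$, and Lemma~\ref{LEMMA:01} turns this into \eqref{EQ:45}.

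The main obstacle is the boundary layer at $r = t-s \downarrow 0$: there $T_\e(r)(I-P)B'H_s$ merely stays bounded and does not converge to $0$, so one genuinely needs the integration in $r$ and dominated convergence rather than a uniform estimate; in parallel one must thread the scale parameters $\alpha^+ < \cdots < \beta^+$ carefully so that every operator appearing above acts between admissible spaces of the scale and the Ovsyannikov/Gronwall inequality still closes on the full interval $[0,T]$.
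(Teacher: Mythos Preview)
Your argument works, but you have missed the key algebraic simplification that the decomposition $\widehat{L}_\e = C_\e + B'$ was designed to achieve: the $b^-$--term in $B$ and the term $gV$ cancel \emph{identically}, so $B'$ acts on the variable $\eta^+$ only. In particular $B'(G_1\otimes 0^-) = (\overline{B}G_1)\otimes 0^-$ holds without applying the projection $P$, i.e.\ $B'P_+ = P_+\overline{B}$. Since $H_s \in \Lb_{\alpha^+}\otimes 0^-$ for every $s$, this gives $(I-P)B'H_s \equiv 0$, and your ``main obstacle'' --- the boundary layer at $r\downarrow 0$, the extended use of Kurtz's theorem on $\ker P$, and the dominated-convergence step --- evaporates. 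What remains of $\mathcal{E}_\e(t,s)$ is simply $[T_\e(t-s)-\overline{T}(t-s)\otimes 0^-]B'H_s$, which tends to zero by \eqref{EQ:40} uniformly over the compact set $\{B'H_s:0\le s\le T\}$.

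With this correction the two proofs differ mainly in packaging. The paper expands both $\widehat{V}_\e$ and $\overline{U}$ as Dyson series $\sum_n H_n^\e(t)$, $\sum_n \overline{H}_n(t)$, bounds the tails uniformly in $\e$ by the geometric estimate $(t/T_0(\alpha^+,\beta^+))^n$, and proves $H_n^\e(t)P_+G \to P_+\overline{H}_n(t)G$ term by term via an explicit induction that uses $B'P_+ = P_+\overline{B}$ at every step. This makes the scale bookkeeping completely transparent, whereas your ``Ovsyannikov/Gronwall'' absorption of $\int_0^t T_\e(t-s)B'\Delta_s\,ds$ is only a sketch: $B'\Delta_s$ does not lie in $\Lb_\alpha$ unless $\Delta_s$ sits in a strictly smaller space of the scale, so one cannot run Gronwall in a fixed $\Lb_\alpha$ but must iterate --- and carrying that iteration out reproduces the paper's series argument. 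Your route is therefore correct but carries avoidable baggage; recognising $B'P_+ = P_+\overline{B}$ at the outset brings it in line with the paper's proof.
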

\begin{proof}
 Let $H_0^{\e}(t) := T_{\e}(t)$ and define recursively
 \[
  H_{n+1}^{\e}(t) := \int \limits_{0}^{t}T_{\e}(t-s)B' H_n^{\e}(s)ds, \ \ n \geq 0.
 \]
 Then, clearly we have for $n \geq 1$
 \[
  H_n^{\e}(t) = \int \limits_{0}^{t}\cdots \int \limits_{0}^{t_{n-1}} T_{\e}(t-t_1)B' \cdots B' T_{\e}(t_{n-1} - t_n)dt_n \cdots dt_1.
 \]
 Take $\alpha_j^{\pm} = \alpha^{\pm} + \frac{\beta^{\pm} - \alpha^{\pm}}{n}j$, $j = 0, \dots, n$.
 Then $\alpha_{j+1}^{\pm} - \alpha_{j}^{\pm} = \frac{\beta^{+} - \alpha^{+}}{n}$ and hence
 \[
  \Vert B' \Vert_{L(\Lb_{\alpha_{j+1}}, \Lb_{\alpha_j})} \leq \frac{e^{\alpha_{j+1}}\| a^- \|_{L^1} + \lambda + \| a^+ \|_{L^1}}{\alpha_{j+1} - \alpha_j}
  \leq n \frac{e^{\beta^+}\| a^- \|_{L^1} + \lambda + \| a^+ \|_{L^1}}{\beta^{+} - \alpha^{+}}.
 \]
 This implies $H_n^{\e}(t) \in L(\Lb_{\beta}, \Lb_{\alpha})$ such that
 \begin{align}
  \notag \Vert H_n^{\e}(t)G\Vert_{\Lb_{\alpha}} &\leq \Vert B' \Vert_{L(\Lb_{\alpha_1}, \Lb_{\alpha_0})} \cdots \Vert B' \Vert_{L(\Lb_{\alpha_n}, \Lb_{\alpha_{n-1}})}\Vert G \Vert_{\Lb_{\beta}}\frac{t^n}{n!}
  \\ \notag &\leq \frac{t^n}{n!}\frac{n^n}{e^n(\beta^+ - \alpha^+)^n} \left( e^{\beta^+}\| a^- \|_{L^1} + \lambda + \| a^+ \|_{L^1} \right)^n \Vert G \Vert_{\Lb_{\beta}}.
  \\ \label{EQ:41} &= \left( \frac{t}{T_0(\alpha^+, \beta^+)}\right)^n \Vert G \Vert_{\Lb_{\beta}}.
 \end{align}
 It follows that $\sum_{n=1}^{\infty}H_n^{\e}(t)$ converges uniformly w.r.t. the norm in $L(\Lb_{\beta}, \Lb_{\alpha})$ in $t \in [0,T]$.
 By \cite{F16a}, see also Lemma \ref{LEMMA:01}, we obtain
 \[
  \widehat{U}_{\e}^{\beta,\alpha}(t) = \widehat{V}_{\e}^{\beta,\alpha}(t) = T_{\e}(t) + \sum \limits_{n=1}^{\infty}H_n^{\e}(t).
 \]
 Let $\overline{H}_0(t) := \overline{T}(t)$ and define recursively
 \[
  \overline{H}_{n+1}(t) := \int \limits_{0}^{t}\overline{T}(t-s)\overline{B}\overline{H}_{n}(s)ds, \ \ n \geq 0.
 \]
 Similar arguments show that
 \begin{align}\label{EQ:42}
  \Vert \overline{H}_n(t)G \Vert_{\Lb_{\alpha^+}} \leq \left( \frac{t}{T_0(\alpha^+, \beta^+)} \right)^n \Vert G \Vert_{\Lb_{\beta^+}}
 \end{align}
 and hence by \cite{F16a} we see that
 \[
  \overline{U}^{\beta^+,\alpha^+}(t) = \overline{T}(t) + \sum \limits_{n=1}^{\infty}\overline{H}_n(t)
 \]
 converges uniformly w.r.t. the norm in $L(\Lb_{\beta^+}, \Lb_{\alpha^+})$ in $t \in [0,T]$. 
 Let $G_1 \in \Lb_{\beta^+}$ be arbitrary and take $N \in \N$. Then
 \begin{align*}
  &\ \Vert \widehat{U}_{\e}^{\beta,\alpha}(t)P_+G_1 - P_+\overline{U}^{\beta^+, \alpha^+}(t)G_1 \Vert_{\Lb_{\alpha}}
  \leq \Vert T_{\e}(t)P_+G_1 - P_+\overline{T}(t)G_1 \Vert_{\Lb_{\alpha}} 
  \\ &\ \ \ + \sum \limits_{n=1}^{N}\Vert H_n^{\e}(t)P_+G_1 - P_+\overline{H}_n(t)G_1 \Vert_{\Lb_{\alpha}}
  + \sum \limits_{n = N+1}^{\infty} \Vert H_n^{\e}(t)P_+G_1 \Vert_{\Lb_{\alpha}} + \sum \limits_{n=N+1}^{\infty} \Vert P_+\overline{H}_n(t)G_1 \Vert_{\Lb_{\alpha}}.
 \end{align*}
 The first term tends by \eqref{EQ:40} to zero uniformly in $t \in [0,T]$. 
 The last two terms tend by \eqref{EQ:41} and \eqref{EQ:42} uniformly in $t, \e$ to zero as $N \to \infty$.
 Thus it suffices to show that for each $n \geq 1$
 \begin{align}\label{EQ:43}
  \sup \limits_{t \in [0,T]}\Vert H_n^{\e}(t)P_+G_1 - P_+\overline{H}_n(t)G_1 \Vert_{\Lb_{\alpha}} \longrightarrow 0, \ \ \e \to 0.
 \end{align}
 Observe that for $G = G_1 \otimes 0^- \in \Lb_{\alpha^+}\otimes 0^-$ we have
 $(B'G)(\eta) = 0^{|\eta^-|}(\overline{B}G_1)(\eta^+)$, i.e. $B' P_+ = P_+ \overline{B}$. Take $n \geq 1$ and let $\alpha_j^{\pm}$ be given as before. Then
 \begin{align*}
  &\ \Vert H_n^{\e}(t)P_+G_1 - P_+\overline{H}_n(t)G \Vert_{\Lb_{\alpha}}
  \leq \int \limits_{0}^{t} \Vert T_{\e}(t-s)B'\left( H_{n-1}^{\e}(s)P_+G_1 - P_+\overline{H}_{n-1}(s)G_1 \right) \Vert_{\Lb_{\alpha}}ds
  \\ &\ \ \ + \int \limits_{0}^{t} \Vert \left(T_{\e}(t-s)P_+ - P_+\overline{T}(t-s)\right) \overline{B}\overline{H}_{n-1}(s)G_1 \Vert_{\Lb_{\alpha}}ds
  \\ &\leq \Vert B' \Vert_{L(\Lb_{\alpha_1}, \Lb_{\alpha_0})} \int \limits_{0}^{t} \Vert  H_{n-1}^{\e}(s)P_+G_1 - P_+\overline{H}_{n-1}(s)G_1  \Vert_{\Lb_{\alpha_1}}ds
  + T a^{\e}_{n-1}
 \end{align*}
 where $K^{(n-1)}_{\alpha^+} = \{ \overline{B}\overline{H}_{n-1}(s)G \ | \ 0 \leq s \leq T \} \subset \Lb_{\alpha^+}$ is compact and
 \[
  a^{\e}_{n-1} = \sup \limits_{(t,F) \in [0,T] \times K^{(n-1)}_{\alpha^+}} \Vert T_{\e}(t) P_+F - P_+\overline{T}(t)F \Vert_{\Lb_{\alpha}}.
 \]
 By \eqref{EQ:40} we see that $a_{n-1}^{\e} \longrightarrow 0$ as $\e \to 0$. Iteration yields
 \begin{align*}
   &\ \Vert H_n^{\e}(t)P_+G_1 - P_+\overline{H}_n(t)G \Vert_{\Lb_{\alpha}}
   \\ &\leq C_1^{(n)} \int \limits_{0}^{t}\cdots \int \limits_{0}^{t_{n-1}} \Vert H_0^{\e}(s)P_+G_1 - P_+\overline{H}_0(s)G_1 \Vert_{\Lb_{\beta}}dt_{n}\cdots dt_1
   + C(T,n,a_0^{\e}, \dots, a_{n-1}^{\e})
 \end{align*}
 where $C_1^{(n)}$ depends on $\Vert B' \Vert_{L(\Lb_{\alpha_{j+1}}, \Lb_{\alpha_j})}$, $j= 0, \dots, n-1$ and 
 $C(T,n,a_0^{\e}, \dots, a_{n-1}^{\e}) \longrightarrow 0$ as $\e \to 0$. 
 The iterated integral tends by \eqref{EQ:40} uniformly in $t \in [0,T]$ to zero which proves \eqref{EQ:43}.
\end{proof}

\subsection{Proof of Theorem \ref{TH:05}}
Let $F \in \mathcal{FP}(\Gamma^+)$ and take $G \in B_{bs}(\Gamma_0^+)$ such that $KG = F$. Then
\begin{align*}
 \int \limits_{\Gamma^2}F(\gamma^+)d \mu_t^{\e}(\gamma) &= \int \limits_{\Gamma_0^+}G(\eta^+)k_{\mu_t^{\e}}(\eta^+,\emptyset)d\lambda(\eta^+),
 \\ \int \limits_{\Gamma^+}F(\gamma^+)d\overline{\mu}_t(\gamma^+) &= \int \limits_{\Gamma_0^+}G(\eta^+)k_{\overline{\mu}_t}(\eta^+)d\lambda(\eta^+).
\end{align*}
Hence it suffices to prove for each $G \in B_{bs}(\Gamma_0^+)$
\begin{align}\label{EQ:44}
 \int \limits_{\Gamma_0^+}G(\eta^+)k_{\mu_t^{\e}}(\eta^+,\emptyset)d\lambda(\eta^+) \longrightarrow \int \limits_{\Gamma_0^+}G(\eta^+)k_{\overline{\mu}_t}(\eta^+)d\lambda(\eta^+)
\end{align}
as $\e \to 0$ uniformly in $t \in [0, T]$. 
This will be proved by arguments given in the proof of Theorem \ref{TH:03}.
Namely, let $k_{t,\e}^{(n)} \in \K_{\alpha_n}$, $t \in [0, T^{(n)}]$ be given by \eqref{EQ:34} with $T^{(n)}$ given by \eqref{EQ:04}.
Moreover, $\alpha_n$ is given by $\alpha_n^+ = \alpha^+ + (\lambda + \| a^+ \|_{L^1})S_n$ and $\alpha_n^- = \alpha^-$.
Similarly let $r_t^{(1)} := \overline{U}^{\Delta}(t)k_{\mu_0^+}$ and
\[
 r_t^{(n+1)} := \overline{U}^{\Delta}(t)r_{T^{(n)}}^{(n)}, \ \ t \in [0,T^{(n+1)}].
\]
\begin{Lemma}
 For each $n \geq 1$ it holds that
 \[
  \sup \limits_{t \in [0,T^{(n)}]}\ \left| \int \limits_{\Gamma_0^+}G(\eta^+)k_{t,\e}^{(n)}(\eta^+,\emptyset)d\lambda(\eta^+) -  \int \limits_{\Gamma_0^+}G(\eta^+)r_t^{(n)}(\eta^+)d\lambda(\eta^+)\right| \longrightarrow 0, \ \ \e \to 0.
 \] 
\end{Lemma}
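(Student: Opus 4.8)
The plan is to reduce both integrals to the two–component pairing $\langle\langle\cdot,\cdot\rangle\rangle$ of \eqref{EQ:11}, to transfer the evolution operators onto the test function by adjointness, and then to run an induction on $n$ fed by the local averaging principle \eqref{EQ:45} and by the $\e$–uniform a priori bounds of Theorem \ref{APRIORI}. The starting point is the elementary identity
\[
 \int_{\Gamma_0^+}H(\eta^+)\,k(\eta^+,\emptyset)\,d\lambda(\eta^+) = \langle\langle P_+H,\,k\rangle\rangle,
\]
which holds because $\lambda=\lambda^+\otimes\lambda^-$ and $0^{|\eta^-|}=\1_{\{\eta^-=\emptyset\}}$; in particular, since $k_{\mu_0^+}(\eta^+)=k_{\mu_0}(\eta^+,\emptyset)$, one also has $\int_{\Gamma_0^+}H\,k_{\mu_0^+}\,d\lambda=\langle\langle P_+H,k_{\mu_0}\rangle\rangle$. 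Using $U_\e^\Delta(t)=\widehat U_\e(t)^*$ and that $\overline U^\Delta(t)$ is the adjoint of the quasi–observable evolution $\overline U^{\beta^+,\alpha^+}(t)$ in the corresponding dualities, together with $k_{t,\e}^{(n)}=U_\e^\Delta(t)k_{T^{(n-1)},\e}^{(n-1)}$ and $r_t^{(n)}=\overline U^\Delta(t)r_{T^{(n-1)}}^{(n-1)}$ (with the conventions $T^{(0)}=S_0=0$, $k^{(0)}_{T^{(0)},\e}=k_{\mu_0}$, $r^{(0)}_{T^{(0)}}=k_{\mu_0^+}$), the quantity under the supremum, for a test function $G$, is equal to $(I)+(II)$, where
\[
 (I):=\langle\langle \widehat U_\e(t)P_+G-P_+\overline U^{\beta^+,\alpha^+}(t)G,\ k_{T^{(n-1)},\e}^{(n-1)}\rangle\rangle,
\]
\[
 (II):=\int_{\Gamma_0^+}H_t(\eta^+)\bigl(k_{T^{(n-1)},\e}^{(n-1)}(\eta^+,\emptyset)-r_{T^{(n-1)}}^{(n-1)}(\eta^+)\bigr)d\lambda(\eta^+),\qquad H_t:=\overline U^{\beta^+,\alpha^+}(t)G,
\]
obtained by inserting and subtracting $\langle\langle P_+\overline U^{\beta^+,\alpha^+}(t)G,\,k_{T^{(n-1)},\e}^{(n-1)}\rangle\rangle$ and using the identity above once more.

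To make the recursion close, I would prove the statement of the lemma strengthened to uniformity over $G$ in compact subsets: for every $n\ge 0$ and every compact $\mathcal C\subset\Lb_{\gamma^+}$ (with $\gamma^+$ large enough for all the operators below to act on $\mathcal C$),
\[
 \sup_{t\in[0,T^{(n)}]}\ \sup_{G\in\mathcal C}\ \Bigl|\int_{\Gamma_0^+}G(\eta^+)\bigl(k_{t,\e}^{(n)}(\eta^+,\emptyset)-r_t^{(n)}(\eta^+)\bigr)d\lambda(\eta^+)\Bigr|\longrightarrow 0,\qquad \e\to0.
\]
For $n=0$ this is trivial because $k_{\mu_0}(\cdot,\emptyset)=k_{\mu_0^+}$. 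Assuming it for $n-1$, the term $(II)$ is handled by applying the induction hypothesis at the single time $T^{(n-1)}$ to the set $\{\,\overline U^{\beta^+,\alpha^+}(t)G:\ 0\le t\le T^{(n)},\ G\in\mathcal C\,\}$, which is a compact subset of the one–component scale by continuity of $t\mapsto\overline U^{\beta^+,\alpha^+}(t)G$ and compactness of $\mathcal C$ (that $\overline U^{\beta^+,\alpha^+}$ keeps this family in the one–component scale is the algebraic relation $B'P_+=P_+\overline B$ already exploited in the proof of \eqref{EQ:45}).

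The term $(I)$ is estimated by
\[
 |(I)|\le \bigl\|\widehat U_\e(t)P_+G-P_+\overline U^{\beta^+,\alpha^+}(t)G\bigr\|_{\Lb_{\alpha_{n-1}}}\,\bigl\|k_{T^{(n-1)},\e}^{(n-1)}\bigr\|_{\K_{\alpha_{n-1}}}.
\]
Here $k_{T^{(n-1)},\e}^{(n-1)}=k_{\mu_{S_{n-1}}^\e}$, so by Theorem \ref{APRIORI} it satisfies a Ruelle bound with constants independent of $\e$ — the only $\e$–dependent factor there being $e^{-g t E_{b^-}(\eta^+,\eta^-)}\le1$, which we discard — whence the second factor is bounded uniformly in $\e$. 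The first factor tends to $0$ uniformly for $t\in[0,T^{(n)}]$ by the local averaging principle \eqref{EQ:45}, applied to the admissible pair $\bigl(\alpha_{n-1},(\alpha_{n-1}^++1,\alpha_{n-1}^-+\kappa)\bigr)$, for which $T^{(n)}=\frac{1}{4}\,T\bigl(\alpha_{n-1},(\alpha_{n-1}^++1,\alpha_{n-1}^-+\kappa)\bigr)$; moreover the convergence is uniform over $G$ in any compact subset of $\Lb_{\gamma^+}$, which follows automatically from the pointwise statement \eqref{EQ:45} together with the $\e$–uniform operator bounds on $\widehat U_\e$ and $\overline U^{\beta^+,\alpha^+}$ (sub-additive, positively homogeneous, equibounded functionals that converge to $0$ pointwise converge to $0$ uniformly on totally bounded sets). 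Combining, $\sup_t\sup_{G\in\mathcal C}|(I)+(II)|\to0$, which proves the strengthened statement, and the lemma is the case $\mathcal C=\{G\}$.

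I expect the main obstacle to be precisely the treatment of $(II)$, which forces the induction to be set up with uniformity over compact families of test functions: the object fed back into level $n-1$ is no longer a bounded function with bounded support and it varies with the outer time $t$, so a naive application of the lemma at level $n-1$ for a single fixed test function is not enough. Once the hypothesis is strengthened this way the two error terms decouple cleanly — $(I)$ by \eqref{EQ:45} and Theorem \ref{APRIORI}, $(II)$ by the induction hypothesis at the previous level — and the remaining ingredients (choice of admissible pairs compatible with the exponents $\alpha_n$ and the times $T^{(n)}$, and, afterwards, patching the intervals $[S_{n-1},S_n]$ over $[0,T]$ using $\sum_n T^{(n)}=\infty$ to deduce Theorem \ref{TH:05}) are routine.
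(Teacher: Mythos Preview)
Your proof is correct and follows the same overall strategy as the paper: reduce to the two-component pairing, pass the evolution operators to the test side by duality, and run an induction on $n$ driven by \eqref{EQ:45}. The decomposition you choose is slightly different from the paper's: you insert and subtract $\langle\langle P_+\overline{U}(t)G,\,k^{(n-1)}_{T^{(n-1)},\e}\rangle\rangle$, whereas the paper inserts and subtracts $\langle\langle \widehat U_\e(t)P_+G,\,r^{(n-1)}_{T^{(n-1)}}\rangle\rangle$. Your choice has two advantages. First, your term $(II)$ pairs a function of the form $P_+H_t$ with the difference of correlations, so it collapses exactly to the one-component integral appearing in the induction hypothesis; the paper's corresponding term $I_1$ pairs the genuinely two-component function $\widehat U_\e(t)P_+G$ with $k^{(n-1)}_\e - r^{(n-1)}$, which leaves an implicit identification of the one-component $r^{(n-1)}$ inside the two-component duality. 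Second, your term $(I)$ needs only an $\e$-uniform Ruelle bound on $k^{(n-1)}_{T^{(n-1)},\e}$ (available from Theorem~\ref{APRIORI} or already from the construction, since the constants in \eqref{EQ:18}--\eqref{EQ:21} and in Lemma~\ref{TH:02} are $\e$-independent), whereas the paper's $I_1$ needs the compactness of the $\e$-dependent family $\{\widehat U_\e(t)P_+G\}$, which in turn relies on \eqref{EQ:45} again.

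Your explicit strengthening of the induction hypothesis to uniformity over compact families of test functions is exactly the right move; the paper achieves the same effect implicitly via Lemma~\ref{APPENDIX:01}. The only place to be a bit more careful is the phrase ``$\gamma^+$ large enough'': at step $n$ one should fix the admissible pair $(\alpha_{n-1},(\alpha_{n-1}^++1,\alpha_{n-1}^-+\kappa))$, note $T^{(n)}=\tfrac14\,T(\alpha_{n-1},(\alpha_{n-1}^++1,\alpha_{n-1}^-+\kappa))\le T_0(\alpha_{n-1}^+,\alpha_{n-1}^++1)$, and take the compact $\mathcal{C}$ in $\Lb_{\alpha_{n-1}^++1}$; then $\overline U(t)\mathcal{C}$ lands in $\Lb_{\alpha_{n-1}^+}$, where the pairing with $k^{(n-1)}_{T^{(n-1)},\e}\in\K_{\alpha_{n-1}}$ and with $r^{(n-1)}_{T^{(n-1)}}\in\K_{\alpha_{n-1}^+}$ is well defined and the level-$(n-1)$ hypothesis applies. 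With this bookkeeping your argument is complete.
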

\begin{proof}
 Observe that $G \in B_{bs}(\Gamma_0^2) \subset \Lb_{\alpha_n}$ for all $n \geq 1$. Hence all integrals given below are well-defined.
 For $n = 1$ we obtain
 \begin{align*}
  \int \limits_{\Gamma_0^+}G(\eta^+)k_{t,\e}^{(1)}(\eta^+,\emptyset)d\lambda(\eta^+)
  = \int \limits_{\Gamma_0^2}(P_+G)(\eta)U_{\e}^{\Delta}(t)k_{\mu_0}(\eta)d \lambda(\eta)
  = \int \limits_{\Gamma_0^2} (\widehat{U}_{\e}(t)P_+G)(\eta)k_{\mu_0}(\eta)d \lambda(\eta)
 \end{align*}
 and from \eqref{EQ:45} it follows that
 \[
  \sup \limits_{t \in [0,T^{(1)}]}\left| \int \limits_{\Gamma_0^2} (\widehat{U}_{\e}(t)P_+G)(\eta)k_{\mu_0}(\eta)d \lambda(\eta) - \int \limits_{\Gamma_0^2}(P_+\overline{U}(t)G)(\eta)k_{\mu_0}(\eta)d \lambda(\eta)\right| \longrightarrow 0, \ \ \e \to 0.
 \]
 Moreover we have by $k_{\mu_0}(\eta^+,\emptyset) = k_{\mu_0^+}(\eta^+)$
 \begin{align*}
  \int \limits_{\Gamma_0^2}(P_+\overline{U}(t)G)(\eta)k_{\mu_0}(\eta)d \lambda(\eta)
  &= \int \limits_{\Gamma_0^+}(\overline{U}(t)G)(\eta^+)k_{\mu_0^+}(\eta^+)d \lambda(\eta^+)
  \\ &= \int \limits_{\Gamma_0^+} G(\eta^+) (\overline{U}^{\Delta}(t)k_{\mu_0^+})(\eta^+)d \lambda(\eta^+)
 \end{align*}
 which shows the assertion for $n = 1$. Proceeding by induction we obtain
 \begin{align*}
  \int \limits_{\Gamma_0^+}G(\eta^+)k_{t,\e}^{(n)}(\eta^+,\emptyset)d\lambda(\eta^+)
  &= \int \limits_{\Gamma_0^2}(P_+G)(\eta) (U_{\e}^{\Delta}(t)k_{T^{(n-1)},\e}^{(n-1)})(\eta)d \lambda(\eta)
  \\ &= \int \limits_{\Gamma_0^2} (\widehat{U}_{\e}(t)P_+G)(\eta) k_{T^{(n-1)},\e}^{(n-1)}(\eta)d \lambda(\eta).
 \end{align*}
 Then
 \begin{align*}
  &\ \left| \langle \langle \widehat{U}_{\e}(t)P_+G, k_{T^{(n-1)},\e}^{(n-1)} \rangle \rangle - \langle \langle P_+\overline{U}(t)G, r_t^{(n-1)} \rangle \rangle \right|
  \\ &\leq \left| \langle \langle \widehat{U}_{\e}(t)P_+G,  k_{T^{(n-1)},\e}^{(n-1)} \rangle \rangle - \langle \langle \widehat{U}_{\e}(t)P_+G, r_t^{(n-1)}\rangle \rangle \right|
       + \left| \langle \langle \widehat{U}_{\e}(t)P_+G, r_t^{(n-1)} \rangle - \langle \langle P_+\overline{U}(t)G, r_t^{(n-1)} \rangle \rangle \right|
  \\ &= I_1 + I_2.
 \end{align*}
 For the first term we get
 $I_1 \leq \sup \limits_{F \in K_{\beta}} \left| \langle F, k_{T^{(n-1)},\e}^{(n-1)} \rangle - \langle F, r_t^{(n-1)} \rangle \right|$
 where by $G \in B_{bs}(\Gamma_0^2) \subset \Lb_{\alpha_{n-1}}$
 \[
  K_{n-1} = \{ \widehat{U}_{\e}(t)P_+G \ | \ t \in [0,T^{(n)}], \ \e \in (0,1] \} \cup \{ P_+\overline{U}(t)G \ | \ t \in [0,T^{(n)}] \} \subset \Lb_{\alpha_{n-1}}
 \]
 is compact, see \eqref{EQ:45}. By induction hypothesis we can show that $I_1 \longrightarrow 0$ uniformly in $[0,T^{(n)}]$.
 Concerning $I_2$ we obtain
 \[
  I_2 \leq \sup \limits_{t \in [0,T^{(n)}]}\Vert r_t^{(n-1)} \Vert_{\K_{\alpha_{n-1}}} \Vert \widehat{U}_{\e}(t)P_+G - P_+\overline{U}(t)G\Vert_{\Lb_{\alpha_{n-1}}}
 \]
 and the assertion follows by \eqref{EQ:45}.
\end{proof}
In view of \eqref{EQ:46}, \eqref{CORRDEF} and similarly
\[
 k_{\overline{\mu}_t} = r_{t - S_{n-1}}^{(n)}, \ \ S_{n-1} \leq t \leq S_n, \ \ n \geq 1
\]
we see that \eqref{EQ:44} holds. This completes the proof of Theorem \ref{TH:05}.

\subsubsection*{Acknowledgments}
...

\section{Appendix}

\begin{Lemma}\label{APPENDIX:00}
 Let $X$ be a Banach space and $(T_n)_{n \in \N} \subset L(X)$ be such that $T_n \longrightarrow T$ strongly.
 Take a compact $K \subset X$. Then 
 \[
  \sup \limits_{x \in K}\Vert T_nx - Tx \Vert_{X} \longrightarrow 0, \ \ n \to \infty.
 \]
\end{Lemma}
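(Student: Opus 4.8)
The plan is to combine the uniform boundedness principle with a finite net argument. First I would observe that for every $x \in X$ the sequence $(T_nx)_{n \in \N}$ converges in $X$ and is therefore bounded; by the Banach--Steinhaus theorem this yields a finite constant $M := \sup_{n \in \N}\Vert T_n \Vert_{L(X)} < \infty$, and passing to the limit also $\Vert T \Vert_{L(X)} \leq M$. This uniform bound is the one nontrivial ingredient: without it the three-term estimate below would not close.

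Next, fix $\e > 0$. Since $K$ is compact there exists a finite $\e$-net $x_1, \dots, x_N \in K$ with $K \subset \bigcup_{j=1}^{N}\{ x \in X \ | \ \Vert x - x_j \Vert_X < \e \}$. By strong convergence $T_n \to T$ we may choose $N_0 \in \N$ such that $\Vert T_n x_j - T x_j \Vert_X < \e$ for all $n \geq N_0$ and all $j = 1, \dots, N$. Given any $x \in K$, pick an index $j$ with $\Vert x - x_j \Vert_X < \e$; then for every $n \geq N_0$
\[
 \Vert T_n x - T x \Vert_X \leq \Vert T_n(x - x_j) \Vert_X + \Vert T_n x_j - T x_j \Vert_X + \Vert T(x_j - x) \Vert_X \leq (2M + 1)\e .
\]
Since the right-hand side is independent of $x$, taking the supremum over $x \in K$ and then letting $\e \to 0$ gives $\sup_{x \in K}\Vert T_n x - T x \Vert_X \to 0$. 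The argument is entirely routine, so no genuine obstacle is anticipated beyond correctly invoking Banach--Steinhaus at the outset.
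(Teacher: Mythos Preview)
Your proof is correct and entirely standard. The paper itself states this lemma in the appendix without proof, so there is no argument to compare against; your Banach--Steinhaus plus finite-net approach is exactly the expected one.
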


\begin{Lemma}\label{APPENDIX:01}
 Let $X$ be a Banach space and $X^*$ its dual. Let $(x_n^*)_{n \in \N} \subset X^*$ be such that 
 \[
  x_n^*(x) \longrightarrow x^*(x), \ \ n \to \infty
 \]
 holds for all $x \in X$. Then for any compact $K \subset X$
 \[
  \sup \limits_{x \in K}| x_n^*(x) - x^*(x)| \longrightarrow 0, \ \ n \to \infty.
 \]
\end{Lemma}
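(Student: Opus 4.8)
The plan is to combine the uniform boundedness principle with the total boundedness of $K$. Since $x_n^*(x) \to x^*(x)$ for every $x \in X$, the scalar sequence $(x_n^*(x))_{n \in \N}$ is bounded for each fixed $x$, so by the Banach--Steinhaus theorem $M := \sup_{n \in \N}\|x_n^*\|_{X^*} < \infty$. Putting $C := M + \|x^*\|_{X^*}$ we obtain the uniform Lipschitz bounds $|x_n^*(y) - x_n^*(z)| \le C\|y-z\|_X$ and $|x^*(y) - x^*(z)| \le C\|y-z\|_X$ for all $n \in \N$ and all $y,z \in X$; this equicontinuity, uniform in $n$, is the only genuine content of the argument.

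Next I would exploit that a compact subset of a metric space is totally bounded. Fix $\e > 0$ and choose finitely many points $y_1,\dots,y_N \in K$ with $K \subset \bigcup_{j=1}^N B\!\left(y_j, \tfrac{\e}{3C}\right)$. By pointwise convergence there is $n_0$ with $|x_n^*(y_j) - x^*(y_j)| < \tfrac{\e}{3}$ for all $n \ge n_0$ and all $j = 1,\dots,N$. For arbitrary $x \in K$ pick $j$ with $\|x - y_j\|_X < \tfrac{\e}{3C}$; then for $n \ge n_0$
\[
 |x_n^*(x) - x^*(x)| \le |x_n^*(x) - x_n^*(y_j)| + |x_n^*(y_j) - x^*(y_j)| + |x^*(y_j) - x^*(x)| < \frac{\e}{3} + \frac{\e}{3} + \frac{\e}{3} = \e .
\]
Taking the supremum over $x \in K$ and letting $\e \to 0$ gives the claim.

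Alternatively, the statement is an immediate corollary of Lemma \ref{APPENDIX:00}: fix any $x_0 \in X$ with $\|x_0\|_X = 1$ and set $T_n x := x_n^*(x)\, x_0$ and $T x := x^*(x)\, x_0$. Then $T_n, T \in L(X)$ with $T_n \to T$ strongly, so Lemma \ref{APPENDIX:00} yields $\sup_{x \in K}\|T_n x - Tx\|_X = \sup_{x \in K}|x_n^*(x) - x^*(x)| \to 0$.

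Since every ingredient (Banach--Steinhaus, total boundedness of compacta, an $\e/3$-estimate) is entirely standard, I do not expect any real obstacle here; the one point to keep in mind is simply that the Lipschitz constant $C$ must be chosen uniformly in $n$, which is precisely what the uniform boundedness principle provides.
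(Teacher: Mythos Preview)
Your proof is correct and entirely standard; the paper itself states Lemma \ref{APPENDIX:01} (and Lemma \ref{APPENDIX:00}) in the appendix without proof, so there is nothing to compare against. Both your direct Banach--Steinhaus plus $\e/3$ argument and the reduction to Lemma \ref{APPENDIX:00} are valid; the only tiny caveat in the latter is that choosing $x_0$ with $\|x_0\|_X = 1$ assumes $X \neq \{0\}$, but in that trivial case the claim is vacuous anyway.
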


\newpage

\begin{footnotesize}

\bibliographystyle{alpha}
\bibliography{Bibliography}

\end{footnotesize}

\end{document}